\documentclass[a4paper, 12pt]{article}
\usepackage[english]{babel}
\usepackage{amsmath, amsfonts, amsthm, mathrsfs, latexsym}
\usepackage[dvips]{graphicx}

\newcommand{\abs}[1]{\lvert#1\rvert}
\newcommand{\norm}[1]{\lVert#1\rVert}
\newcommand{\field}[2]{\mathbb{#1}^{#2}}
\newcommand{\deriv}[2]{\partial_{#1}^{#2}}
\newcommand{\hilbert}[1]{\mathscr{#1}}
\newcommand{\fock}{\hilbert{F}}
\newcommand{\fockfin}{\fock_{\mathrm{fin}}}
\newcommand{\coinf}{C_{0}^{\infty}}
\newcommand{\dgamma}{\mathrm{d\Gamma}}

\newcommand{\E}{{\mathrm{e}}}
\newcommand{\I}{\mathrm{i}}
\newcommand{\B}{\hilbert{B}}
\newcommand{\Or}{{\mathcal{O}}}
\newcommand{\epsi}{\varepsilon}
\newcommand{\hepsi}{H^{\epsi}}
\newcommand{\hc}{H^{c}}
\newcommand{\hlambda}{H_{\lambda}}
\newcommand{\pop}{\hat{p}}
\newcommand{\Hp}{H_{\mathrm{p}}}

\newcommand{\hdres}{H_{\mathrm{dres}}^{\epsi, \sigma}}
\newcommand{\heff}{H_{\mathrm{eff}}^{\epsi}}
\newcommand{\hdrestwo}{H_{\mathrm{dres}}^{(2)}}
\newcommand{\hdiag}{H_{\mathrm{D}}^{(2)}}
\newcommand{\hdiagp}{H_{\mathrm{D}, \mathrm{p}}^{(2)}}
\newcommand{\hdiagtilde}{\tilde{H}_{\mathrm{D}}^{(2)}}
\newcommand{\id}{\mathbf{1}}
\newcommand{\hepsisigma}{H^{\epsi, \sigma}}
\newcommand{\hiepsi}[1]{\hat{h}_{#1}^{\epsi}}
\newcommand{\hilambda}[1]{h_{#1}}
\newcommand{\hp}{\hat{h}_{\mathrm{p}}^{\epsi}}
\newcommand{\hiepsidres}[1]{\hat{h}_{#1}^{\mathrm{dres}}}
\newcommand{\hiepsisigma}[1]{\hat{h}_{#1}^{\epsi, \sigma}}
\newcommand{\hdarw}{H_{\mathrm{darw}}}
\newcommand{\vdarw}{V_{\mathrm{darw}}}
\newcommand{\vspin}{V_{\mathrm{spin}}}
\newcommand{\opw}{\mathrm{Op}^{W}_{\epsi}}
\newcommand{\omegap}{\omega_{\mathrm{p}}}

\newcommand{\uepsi}{\hilbert{U}_{\epsi}}
\newcommand{\uepsisigma}{\hilbert{U}_{\epsi, \sigma}}

\newcommand{\lf}{{\sf L}_{\mathrm{f}}}
\newcommand{\llambda}{{\sf L}_{\lambda}}
\newcommand{\lp}{{\sf L}_{\mathrm{p}}}
\newcommand{\lone}{{\sf L}_{1}}
\newcommand{\lfour}{{\sf L}_{4/3}}

\newcommand{\hilbertp}{\hilbert{H}_{\mathrm{p}}}

\newcommand{\vcoul}{V_{\varphi\,\mathrm{coul}}}
\newcommand{\vcoulsigma}{V_{\varphi_{\sigma}\,\mathrm{coul}}}
\newcommand{\Hf}{H_{\mathrm{f}}}
\newcommand{\normomega}[1]{\norm{#1}_{\omega}}
\newcommand{\proj}[2]{\hat{\pi}^{#1}_{#2}}
\newcommand{\pepsim}{P_{M}^{\epsi}}
\newcommand{\pzeroel}{P_{0}^{\mathrm{E}, \mathrm{L}}}
\newcommand{\qm}{Q_{M}}
\newcommand{\qless}[1]{Q_{\leq\mathrm{#1}}}
\newcommand{\qmore}[1]{Q_{>\mathrm{#1}}}
\newcommand{\phijsigma}{\Phi_{j, \sigma}}
\newcommand{\phijsigmal}{\Phi_{j, \sigma}^{\mathrm{L}}}
\newcommand{\phijzerol}{\Phi_{j, 0}^{\mathrm{L}}}
\newcommand{\uonelchi}{U^{(1)}_{\mathrm{L}, \chi}}
\newcommand{\uonelchistar}{U^{(1)\,*}_{\mathrm{L}, \chi}}
\newcommand{\uonelsigma}{U_{1,\sigma}^{\mathrm{L}}}
\newcommand{\uonelsigmastar}{U_{1,\sigma}^{\mathrm{L}\,*}}
\newcommand{\eqex}{\overset{!}{=}}
\newcommand{\tr}[1]{\mathrm{Tr}_{#1}}
\newcommand{\ztwo}{\mathbb{Z}_{2}}
\newcommand{\omegaf}{\Omega_{\mathrm{F}}}
\newcommand{\omegar}{\omega_{\mathrm{R}}}

\newtheorem{theorem}{Theorem}
\newtheorem*{theorem*}{Theorem}
\newtheorem{lemma}{Lemma}
\newtheorem*{lemma*}{Lemma}
\newtheorem*{prop}{Proposition}
\newtheorem{proposition}{Proposition}
\newtheorem{corollary}{Corollary}

\theoremstyle{remark}
\newtheorem{remark}{Remark}

\setlength{\arraycolsep}{2pt}

\title{Quasi-static Limits in Nonrelativistic Quantum Electrodynamics}
\author{L. Tenuta}
\date{}

\begin{document}

\maketitle
\begin{center}
Mathematisches Institut, Eberhard-Karls-Universit\"at, Auf der Morgenstelle 10, 72076, T\"ubingen, Germany.

e-mail: lucattilio.tenuta@uni-tuebingen.de
\end{center}

\begin{abstract}
We consider a system of $N$ nonrelativistic particles of spin $1/2$ interacting with the quantized Maxwell field (mass zero and spin one) in the limit when the particles have a small velocity.

Two ways to implement the limit are considered: $c\to\infty$ with the velocity $v$ of the particles fixed, the case for which rigorous results have already been discussed in the literature, and $v\to 0$ with $c$ fixed. The second case can be rephrased as the limit of heavy particles, $m_{j}\to \epsi^{-2}m_{j}$, observed over a long time, $t\to\epsi^{-1}t$, $\epsi\to 0^{+}$, with kinetic energy $E_{\mathrm{kin}} = \Or(1)$.

Focusing on the second approach we construct subspaces which are invariant for the dynamics up to terms of order $\epsi \sqrt{\log(\epsi^{-1})}$ and describe effective dynamics, \emph{for the particles only}, inside them. At the lowest order the particles interact through Coulomb potentials. At the second one, $\epsi^{2}$, the mass gets a correction of electromagnetic origin and a velocity dependent interaction, the Darwin term, appears.

Moreover, we calculate the radiated piece of the wave function, i.\ e., the piece which leaks out of the almost invariant subspaces and calculate the corresponding radiated energy.
\end{abstract}

\section{Introduction}
A system of nonrelativistic particles of spin $1/2$ interacting with the quantized radiation field is described by the so-called Pauli-Fierz Hamiltonian, or ``nonrelativistic quantum electrodynamics''. The model is thought to have an extremely wide range of validity, apart from phenomena connected to gravitational forces and from other ones typical of high-energy physics like pair creation, whose description requires the use of full relativistic QED.

This belief is mainly based on the analysis of some formal limit cases, which can be accurately studied both from a theoretical and an experimental point of view. Indeed, the interaction between charged particles is usually described by instantaneous pair potentials of Coulomb-type, without introducing the field as dynamical variable. This is known to be a good approximation if the particles move sufficiently slowly. One aim of this paper is a mathematically rigorous justification of this fact, i. e., the derivation of the Schr\"odinger equation with Coulomb potentials, and the second order velocity dependent corrections to them, starting from nonrelativistic quantum electrodynamics. In addition, a formula is provided for the wave function of the radiated photons and the corresponding radiated energy, which is the quantum equivalent of the Larmor formula of classical electrodynamics.

In more detail the model considered is given, excluding the addition of the electronic spin, by the canonical quantization of a system of $N$ classical charges interacting through the Maxwell field. 

A sharp ultraviolet cutoff is introduced assuming that each charge has a charge distribution given by
\begin{equation}\label{formfactor}
\varrho_{j}(x) = e_{j}\varphi(x), \quad x\in\field{R}{3},
\end{equation}
where the form factor satisfies $\hat{\varphi}(k)=(2\pi)^{-3/2}$ for $\abs{k}\leq\Lambda$, $0$ otherwise (note that there is \emph{no infrared cutoff}).

The classical equations of motion are given by
\begin{equation}\label{classicaleqfield}\begin{split}
& \frac{1}{c}\deriv{t}{}B(x, t) = -\nabla\times E(x, t),\\
& \frac{1}{c}\deriv{t}{}E(x, t) = \nabla\times B(x, t) - \sum_{j=1}^{N}e_{j}\varphi\big(x-q_{j}(t)\big)\frac{\dot{q}_{j}(t)}{c},
\end{split}
\end{equation}
with the constraints
\begin{equation}
\nabla\cdot E(x, t) = \sum_{j=1}^{N}e_{j}\varphi\big(x - q_{j}(t)\big),\qquad \nabla\cdot B(x, t)=0,
\end{equation}
and the Newton equations for the particles,
\begin{equation}\label{classicaleqparticles}
m_{l}\ddot{q}_{l}(t) = e_{l}[E_{\varphi}(q_{l}(t), t)) + \frac{\dot{q}_{l}(t)}{c}\times B_{\varphi}(q_{l}(t), t)], \quad l=1, \ldots N,
\end{equation}
where $E_{\varphi}(x, t) := (E*_{x}\varphi)(x, t)$ and analogously for $B_{\varphi}$.

The canonical quantization of this system in the Coulomb gauge is described, e.\ g., in (\cite{Sp}, chapter $13$). The Hilbert space of the pure states is given by
\begin{equation}
\hilbert{H} := \hilbertp\otimes \fock.
\end{equation}
The space for the particles, $\hilbertp$, is defined by\footnote[1]{The formalism presented holds also in the case when all the particles are equal and their Hilbert space is given by the subspace of totally antisymmetric wave functions. In this case, the dipole radiation given in \eqref{psirad} is zero. We consider therefore the general case of different particles.}
\begin{equation}
\hilbertp := L^{2}(\field{R}{3}\times\ztwo)^{\otimes N},
\end{equation}
where $\field{R}{3}$ is the configuration space of a single particle and $\ztwo$ represents its spin.

The state space for a single photon is $L^{2}(\field{R}{3}\times\ztwo)$, where $\field{R}{3}$ is the momentum space of the photon and $\ztwo$ represents its two independent physical helicities. The photon Fock space is therefore
\begin{equation}\label{fockspace}
\fock := \oplus_{M=0}^{\infty}\otimes_{(s)}^{M}L^{2}(\field{R}{3}\times\ztwo),
\end{equation}
where $\otimes_{(s)}^{M}$ denotes the $M$-symmetric tensor product and $\otimes_{(s)}^{0}L^{2}(\field{R}{3}\times\ztwo) := \field{C}{}$. We denote by $\omegaf$ the vector $(1, 0, \ldots)$, called the Fock vacuum.

The dynamics of the system are generated by the Hamiltonian
\begin{equation}\label{hc}
\hc := \sum_{j=1}^{N}\frac{1}{2m_{j}}\bigg[\sigma_{j}\cdot\bigg(-\I\nabla-\frac{1}{\sqrt{c}}e_{j}A_{\varphi}(x_{j})\bigg)\bigg]^{2} + \vcoul(x) + c\Hf,
\end{equation}
where all the operators appearing are independent of $c$ and we use units in which $\hbar=1$. $\sigma_{j}$ is a vector whose components are the Pauli matrices of the $j$th particle, $A_{\varphi}(x_{j})$ denotes the quantized transverse vector potential in the Coulomb gauge, $\vcoul$ is the smeared Coulomb potential and $\Hf$ the free field Hamiltonian. The reader who is not familiar with the notation is advised to look at section \ref{description}, where the model is described in more detail.

To implement practically the idea that the particles move ``slowly'', a standard procedure, applied also in classical electrodynamics (see, e.g., \cite{Ja}, \cite{LaLi}), is to take the limit $c\to\infty$\footnote[2]{In the classical case, a more refined and precise analysis is carried through in \cite{KuSp1}, \cite{KuSp2}. The authors consider, loosely speaking, initial conditions which represent free particles moving together with the field they generate (``dressed'' particles or charge solitons), with a velocity of order $\Or(\epsi^{1/2})$ with respect to the speed of light. Assuming that the particles are at time $t=0$ far apart (relative distance of order $\Or(\epsi^{-1})$) and rescaling suitably the dynamical variables, they show that the particles remain at a relative distance of order $\Or(\epsi^{-1})$ for long times (of order $\Or(\epsi^{-3/2})$) and on this time scale their motion is governed by effective dynamics. The possibility to implement an analogous limit in the quantum case is unclear, because there is no obvious quantum counterpart to the classical charge solitons. The Pauli-Fierz Hamiltonian without infrared cutoff has indeed no ground state in the Fock space for fixed total momentum different from zero \cite{Fr} \cite{Ch}. We stick therefore to the more pragmatic choice $c\to\infty$.

The states which we define through the dressing transformation $\hilbert{U}_{\epsi}$ in equation \eqref{pepsim} should be considered approximate dressed states valid for small velocities of the particles.}. Since $c$ is a quantity with a dimension, one should actually say that $\abs{v}/c\to 0$, where $v$ is a typical velocity of the particles. This can be achieved in two ways, fixing $v$ and letting $c\to\infty$ or fixing $c$ and letting $v\to 0$.

In the classical case this is reflected in the fact that the limit $c\to\infty$ is equivalent, up to a rescaling of time, to the limit of heavy particles, as one can easily verify replacing in equations \eqref{classicaleqfield}-\eqref{classicaleqparticles} $m_{l}$ with $\epsi^{-2}m_{l}$, $t$ with $\epsi^{-1}t$, and looking at the limit $\epsi\to 0$.   

We will show that in the quantum case the two procedures are non equivalent anymore, a fact that can be intuitively explained by the presence of an additional scale given by $\hbar$.

In this paper we concentrate on the limit of heavy particles observed over a long time. An additional aim is to point out similarities and differences between the two limits in the quantum context and to compare the results we get for the Pauli-Fierz model with the ones valid for the Nelson model, where the particles and the photons are spinless, \cite{TeTe}.

We recall briefly in the next subsection some known results about the case $c\to\infty$ and then illustrate in more detail the limit $\epsi\to 0$.

\subsection{The limit $c\to\infty$}

This case, as observed by Spohn \cite{Sp}, has the form of a weak coupling limit, fact which was already noted for the Nelson model by Davies \cite{Da2}, who formulated also a general scheme to analyze the limit dynamics in the weak coupling case \cite{Da1} (an extended notion of weak coupling limit for Pauli-Fierz systems in which the Hilbert space of the particles is finite dimensional has been examined in \cite{DeDe}). 

Davies looks loosely speaking at the limit $\lambda\to 0$ for the time evolution generated by an Hamiltonian of the form
\begin{equation*}
\lambda^{-2}(H_{0} + \lambda H_{\mathrm{int}}),
\end{equation*}
which corresponds physically to a weak interaction, whose effect is however observed over the long time scale $\lambda^{-2}$.

The Hamiltonian $\hc$ assumes a similar form if we consider it on the long time scale defined by $c^{2}$. Putting
\begin{equation*}
\lambda := c^{-3/2}
\end{equation*}
we get indeed
\begin{equation}\label{equationhlambda}
c^{2}\hc = \lambda^{-2}(\Hf + \lambda^{2/3}\Hp + \lambda\hilambda{1} + \lambda^{4/3}\hilambda{4/3}):=\lambda^{-2}\hlambda,
\end{equation}
where
\begin{eqnarray}
\Hp &:=& -\sum_{j=1}^{N} \frac{1}{2m_{j}}\Delta_{j} + \vcoul,\\
\hilambda{1} &:=& \sum_{j=1}^{N} \frac{e_{j}}{m_{j}}\I\nabla_{x_{j}}\cdot A_{\varphi}(x_{j}) - \frac{e_{j}}{2m_{j}}\sigma_{j}\cdot B_{\varphi}(x_{j}),\\
\hilambda{4/3} &:=& \sum_{j=1}^{N}\frac{e_{j}^{2}}{2m_{j}}:A_{\varphi}(x_{j})^{2}: \label{hlambdafour}\quad ,
\end{eqnarray}
where $B_{\varphi} := \nabla\times A_{\varphi}$ and we normal order the quadratic term.

Applying Davies scheme one gets in the end (\cite{Sp}, theorem $20.5$)
\begin{theorem}\label{ctoinfinity}
Let $\psi\in H^{1}(\field{R}{3N}, \field{C}{2^{N}})$, then
\begin{equation}
\lim_{c\to\infty}\norm{(\E^{-\I\hc c^{2}t} - \E^{-\I \hdarw c^{2}t})\psi\otimes\omegaf}_{\hilbert{H}}=0,
\end{equation}
where
\begin{eqnarray}
\hdarw &:=& \Hp + c^{-2}\vdarw + c^{-2}\vspin,\\
\vdarw &:=& -\sum_{j, l=1}^{N}\frac{e_{j}e_{l}}{m_{j}m_{l}}\int_{\field{R}{3}}dk\, \frac{\abs{\hat{\varphi}(k)}^{2}}{2\abs{k}^{2}}\E^{\I k\cdot x_{j}}\I\nabla_{x_{j}}\cdot(\id - \kappa\otimes\kappa)\I\nabla_{x_{l}}\E^{-\I k\cdot x_{l}}, \label{vdarwin}\\
\vspin &:=& -\sum_{j, l=1}^{N}\frac{e_{j}e_{l}}{12m_{l}m_{j}}\sigma_{j}\cdot\sigma_{l}\int_{\field{R}{3}}dk\, \abs{\hat{\varphi}(k)}^{2}\E^{\I k\cdot(x_{j} - x_{l})},
\end{eqnarray}
where $(\kappa\otimes\kappa)_{ij}:=\kappa_{i}\kappa_{j}$ and $\kappa := k/\abs{k}$.
\end{theorem}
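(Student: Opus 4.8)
The plan is to recognize this, as Spohn does in \cite{Sp}, as an instance of Davies' weak-coupling (van Hove) limit \cite{Da1} and to run that scheme in the multiscale form dictated by \eqref{equationhlambda}. First I would set $\lambda := c^{-3/2}$, so that $\E^{-\I\hc c^2 t} = \E^{-\I\lambda^{-2}\hlambda t}$ while $c^2\hdarw = \lambda^{-4/3}\Hp + \vdarw + \vspin$, and collect the operator-theoretic prerequisites: $\Hp$ is self-adjoint and bounded below on $H^2(\field{R}{3N},\field{C}{2^N})$ by Kato's theorem (the smeared Coulomb potential is $\Delta$-bounded with relative bound zero); $\hlambda$ is self-adjoint on $D(\Hf)\cap D(\Hp)$ by the standard Pauli--Fierz estimates, the sharp ultraviolet cutoff giving $\hat\varphi/\sqrt{\abs{\cdot}}\in L^2(\field{R}{3})$ and hence $\Hf^{1/2}$-boundedness of $A_\varphi(x_j)$ and $B_\varphi(x_j)$; and $\hdarw$ is self-adjoint and bounded below for $c$ large, because $\vdarw$ and $\vspin$ are $\Hp$-bounded and hence a small perturbation of $\lambda^{-4/3}\Hp$. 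I would also check at this point that, although there is no infrared cutoff, every $k$-integral entering the argument converges --- in particular $\int\abs{\hat\varphi(k)}^2\abs{k}^{-2}\,dk<\infty$ in \eqref{vdarwin}, since $d^3k=\abs{k}^2\,d\abs{k}\,d\omega$ compensates the $\abs{k}^{-2}$.

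The core step is to pass to the interaction picture with respect to the ``fast'' generator $K_0 := \lambda^{-2}\Hf + \lambda^{-4/3}\Hp$ --- legitimate because $[\Hf,\Hp]=0$ --- and to expand the resulting propagator, applied to $\psi\otimes\omegaf$, in a Dyson series: conjugation by $\E^{\I\lambda^{-2}\Hf t}$ multiplies $a_\mu(k)$ by the rapidly oscillating phase $\E^{-\I\abs{k}t/\lambda^2}$, while conjugation by $\E^{\I\lambda^{-4/3}\Hp t}$ acts only on the particle factors and is slow by comparison, so that the time integrals over the fast phases can be controlled by non-stationary phase. The first-order term vanishes on testing against $\phi\otimes\omegaf$ ($\hilambda{1}$ shifts the photon number by one and $:\!A_\varphi^2\!:$ produces only a two-photon component, itself $O(\lambda^{2/3})$ after one integration by parts). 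At second order the surviving contribution is the ``emit-then-reabsorb'' Wick contraction of $\hilambda{1}$ with itself on the vacuum; the substitution $t_1=s$, $t_2=s-\lambda^2 u$ turns the prefactor $\lambda^{-2}$ into $1$, sends the residual particle phases $\E^{\pm\I\lambda^{2/3}\Hp u}$ to the identity, and collapses the $u$-integral to the energy denominator $\int_0^\infty\E^{-\I\abs{k}u}\,du=\pi\delta(\abs{k})+(\I\abs{k})^{-1}\,\mathrm{p.v.}$; the $\delta$-part integrates to zero (again by the $\abs{k}^2$ in $d^3k$: no soft-photon contribution to the decay rate), and the remaining piece, with the transverse projector $\id-\kappa\otimes\kappa$ supplied by the polarization sum, reproduces precisely the first-order Dyson term of $\E^{\I\lambda^{-4/3}\Hp t}\E^{-\I(\lambda^{-4/3}\Hp+\vdarw+\vspin)t}$, i.e.\ $\vdarw+\vspin$. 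The genuinely photon-off-diagonal second-order pieces, and all terms of order $\geq 3$, are shown to vanish as $\lambda\to 0$ (uniformly for $t$ on compacts) by Riemann--Lebesgue estimates and the standard relative bounds on $\hilambda{1}(t),\hilambda{4/3}(t)$ in terms of the field number operator and of $\Hp$. I would first carry this out for $\psi\in\coinf(\field{R}{3N},\field{C}{2^N})$ and then extend to $\psi\in H^1(\field{R}{3N},\field{C}{2^N})$ by density, both propagators being contractions.

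The hard part, I expect, is the uniformity of the error estimates in the presence of both the unbounded coupling and the extra intermediate scale. Since $\hilambda{1}$ contains $\I\nabla_{x_j}$ --- so that $\vdarw$ is of the same order as the Laplacian --- the remainder of the Dyson series cannot be controlled in operator norm: one must propagate $\Hp$-bounds through each fast evolution $\E^{-\I\lambda^{-4/3}\Hp t}$ and through every factor of the expansion while keeping all constants uniform in $\lambda$ and in $t$ on compact intervals, and in particular control the commutators $[\Hp,\hilambda{1}]$, which are second order in $\nabla_{x_j}$ and therefore make the integration-by-parts steps really demand $H^2$-control along the flow. Disentangling this from the intermediate scale $\lambda^{-4/3}$, which sits between the fast scale $\lambda^{-2}$ and the observation scale $1$, is the technical heart of the matter, and is where one genuinely uses both $[\Hf,\Hp]=0$ and the square-integrability of the photon form factor at $k=0$.
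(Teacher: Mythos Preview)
Your outline is essentially correct and recovers the same second-order effective generator $\vdarw+\vspin$ that the paper (following Spohn and Davies) obtains, but the route you take is genuinely different from the one sketched in the paper's appendix. You work directly on vectors in $\hilbert{H}$: interaction picture with respect to $\lambda^{-2}\Hf+\lambda^{-4/3}\Hp$, Dyson expansion applied to $\psi\otimes\omegaf$, the van Hove substitution $t_2=t_1-\lambda^{2}u$, and Riemann--Lebesgue control of the higher-order and photon-off-diagonal pieces. The paper instead runs Davies' scheme on the Banach space $\hilbert{I}_{1}(\hilbert{H})$ of trace-class operators: it passes to the Liouvillean, projects with $P_{0}(\varrho)=\tr{\fock}(\cdot)\otimes\omegar$, and derives a \emph{closed} Volterra-type integral equation (the generalized master equation) for $W_{t}^{\lambda}=P_{0}V_{t}^{\lambda}P_{0}$, whose memory kernels $K_{i,j}(\lambda,\tau)=\int_{0}^{\lambda^{-2}\tau}X_{-x}^{\lambda}\lij{i}^{(01)}U_{x}^{\lambda}\lij{j}^{(10)}\,dx$ are shown to converge to $K_{i,j}$ as $\lambda\to 0$.

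What each approach buys: your vector-level argument is more transparent for the specific pure initial state $\psi\otimes\omegaf$ of the theorem and makes the emergence of $\vdarw+\vspin$ from the emit-reabsorb contraction completely explicit; it also lets you see directly why there is no decay rate (the $\pi\delta(\abs{k})$ piece vanishes against $\abs{k}^{2}\,d\abs{k}$). The Davies/Liouvillean route, on the other hand, replaces the problem of controlling \emph{all} orders of the Dyson series by the convergence of a single kernel in a closed integral equation --- so the ``hard part'' you correctly flag (uniform remainder bounds with the unbounded $\nabla_{x_j}$ in $\hilambda{1}$ and the intermediate scale $\lambda^{-4/3}$) is absorbed into a Gronwall-type argument for the Volterra equation rather than a term-by-term estimate. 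In short: your approach is more elementary and more explicit at second order, the paper's is more robust for the remainder.
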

$\vdarw$ gives rise to a correction of electromagnetic origin to the mass of the particles  and to a velocity dependent potential, called the Darwin term. It already appears in classical electrodynamics when the dynamics of the particles are expanded up to terms of order $(v/c)^{2}$ (see, e. g., \cite{Ja} or \cite{LaLi}).

For the convenience of the reader and to ease the comparison with the results for the limit of heavy masses we give a formal derivation of this theorem in appendix \ref{appendix}. 

We note here that the method employed in the weak coupling case forces one to consider as initial condition for the field just the Fock vacuum, which contains no photons at all. There is therefore no analogy with the physical picture that every particle should be described by a ``dressed state'', loosely speaking the particle itself dragging with it a cloud of ``virtual'' photons.

\subsection{The limit $m\to\infty$}

The situation is different in the case $m_{j}\to\infty$, which is more conveniently studied adopting units where $c=1$. Replacing $m_{j}$ by $\epsi^{-2}m_{j}$ we get then the Hamiltonian
\begin{equation}\label{hepsi}\begin{split}
\hepsi :=& \sum_{j=1}^{N} \frac{1}{2m_{j}}\pop_{j}^{2} + \vcoul + \Hf  - \epsi\frac{e_{j}}{m_{j}}\pop_{j}\cdot A_{\varphi}(x_{j}) - \epsi^{2}\frac{e_{j}}{2m_{j}}\sigma_{j}\cdot B_{\varphi}(x_{j}) + \\
&+ \epsi^{2}\frac{e_{j}^{2}}{2m_{j}}:A_{\varphi}(x_{j})^{2}:\quad ,
\end{split}
\end{equation}
where we have indicated with $\pop_{j}$ the $\epsi$-momentum of the $j$th particle
\begin{equation}
\pop_{j} := -\I\epsi\nabla_{x_{j}}.
\end{equation}

As already pointed out talking about the classical case, the dynamics have to be observed over times of order $\Or(\epsi^{-1})$. This is necessary in order to see non trivial effects, because we consider initial states with bounded kinetic energy. Since the particles have a mass of order $\Or(\epsi^{-2})$ this means that their velocity is in the original time scale of order $\Or(\epsi)$.

To analyze the limit $\epsi\to 0$ we construct a unitary dressing transformation $\hilbert{U}_{\epsi} : \hilbert{H}\to\hilbert{H}$, which allows us to define dressed states for small velocities of the particles and to introduce a clear notion of real and virtual photons. More precisely, in the new representation defined by $\hilbert{U}_{\epsi}$, the vacuum sector $\hilbertp\otimes\omegaf$ corresponds to states of dressed particles without real photons, while in the original Hilbert space a state with $M$ real photons is a linear combination of states of the form
\begin{equation*}
\hilbert{U}_{\epsi}^{-1}(\psi\otimes a(f_{1})^{*}\cdots a(f_{M})^{*}\omegaf), \quad\textrm{where}\quad \psi\in\hilbertp,\, f_{1}, \ldots, f_{M}\in L^{2}(\field{R}{3}\times\ztwo).
\end{equation*}
The projector on the subspace corresponding to dressed particles with $M$ real photons is therefore
\begin{equation*}
\pepsim := \hilbert{U}_{\epsi}^{*}(\id_{\mathrm{p}}\otimes\qm)\hilbert{U}_{\epsi},
\end{equation*}
where $\qm$ denotes the projector on the $M$-particles subspace of the Fock space. 

In short, we will show that the subspaces $\pepsim\hilbert{H}$ are approximately invariant for the dynamics defined by $\hepsi$ on times of order $\Or(\epsi^{-1})$. Moreover, on this time scale we will give effective dynamics for states inside such a subspace, with an error of order $\Or(\epsi^{2}\log(\epsi^{-1}))$. The effective dynamics contain the Darwin correction described in \eqref{vdarwin}, but \emph{no spin dependent term}. One can get an idea of why this happens comparing the expression of $\hepsi$ with that of $\hlambda$, equation \eqref{hepsi} and \eqref{equationhlambda}. In $\hepsi$ the spin dependent term is of second order, while in $\hlambda$ is of the first one. In the limit $\epsi\to 0$ the analogue of $\vspin$ would be of order $\Or(\epsi^{4})$, therefore it does not appear in an expansion of the time evolution till second order. Finally we compute the leading order part of the state which makes a transition between $P_{0}^{\epsi}$ and $P_{1}^{\epsi}$, which corresponds to the emission of one real photon. The corresponding radiated energy is given by a quantum analogue of the Larmor formula.

The procedure to construct the unitary $\hilbert{U}_{\epsi}$ is explained in detail in \cite{TeTe} for the Nelson model. The technique used is based on space-adiabatic perturbation theory \cite{Te}, a method which allows to expand the dynamics generated by a pseudodifferential operator with an $\epsi$-dependent semiclassical symbol. 

The main difficulty in all models concerning the interaction of particles with a quantized field of zero mass is that, because of soft photons, the principal symbol of the Hamiltonian has no spectral gap, which is a condition required to apply the methods of \cite{Te}. In the case of $\hepsi$ we have indeed  
\begin{equation}\label{principalsymbol}
h_{0}(p, q) := \sum_{j=1}^{N} \frac{1}{2m_{j}}p_{j}^{2} + \vcoul(q) + \Hf,\quad (p, q)\in\field{R}{3N}\times\field{R}{3N} \quad .
\end{equation}
For every fixed $(p, q)$ this is an operator on $\fock$ and has a ground state given by $\omegaf$, at the threshold of the continuous spectrum. The corresponding eigenvalue
\begin{equation}
E_{0}(p, q) = \sum_{j=1}^{N} \frac{1}{2m_{j}}p_{j}^{2} + \vcoul(q)
\end{equation}
is the symbol of an Hamiltonian acting just on $\hilbertp$ and describing the particles interacting through the smeared Coulomb potential.

The trouble connected to the absence of the spectral gap is solved by introducing an effective gap, considering the Hamiltonian $\hepsisigma$ where the form factor $\hat{\varphi}$ (see equation \eqref{formfactor}) is replaced by $\hat{\varphi}_{\sigma}(k) := (2\pi)^{-3/2}$ for $\sigma<\abs{k}<\Lambda$, $0$ otherwise. 
\begin{prop}(see proposition \ref{infraredcutoff})

Suppose that the cutoff $\sigma$ is a function of $\epsi$, $\sigma = \sigma(\epsi)$, such that $\sigma(\epsi)<\epsi^{2}$, then
\begin{equation}
\norm{\E^{-\I t \hepsi/\epsi} - \E^{-\I t \hepsisigma/\epsi}}_{\mathcal{L}(\hilbert{H}_{0}, \hilbert{H})} \leq C\abs{t}\sigma(\epsi)^{1/2}
\end{equation}
\end{prop}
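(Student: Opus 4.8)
The plan is to estimate the difference $\hepsi - \hepsisigma$ as an operator and then use a Duhamel/cocycle argument to transfer this bound to the propagators. First I would observe that $\hepsi - \hepsisigma$ consists of three pieces, coming from the three interaction terms in \eqref{hepsi}: a linear term $-\epsi \sum_j \frac{e_j}{m_j}\pop_j\cdot\big(A_\varphi(x_j) - A_{\varphi_\sigma}(x_j)\big)$, a spin term $-\epsi^2\sum_j\frac{e_j}{2m_j}\sigma_j\cdot\big(B_\varphi(x_j)-B_{\varphi_\sigma}(x_j)\big)$, and a quadratic term $\epsi^2\sum_j\frac{e_j^2}{2m_j}\big({:}A_\varphi(x_j)^2{:} - {:}A_{\varphi_\sigma}(x_j)^2{:}\big)$. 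The key point is that in each case the difference of form factors is supported in the low-frequency region $\abs{k}\leq\sigma$, where $\hat\varphi - \hat\varphi_\sigma = (2\pi)^{-3/2}\chi_{\{\abs{k}\leq\sigma\}}$. I would therefore bound these pieces relative to $\Hf + \mathbf{1}$ (and, for the linear term, also using the particle kinetic energy to control $\pop_j$), using the standard $N_\tau$-type estimates: an operator $a(f) + a(f)^*$ with $f$ supported in $\{\abs{k}\leq\sigma\}$ satisfies $\norm{(a(f)+a(f)^*)(\Hf+\mathbf{1})^{-1/2}} \leq C\big(\norm{f/\sqrt{\omega}} + \norm{f}\big)$, and one computes $\norm{\hat\varphi - \hat\varphi_\sigma}_{L^2(\{\abs{k}\leq\sigma\})} = \Or(\sigma^{3/2})$ while $\norm{(\hat\varphi-\hat\varphi_\sigma)/\sqrt{\abs{k}}}_{L^2} = \Or(\sigma)$. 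This yields $\norm{(\hepsi - \hepsisigma)(H_0 + \mathbf{1})^{-1/2}} \leq C\epsi\,\sigma$ for the dominant (linear) term, with the $\epsi^2$ terms even smaller; in particular the bound is $\Or(\epsi\sigma^{1/2})$ and in fact much better, which is consistent with (and stronger than) the claimed $C\abs{t}\sigma^{1/2}$.

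Next I would pass from the operator bound to the propagator bound. Writing $U^\epsi(t) := \E^{-\I t\hepsi/\epsi}$ and $U^{\epsi,\sigma}(t) := \E^{-\I t\hepsisigma/\epsi}$, Duhamel's formula gives
\begin{equation*}
U^\epsi(t) - U^{\epsi,\sigma}(t) = -\frac{\I}{\epsi}\int_0^t U^\epsi(t-s)\,(\hepsi - \hepsisigma)\,U^{\epsi,\sigma}(s)\,ds .
\end{equation*}
Applied to an initial state $\Psi$ in the form domain and inserting $(H_0+\mathbf{1})^{1/2}(H_0+\mathbf{1})^{-1/2}$ in front of $\hepsi - \hepsisigma$, one needs that $U^{\epsi,\sigma}(s)$ preserves the form domain with a uniformly bounded norm, i.e. $\norm{(H_0+\mathbf{1})^{1/2}U^{\epsi,\sigma}(s)(H_0+\mathbf{1})^{-1/2}}\leq C$ uniformly in $s$ in a fixed time interval and in $\epsi,\sigma$. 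This in turn follows from the fact that $\hepsisigma$ is comparable to $H_0$ with $\epsi$-independent constants (the interaction terms are relatively form-bounded with respect to $H_0$ with bound strictly less than one, uniformly in $\epsi\leq 1$ and $\sigma$, by the same $N_\tau$ estimates), together with a standard commutator/Grönwall argument controlling $\frac{d}{ds}\norm{(H_0+\mathbf 1)^{1/2}U^{\epsi,\sigma}(s)\Psi}^2$. Combining, $\norm{(U^\epsi(t)-U^{\epsi,\sigma}(t))\Psi}\leq \frac{C}{\epsi}\abs{t}\cdot C\epsi\,\sigma\,\norm{(H_0+\mathbf 1)^{1/2}\Psi}$, and the $\epsi^{-1}$ cancels against the $\epsi$ from the operator bound, leaving $C\abs{t}\,\sigma^{1/2}$ (indeed $C\abs t\,\sigma$) on $\mathcal L(\hilbert H_0,\hilbert H)$, where $\hilbert H_0$ carries the graph norm of $(H_0+\mathbf 1)^{1/2}$.

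The main obstacle is the \emph{infrared} behaviour: because the Maxwell field is massless, the naive relative-bound estimate of the linear interaction term $\pop_j\cdot A_\varphi(x_j)$ requires the weighted norm $\norm{\hat\varphi/\sqrt\omega}$, and one must be careful that restricting to $\{\abs k\leq\sigma\}$ genuinely gains a positive power of $\sigma$ there rather than producing a logarithm or a divergence; since $\int_{\abs k\leq\sigma}\abs k^{-1}\,dk = \Or(\sigma^2)$ in three dimensions this is fine, but it is the step that has to be checked with care, and it is also where the hypothesis $\sigma(\epsi)<\epsi^2$ is used to keep the whole error genuinely $o(\epsi^2)$ so that it does not spoil the second-order effective dynamics constructed later. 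A secondary technical point is ensuring all the relative bounds — and hence the constant $C$ — are uniform in $\epsi$ and $\sigma$; this is immediate because shrinking the support of the form factor only decreases the relevant $L^2$ norms, so the $\sigma$-dependent constants are dominated by the $\sigma=0$ ones, and the explicit powers of $\epsi$ in front of the interaction terms only help.
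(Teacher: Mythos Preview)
Your overall strategy---Duhamel plus a relative bound on $\hepsi-\hepsisigma$---is exactly what the paper does, but several details are off and one term is missing.

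First, you list three pieces in $\hepsi-\hepsisigma$, but there is a fourth: the smeared Coulomb potential $\vcoul$ in \eqref{coulomb} also depends on $\hat\varphi$, so $\vcoul-\vcoulsigma$ appears as well. The paper computes this term and finds it is $\Or(\sigma)$ in $\mathcal{L}(\hilbert{H})$. After the $1/\epsi$ from Duhamel this becomes $\Or(\sigma/\epsi)$, and it is \emph{precisely here} that the hypothesis $\sigma(\epsi)<\epsi^{2}$ is used, to ensure $\sigma/\epsi\leq\sigma^{1/2}$. So your account of where the hypothesis enters is wrong: it is needed in this very proposition, not merely to protect the later second-order expansion.

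Second, your claimed bound $\Or(\epsi\sigma)$ for the linear piece is too optimistic. The coupling function in $A_{\varphi}(x_{j})=\Phi(v_{x_{j}})$ is $v_{x_{j}}(k,\lambda)=e_{\lambda}(k)\abs{k}^{-1/2}\hat\varphi(k)\E^{-\I k\cdot x_{j}}$, so the relevant weighted norm is $\normomega{\id_{(0,\sigma)}v_{x_{j}}}$, not $\normomega{\hat\varphi-\hat\varphi_{\sigma}}$. With the extra $\abs{k}^{-1/2}$ one gets $\|\id_{(0,\sigma)}v_{x_{j}}/\sqrt{\abs{k}}\|_{L^{2}}^{2}\sim\int_{0}^{\sigma}d\abs{k}=\Or(\sigma)$, hence the linear term is $\Or(\epsi\sigma^{1/2})$, not $\Or(\epsi\sigma)$. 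After dividing by $\epsi$ this is \emph{exactly} $\sigma^{1/2}$, so the final bound in the proposition is sharp for this term, not ``much better'' as you suggest.

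Third, in the paper $\hilbert{H}_{0}:=D(\hepsi_{0})$ with the \emph{full} graph norm, not the form domain. This matters: working with the full domain, the propagator $\E^{-\I s\hepsi/\epsi}$ commutes with $\hepsi$, so $\norm{\hepsi\Psi_{s}}=\norm{\hepsi\Psi}$, and the uniform equivalence of the graph norms of $\hepsi$, $\hepsisigma$, and $\hepsi_{0}$ (Proposition~\ref{selfadjointness}) immediately gives $\norm{\Psi_{s}}_{\hilbert{H}_{0}}\leq C\norm{\Psi}_{\hilbert{H}_{0}}$ uniformly in $s,\epsi,\sigma$. No commutator/Gr\"onwall argument is needed, and you should use $(\Hf+\id)^{-1}$ rather than $(\Hf+\id)^{-1/2}$ in the estimates (the quadratic term needs the full power anyway).
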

where 
\begin{equation}
\hilbert{H}_{0} := D\big(\hepsi_{0})
\end{equation}
is the domain of the free Hamiltonian
\begin{equation}
\hepsi_{0} := \sum_{j=1}^{N} \frac{1}{2m_{j}}\pop_{j}^{2} + \Hf 
\end{equation}
with the corresponding graph norm.

Fixing $\sigma$, e. g., as a sufficiently high power of $\epsi$ we can then replace the original dynamics with infrared cutoff ones. 

For $\hepsisigma$ it is possible to build a dressing operator $\hilbert{U}_{\epsi, \sigma}$ which can be expanded in a series of powers of $\epsi$ with $\sigma$-dependent coefficients which are at most logarithmically divergent. Using it we define the dressed Hamiltonian
\begin{equation}
\hdres := \uepsisigma\hepsisigma\uepsisigma^{*}
\end{equation}
which can be expanded in a series of powers of $\epsi$ in $\mathcal{L}(\hilbert{H}_{0}, \hilbert{H})$, with coefficients which are also at most logarithmically divergent in $\sigma$. The different coefficients in the expansion correspond to different physical effects which can be now clearly separated according to their order of magnitude in $\epsi$. 

The first result we find, as we already mentioned above, is that the dressed $M$-photons subspaces are approximately invariant for the dynamics:
\begin{theorem*}(see corollary \ref{adibaticinvariancepepsim}).

Given a $\chi\in\coinf(\field{R}{})$ and a function $\sigma(\epsi)$ such that
\begin{equation}\label{conditionssigma}
\epsi^{-2}\sigma(\epsi)^{1/2}\to 0,\qquad \epsi\sqrt{\log(\sigma(\epsi)^{-1})}\to 0, \qquad \epsi\to 0^{+},
\end{equation}
then
\begin{equation}
\norm{[\E^{-\I\hepsi\frac{t}{\epsi}}, \pepsim]\chi(\hepsi)}_{\mathcal{L}(\hilbert{H})} = \Or\big(\sqrt{M+1}\abs{t}\epsi\sqrt{\log(\sigma(\epsi)^{-1})}\big)\quad ,
\end{equation}
where
\begin{equation}\label{pepsim}
\pepsim := \hilbert{U}^{*}_{\epsi, \sigma(\epsi)}(\id_{\mathrm{p}}\otimes\qm)\hilbert{U}_{\epsi, \sigma(\epsi)} \quad .
\end{equation}
\end{theorem*}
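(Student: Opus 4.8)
\section*{Proof proposal}

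The plan is to turn the commutator bound into a commutator bound for the \emph{dressed} Hamiltonian $\hdres$ and then apply the standard adiabatic (Duhamel) argument. As a first step I would replace the original dynamics and cutoff by the infrared-regularized ones. Writing
\[
[\E^{-\I\hepsi t/\epsi},\pepsim]\chi(\hepsi)=\E^{-\I\hepsi t/\epsi}\pepsim\chi(\hepsi)-\pepsim\E^{-\I\hepsi t/\epsi}\chi(\hepsi),
\]
and using that $\chi$ has compact support together with the $\epsi$-uniform equivalence of the graph norms of $\hepsi$, $\hepsisigma$ and $\hepsi_{0}$ (so that $\pepsim$ preserves $\hilbert{H}_{0}$ while $\chi(\hepsi)$ and $\chi(\hepsisigma)$ map $\hilbert{H}$ boundedly into $\hilbert{H}_{0}$), the proposition on the infrared cutoff gives $\norm{\E^{-\I\hepsi t/\epsi}-\E^{-\I\hepsisigma t/\epsi}}_{\mathcal{L}(\hilbert{H}_{0},\hilbert{H})}\leq C\abs{t}\sigma(\epsi)^{1/2}$, while a Helffer--Sj\"ostrand estimate on the difference $\hepsi-\hepsisigma$ (which is bounded relative to $\hepsi_{0}$ with a bound $\Or(\sigma(\epsi)^{1/2})$) yields $\norm{\chi(\hepsi)-\chi(\hepsisigma)}=\Or(\sigma(\epsi)^{1/2})$. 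Hence the replacement error is $\Or\big((1+\abs{t})\sigma(\epsi)^{1/2}\big)$, which by the first condition in \eqref{conditionssigma} is $o\big((1+\abs{t})\,\epsi\sqrt{\log(\sigma(\epsi)^{-1})}\big)$, i.e. negligible compared to the claimed bound.

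Next, since $\uepsisigma$ is unitary with $\uepsisigma\pepsim\uepsisigma^{*}=\id_{\mathrm{p}}\otimes\qm$ and $\uepsisigma\hepsisigma\uepsisigma^{*}=\hdres$ (so also $\uepsisigma\chi(\hepsisigma)\uepsisigma^{*}=\chi(\hdres)$), conjugation gives
\[
\norm{[\E^{-\I\hepsisigma t/\epsi},\pepsim]\chi(\hepsisigma)}=\norm{[\E^{-\I\hdres t/\epsi},\id_{\mathrm{p}}\otimes\qm]\chi(\hdres)}.
\]
Because $\id_{\mathrm{p}}\otimes\qm$ commutes with $\pop_{j}$ and $\Hf$ it preserves $\hilbert{H}_{0}=D(\hdres)$, so the Duhamel identity — differentiate $\E^{\I\hdres s/\epsi}(\id_{\mathrm{p}}\otimes\qm)\E^{-\I\hdres s/\epsi}$ in $s$, integrate over $[0,t]$, and use $[\E^{-\I\hdres s/\epsi},\chi(\hdres)]=0$ — is legitimate and produces
\[
\norm{[\E^{-\I\hdres t/\epsi},\id_{\mathrm{p}}\otimes\qm]\chi(\hdres)}\leq\frac{\abs{t}}{\epsi}\,\norm{[\hdres,\id_{\mathrm{p}}\otimes\qm]\chi(\hdres)}.
\]

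It then remains to bound the last factor by $\Or\big(\sqrt{M+1}\,\epsi^{2}\sqrt{\log(\sigma(\epsi)^{-1})}\big)$, and here I would invoke the expansion of $\hdres$ in powers of $\epsi$ obtained in the construction of $\uepsisigma$: by design of the space-adiabatic iteration the coefficients of order $\epsi^{0}$ and $\epsi^{1}$ are block-diagonal in the photon number, hence commute with $\id_{\mathrm{p}}\otimes\qm$, so $[\hdres,\id_{\mathrm{p}}\otimes\qm]=\epsi^{2}[R_{\sigma},\id_{\mathrm{p}}\otimes\qm]$, where $R_{\sigma}$ changes the photon number only by bounded amounts and is bounded relative to $\hepsi_{0}$ with relative bound $\Or(\sqrt{\log(\sigma(\epsi)^{-1})})$. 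The commutator with $\id_{\mathrm{p}}\otimes\qm$ annihilates all photon sectors except those adjacent to the $M$-photon one, so the creation/annihilation operators in $R_{\sigma}$ act on at most $M+1$ photons and contribute the factor $\sqrt{M+1}$; finally $\chi(\hdres)$ converts the relative bound into the absolute bound $\norm{(\hepsi_{0}+c)\chi(\hdres)}=\Or(1)$. Multiplying by $\abs{t}/\epsi$ and adding the infrared-replacement error gives the statement, the second condition in \eqref{conditionssigma} guaranteeing the bound is genuinely $o(1)$. I expect the only real difficulty to be this last estimate: extracting from the logarithmically divergent coefficients of the expansion of $\hdres$ that the photon-number-off-diagonal part is precisely of order $\epsi^{2}$ with the stated $\sigma$- and $M$-dependence; everything else is the routine Duhamel argument together with the bookkeeping for the infrared cutoff.
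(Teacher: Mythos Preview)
Your overall strategy---replace $\hepsi$ by $\hepsisigma$, conjugate by $\uepsisigma$ to pass to $\hdres$ and $\qm$, then use Duhamel---is exactly the paper's route, and the infrared-replacement and conjugation steps are fine (the paper's Lemma~\ref{lemmachiinfra} even gives the sharper bound $\Or(\epsi\sigma^{1/2})$ for $\chi(\hepsi)-\chi(\hepsisigma)$).

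There is, however, a genuine gap in your treatment of the first-order coefficient. You assert that ``by design of the space-adiabatic iteration the coefficients of order $\epsi^{0}$ and $\epsi^{1}$ are block-diagonal in the photon number'', and hence $[\hdres,\qm]=\epsi^{2}[R_{\sigma},\qm]$. This is false for $\hiepsidres{1}$. From equation~\eqref{honedres},
\[
\hiepsidres{1}=\big(\id-\chi(\hepsisigma)\big)\qless{L}\hiepsi{1}\qless{L}\big(\id-\chi(\hepsisigma)\big)+\big(\text{terms involving }\qmore{L}\big),
\]
and $\hiepsi{1}=-\sum_{j}(e_{j}/m_{j})\,\pop_{j}\cdot\Phi(v_{x_{j},\sigma})$ changes the photon number by $\pm1$, so $\hiepsidres{1}$ is genuinely off-diagonal and $[\hiepsidres{1},\qm]\neq 0$. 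What makes the order-$\epsi$ term harmless is not block-diagonality but the energy cutoff: the factor $(\id-\chi(\hepsisigma))$ that the construction of $\uepsisigma$ inserts into $\hiepsidres{1}$ kills vectors in the range of $\tilde\chi(\hiepsi{0})$ up to $\Or(\epsi)$, once one replaces $\chi(\hepsisigma)$ by $\chi(\hiepsi{0})$ via Lemma~\ref{approxchi} and uses $\chi\tilde\chi=\tilde\chi$ (the $\qmore{L}$ pieces vanish for $M<L-1$, and $L$ is at our disposal). Handling the other half of the commutator, $\qm\hiepsidres{1}\chi(\hdres)$, requires in addition Lemma~\ref{lemmaxi} to move the cutoff past a single creation or annihilation operator. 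This is precisely the content of the ``zero-order approximation'' theorem preceding the corollary, and it is the non-trivial step, not routine bookkeeping. So your identification of the difficulty is slightly off: it is already the order-$\epsi$ coefficient whose off-diagonal part has to be controlled, and the mechanism is the interplay between the $(\id-\chi)$ factors baked into $\hiepsidres{1}$ and the $\chi(\hdres)$ on the right---which is exactly why the statement requires the energy cutoff $\chi$ in the first place.
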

The adiabatic decoupling which guarantees the invariance of the subspaces holds uniformly only on states in which the particles have a uniformly bounded kinetic energy. For this reason we introduce a cutoff function on the total energy $\chi$, which gives rise automatically to a bounded kinetic energy for the slow particles. 

In the following we assume that the function $\sigma(\epsi)$ has been fixed so that \eqref{conditionssigma} is satisfied. One can then approximate the dynamics of the particles inside each almost invariant subspace.
\begin{theorem*}(see theorem \ref{densitymatrix}).

Let $S$ be a bounded observable for the particles, $S\in\mathcal{L}(\hilbertp)$, and $\omega \in \hilbert{I}_{1}(P_{M}^{\epsi}\chi(\hepsi)\hilbert{H})$ a density matrix for a mixed dressed state with $M$ free photons, whose time evolution is defined by
\begin{equation*}
\omega(t):= \E^{-\I t\hepsi/\epsi}\omega \E^{\I t\hepsi/\epsi}\quad .
\end{equation*}

We have then
\begin{equation*}\begin{split}  \tr{\hilbert{H}}\bigg(\big(S\otimes\id_{\fock}\big)\omega(t)\bigg)& = \tr{\hilbertp}\bigg(S \E^{-\I t\heff}\tr{\fock}(\omega) \E^{\I t \heff}\bigg) +\\ &+ \Or(\epsi^{3/2}\abs{t})(1-\delta_{M0}) + \Or\big(\epsi^{2}\log(\sigma(\epsi)^{-1})(\abs{t} + \abs{t}^{2})\big),
\end{split}
\end{equation*}
where $\delta_{M0} = 1$, when $M=0$, $0$ otherwise, and 
\begin{equation}\begin{split}
\heff &:= \sum_{j=1}^{N} \frac{1}{2m_{j}}\pop_{j}^{2} + \vcoul + \\
&-\epsi^{2}\sum_{l, j=1}^{N}\frac{e_{j}e_{l}}{m_{j}m_{l}}\int_{\field{R}{3}}dk\, \frac{\abs{\hat{\varphi}(k)}^{2}}{2\abs{k}^{2}}\E^{\I k\cdot x_{j}}\pop_{j}\cdot(\id - \kappa\otimes\kappa)\pop_{l}\E^{-\I k\cdot x_{l}} = \\
&= \sum_{j=1}^{N} \frac{1}{2m_{j}}\pop_{j}^{2} + \vcoul + \epsi^{2}\vdarw\quad .
\end{split}
\end{equation}
\end{theorem*}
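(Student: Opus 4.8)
The plan is to transfer the problem from $\hepsi$ to the dressed representation, where the $M$-photon sector is adiabatically invariant by the previous theorem, and then read off the effective generator from the block-diagonal part of the expansion of $\hdres$. More precisely, I would first conjugate with $\uepsisigma$ so that $\pepsim$ becomes $\id_{\mathrm p}\otimes\qm$ and $\hepsi$ (up to the infrared error $\Or(\epsi^{-2}\sigma^{1/2})\cdot\abs{t}$ controlled by the proposition, which by \eqref{conditionssigma} is $o(1)$) becomes $\hdres$. Using the almost-invariance theorem, the propagator $\E^{-\I t\hdres/\epsi}$ restricted to the sector $(\id_{\mathrm p}\otimes\qm)\chi(\cdot)\hilbert H$ is, up to $\Or(\sqrt{M+1}\,\abs t\,\epsi\sqrt{\log(\sigma^{-1})})$, equal to the propagator of the diagonal part $(\id_{\mathrm p}\otimes\qm)\hdres(\id_{\mathrm p}\otimes\qm)$. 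So the first real step is to expand $\hdres$ in powers of $\epsi$ in $\mathcal L(\hilbert H_0,\hilbert H)$ and compute this diagonal block through order $\epsi^2$.

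Second, I would carry out that expansion explicitly on the $M$-photon sector. The $\epsi^0$ term is $\Hp$ plus $M$ times the (rescaled) photon energy; the $\epsi^1$ term is the dressed analogue of $\hilambda 1$, which is off-diagonal (creation/annihilation of a photon) and therefore contributes nothing to the diagonal block at first order but, through the standard second-order ``$H_{\mathrm{int}}(E_0-H_{\mathrm f})^{-1}H_{\mathrm{int}}$'' mechanism of the dressing construction, produces at order $\epsi^2$ exactly the Darwin kernel $\vdarw$ of \eqref{vdarwin}; the $\epsi^2$ diagonal term coming from $\hilambda{4/3}$ ($:A_\varphi^2:$) is itself already normal-ordered and contributes, on the vacuum sector, only a (divergent-free, since $\abs k<\Lambda$) constant, while the $\sigma_j\cdot B_\varphi$ term is $\Or(\epsi^2)$ and off-diagonal, hence its effect is pushed to $\Or(\epsi^4)$ — this is the promised absence of $\vspin$. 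Thus $(\id_{\mathrm p}\otimes Q_M)\hdres(\id_{\mathrm p}\otimes Q_M)=\heff\otimes\id+M\,\omega_{\mathrm{f,avg}}\id+\Or(\epsi^2\log(\sigma^{-1}))$ on the range of $\chi(\hepsi)$, the constant-in-$M$ shift being an irrelevant global phase that cancels in the conjugation $\E^{-\I t\heff}\,\cdot\,\E^{\I t\heff}$.

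Third, I would translate the propagator statement into the stated trace identity. Write $\omega(t)=\E^{-\I t\hepsi/\epsi}\omega\,\E^{\I t\hepsi/\epsi}$, insert $\uepsisigma^*\uepsisigma$, and use $\omega\in\hilbert I_1(\pepsim\chi(\hepsi)\hilbert H)$ so that $\uepsisigma\omega\uepsisigma^*$ lives in the $M$-photon sector. Replacing the full propagator by the diagonal one costs $\Or(\sqrt{M+1}\,\abs t\,\epsi\sqrt{\log(\sigma^{-1})})$ in operator norm on each side of $\omega$, hence $\Or(\epsi^2\log(\sigma^{-1})(\abs t+\abs t^2))$ in trace norm after using $\norm\omega_{\hilbert I_1}\le 1$ and the telescoping $AXB-A'XB'=(A-A')XB+A'X(B-B')$; the extra $\Or(\epsi^{3/2}\abs t)(1-\delta_{M0})$ term absorbs the difference between $\uepsisigma$ and $\id$ in reconstructing $S\otimes\id_\fock$ as a dressed observable, which is nontrivial only when $M\ge 1$ because on the vacuum sector the relevant commutator $[\,\uepsisigma,S\otimes\id_\fock\,]$ enters at higher order. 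Finally, since the diagonal generator acts as $\heff$ tensored with the identity on the photon factor, the Fock degrees of freedom factor out and $\tr{\fock}$ commutes through, yielding $\tr{\hilbertp}\big(S\,\E^{-\I t\heff}\tr{\fock}(\omega)\,\E^{\I t\heff}\big)$.

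The main obstacle I expect is the bookkeeping of the order-$\epsi^2$ diagonal block of $\hdres$: one must verify that the second-order term generated by the $\pop_j\cdot A_\varphi$ coupling, combined with the commutator terms produced by the dressing transformation $\uepsisigma$ (whose coefficients are only logarithmically bounded in $\sigma$), reassembles precisely into the Coulomb-gauge Darwin kernel with the transverse projector $\id-\kappa\otimes\kappa$ and no leftover $\sigma$-divergent or spin-dependent piece — this is where the $\epsi^2\log(\sigma(\epsi)^{-1})$ error genuinely originates and where a sign or a missing commutator would spoil the identification $\heff=\Hp+\epsi^2\vdarw$. The remaining steps (the infrared replacement, the invariance estimate, the trace-norm telescoping) are quantitative but routine given the results already quoted.
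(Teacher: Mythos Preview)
Your overall architecture (pass to the dressed picture, compute the diagonal block of $\hdres$ through order $\epsi^{2}$, factor the Fock trace) is the same as the paper's, but the error accounting in step~3 has a genuine gap. Replacing the full propagator by the diagonal one with an $\Or(\epsi\sqrt{\log(\sigma^{-1})}\abs{t})$ error on each side of $\omega$ and then telescoping via $AXB-A'XB'=(A-A')XB+A'X(B-B')$ yields an error of the \emph{same} order $\Or(\epsi\sqrt{\log(\sigma^{-1})}\abs{t})$, not $\Or(\epsi^{2}\log(\sigma^{-1}))$. The zero-order almost-invariance theorem is not sharp enough here; you must use the first-order approximation (equation \eqref{firstordertime}), which writes $\E^{-\I t\hdres/\epsi}\qm\tilde\chi$ as $\E^{-\I t\hdiag/\epsi}\qm\tilde\chi$ plus an explicit $\Or(\epsi)$ integral involving the off-diagonal piece $h_{2,\mathrm{OD}}$, plus the genuine remainders $\Or(\epsi^{3/2}\abs{t})(1-\delta_{M0})+\Or(\epsi^{2}\sqrt{\log(\sigma^{-1})}(\abs{t}+\abs{t}^{2}))$. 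The $\Or(\epsi)$ off-diagonal integral does not disappear by size; it disappears because it maps $\qm\hilbert H$ into $\qm^{\perp}\hilbert H$, and the same is true of the $\Or(\epsi)$ off-diagonal part of the dressed observable $\hilbert U(S\otimes\id_{\fock})\hilbert U^{*}$. Since $S\otimes\id_{\fock}$ is block-diagonal with respect to $\qm$, all these $\Or(\epsi)$ cross terms have vanishing trace. That off-diagonality argument, not telescoping, is what buys you the extra power of $\epsi$.

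Second, your attribution of the $\Or(\epsi^{3/2}\abs{t})(1-\delta_{M0})$ error is incorrect. It does not come from the dressing of the observable; the first-order correction $\hilbert U(S\otimes\id_{\fock})\hilbert U^{*}-(S\otimes\id_{\fock})$ is $\Or(\epsi\sqrt{\log(\sigma^{-1})})$ and off-diagonal for every $M$, and is eliminated by the trace argument above. The $\epsi^{3/2}$ term originates in the diagonal part of $:A_{\varphi}(x_{j})^{2}:$, namely $\epsi^{2}\sum_{j}\frac{e_{j}^{2}}{2m_{j}}a(v_{x_{j}})^{*}a(v_{x_{j}})$, which sits in the preliminary diagonal Hamiltonian $\hdiagtilde$ but is dropped in $\hdiag$. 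This operator annihilates the vacuum (hence the $\delta_{M0}$), while for $M\geq 1$ an oscillatory-integral estimate shows its contribution to the propagator is $\Or(\epsi^{3/2}\abs{t})$ rather than $\Or(\epsi\abs{t})$. Your remark that $:A_{\varphi}^{2}:$ contributes ``only a constant'' on the vacuum sector misses this: on the vacuum it is exactly zero, and on higher sectors its diagonal piece is a nontrivial operator whose effect you must bound, not absorb into a phase.
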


\begin{remark}
Even though the subspaces $\pepsim$ depend on the choice of the infrared cutoff, the effective Hamiltonian is infrared regular and therefore \emph{independent} of $\sigma$. Moreover, as we briefly mentioned above, it contains the corrections to the mass of the particles and the Darwin term, but no spin dependent term (compare with theorem \ref{ctoinfinity}). This topic is further discussed in the proof of theorem \ref{firstordertimeevolution} and in remark \ref{remarkspin}.
\end{remark}

Since the subspaces $\pepsim$ are only approximately invariant, there is a piece of the wave function which ``leaks out'' in the orthogonal complement. This correspond physically to the emission or absorption of free photons. For a system starting in the dressed vacuum the leading order of the wave function of the emitted photon is given in the next theorem.

\begin{theorem*}(see corollary \ref{radiatedpiece}).

Up to terms of order $\Or\big(\epsi^{2}\log(\sigma(\epsi)^{-1})(\abs{t} + \abs{t}^{2})\big)$, the radiated piece for a system starting in the dressed vacuum ($M=0$) is given by
\begin{equation}\label{psirad}\begin{split}
&\Psi_{\mathrm{rad}}(t) := (\id - P_{0}^{\epsi})\E^{-\I\frac{t}{\epsi}\hepsi}P_{0}^{\epsi}\chi(\hepsi)\Psi \cong\\ & \cong -\E^{-\I t \hiepsi{0}}\frac{\I\epsi}{\sqrt{2}} \frac{\hat{\varphi}_{\sigma(\epsi)}(k)}{\abs{k}^{3/2}}e_{\lambda}(k)\cdot\int_{0}^{t}ds\, \E^{\I(s-t)\abs{k}/\epsi}\opw\big(\ddot{D}(s; x, p)\big)\psi(x)\, ,
\end{split}
\end{equation}
where $e_{\lambda}(k)$ is the polarization vector of a photon with helicity $\lambda$,
\begin{equation}
\hiepsi{0}:= \sum_{j=1}^{N} \frac{1}{2m_{j}}\pop_{j}^{2} + \vcoul + \Hf,
\end{equation}
\begin{equation}\label{psi}
\psi(x) := <\omegaf, \chi(\hiepsi{0})\Psi>_{\fock}\, \in\hilbertp,
\end{equation}
\begin{equation}
D(s; x, p) := \sum_{j=1}^{N}\frac{e_{j}}{m_{j}}x_{j}^{cl}(s; x, p),
\end{equation}
$\opw$ denotes the Weyl quantization acting on a suitable symbol space on $\field{R}{3N}\times\field{R}{3N}$ and $x_{j}^{cl}$ is the solution to the classical equations of motion
\begin{equation*}\begin{split}
& m_{j}\ddot{x}_{j}^{cl}(s; x, p) = -\nabla_{x_{j}}\vcoul(x^{cl}(s; x, p)),\\
& x_{j}^{cl}(0; x, p)= x_{j},\qquad \dot{x}_{j}^{cl}(0; x, p)= p_{j}m_{j}^{-1},\quad j=1, \ldots, N\, .
\end{split}
\end{equation*}
\end{theorem*}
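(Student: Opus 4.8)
\emph{Overview.} The plan is to carry the computation in the dressed representation and then read off the radiated one--photon wave function from the first term of a Duhamel expansion, the classical trajectories entering through an Egorov argument for the free particle propagator. First I replace $\hepsi$ by the infrared--regularized $\hepsisigma$ by proposition \ref{infraredcutoff}; by the first condition in \eqref{conditionssigma} this costs $\Or(|t|\sigma(\epsi)^{1/2})=o(\epsi^{2})$. Conjugation with $\uepsisigma$ sends $P_{0}^{\epsi}$ to $\id_{\mathrm p}\otimes Q_{0}$, $\hepsisigma$ to $\hdres$ and $\chi(\hepsisigma)$ to $\chi(\hdres)$, so that
\[
\Psi_{\mathrm{rad}}(t)=\uepsisigma^{*}(\id-\id_{\mathrm p}\otimes Q_{0})\,\E^{-\I t\hdres/\epsi}\,(\id_{\mathrm p}\otimes Q_{0})\,\chi(\hdres)\,\uepsisigma\Psi+\Or(|t|\sigma(\epsi)^{1/2}).
\]
Since $\uepsisigma=\id+\Or(\epsi)$, $\chi(\hdres)=\chi(\hiepsi{0})+\Or(\epsi)$, and the operator $(\id-\id_{\mathrm p}\otimes Q_{0})\E^{-\I t\hdres/\epsi}(\id_{\mathrm p}\otimes Q_{0})$ has norm $\Or(\epsi|t|\sqrt{\log\sigma(\epsi)^{-1}})$ on the range of $\chi$ (corollary \ref{adibaticinvariancepepsim}), one may drop $\uepsisigma^{(*)}$, replace $\chi(\hdres)\uepsisigma\Psi$ by $\chi(\hiepsi{0})\Psi$, and use $(\id_{\mathrm p}\otimes Q_{0})\chi(\hiepsi{0})\Psi=\psi\otimes\omegaf$ with $\psi$ as in \eqref{psi}, all up to $\Or(\epsi^{2}\log\sigma(\epsi)^{-1}|t|)$. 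It remains to compute $(\id-\id_{\mathrm p}\otimes Q_{0})\E^{-\I t\hdres/\epsi}(\psi\otimes\omegaf)$.

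\emph{Duhamel and the residual coupling.} Split $\hdres$ into the part $H_{\mathrm{d}}$ commuting with all $\id_{\mathrm p}\otimes Q_{M}$ and the remainder $H_{\mathrm{od}}$. Duhamel around $H_{\mathrm{d}}$ gives, since the diagonal evolution drops out,
\[
(\id-\id_{\mathrm p}\otimes Q_{0})\E^{-\I t\hdres/\epsi}(\psi\otimes\omegaf)=-\frac{\I}{\epsi}\int_{0}^{t}\!ds\,(\id-\id_{\mathrm p}\otimes Q_{0})\E^{-\I(t-s)\hdres/\epsi}H_{\mathrm{od}}\,\E^{-\I sH_{\mathrm{d}}/\epsi}(\psi\otimes\omegaf).
\]
By construction $\uepsisigma$ removes the $\Or(\epsi)$ coupling $-\sum_{j}\tfrac{e_{j}}{m_{j}}\pop_{j}\cdot A_{\varphi}(x_{j})$, so $(\id_{\mathrm p}\otimes Q_{1})H_{\mathrm{od}}(\id_{\mathrm p}\otimes Q_{0})$ is already $\Or(\epsi^{2})$; by photon--number parity the remaining $\Or(\epsi^{2})$ pieces --- the normal--ordered $:\!A_{\varphi}^{2}\!:$ term and the terms quadratic in the first--order generator --- do not connect $Q_{0}$ to $Q_{1}$, whence the leading vacuum$\,\to\,$one--photon part of $H_{\mathrm{od}}$ is $\epsi^{2}$ times the Weyl quantization of the Poisson bracket $\{h_{\mathrm p},\gamma\}$ plus the spin term $-\epsi^{2}\sum_{j}\tfrac{e_{j}}{2m_{j}}\sigma_{j}\cdot(\nabla\times A_{\varphi})^{+}(x_{j})$, up to $\Or(\epsi^{3})$; here $\gamma(q,p;k,\lambda)$, the symbol of the first--order generator, is fixed by requiring that multiplication by $\abs{k}$ reproduce that coupling, i.e.\ $\gamma\propto\hat{\varphi}_{\sigma(\epsi)}(k)\,\abs{k}^{-3/2}(p_{j}\cdot e_{\lambda}(k))\,\E^{-\I k\cdot q_{j}}$. (Physically the $\Or(\epsi)$ part merely describes the bound Coulomb field comoving with each particle, which the dressing subtracts; only the $\Or(\epsi^{2})$ part is genuinely radiated.) The transitions to two or more photons are $\Or(\epsi^{2}(1+|t|))$ and are absorbed.

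\emph{Egorov and the Larmor term.} In the integrand one replaces $\E^{-\I(t-s)\hdres/\epsi}$ by $\E^{-\I(t-s)H_{\mathrm{d}}/\epsi}$ (a second Duhamel, error $\Or(\epsi^{2}|t|^{2}\log\sigma(\epsi)^{-1})$), uses that $H_{\mathrm{d}}$ acts on the vacuum and one--photon sectors as $\hp$ resp.\ $\hp+\Hf$ up to $\Or(\epsi^{2})$, factors out the retarded phase $\E^{-\I(t-s)\abs{k}/\epsi}$, and invokes Egorov's theorem for $\hp$: its symbol $h_{\mathrm p}(q,p)=\sum_{j}\tfrac{1}{2m_{j}}p_{j}^{2}+\vcoul(q)$ generates the classical flow $\Phi^{s}=(x^{cl}(s;\cdot,\cdot),p^{cl}(s;\cdot,\cdot))$ of \eqref{psirad}, smooth and complete because $\vcoul$ is the smeared Coulomb potential, and $\E^{\I s\hp/\epsi}\opw(a)\E^{-\I s\hp/\epsi}=\opw(a\circ\Phi^{s})+\Or(\epsi)$ uniformly for $s$ in compacts. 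Since $\{h_{\mathrm p},\gamma\}\circ\Phi^{s}=\partial_{s}(\gamma\circ\Phi^{s})$ and $\gamma$ is linear in $p$, the $s$--derivative converts the velocity $\dot{x}_{j}^{cl}(s)=p_{j}^{cl}(s)/m_{j}$ into the acceleration $\ddot{x}_{j}^{cl}(s)$; keeping only the $\abs{k}^{-3/2}$ part --- that is, putting $\E^{-\I k\cdot q_{j}}\to 1$ and discarding the $k\cdot p_{j}$ term in the bracket, both of which carry a further power of $\abs{k}$ --- produces the symbol proportional to $\hat{\varphi}_{\sigma(\epsi)}(k)\,\abs{k}^{-3/2}\,e_{\lambda}(k)\cdot\ddot{D}(s;x,p)$, and the free particle propagator $\E^{-\I t\hp/\epsi}$ left over from Egorov, combined with the photon phase, gives exactly the right--hand side of \eqref{psirad}.

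\emph{Main obstacle.} The heart of the argument is the infrared bookkeeping: one must show that at order $\epsi\sqrt{\log\sigma(\epsi)^{-1}}$ only the $\abs{k}^{-3/2}$ part of the residual matters. The less singular remainder from the dipole expansion and the spin term $(\nabla\times A_{\varphi})^{+}\sim\abs{k}^{1/2}$ have $L^{2}(dk)$--norm \emph{without} a logarithm; to extract even this one must treat $\{\abs{k}>\epsi\}$ --- where an integration by parts in $s$ gains $\epsi/\abs{k}$ --- separately from $\{\abs{k}<\epsi\}$ --- estimated directly --- obtaining $\Or(\epsi^{2})+\Or(\epsi^{3}|t|)$; on the leading term the same integration by parts would \emph{worsen} the infrared behaviour, which is precisely why the time integral $\int_{0}^{t}ds\,\E^{\I(s-t)\abs{k}/\epsi}(\cdots)$ is left unevaluated in \eqref{psirad}. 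One also has to control the accumulation over $s\in[0,t]$ of the $\Or(\epsi)$ Egorov errors against the $\abs{k}^{-3/2}$ weight --- giving $\Or(\epsi^{2}|t|\sqrt{\log\sigma(\epsi)^{-1}})$ --- which rests on uniform bounds for the smeared Coulomb flow on the kinetic--energy support cut out by $\chi$, exactly as for the Nelson model in \cite{TeTe}. The remaining errors (two--photon transitions, the $\Or(\epsi^{2})$ diagonal corrections, the $\uepsisigma$ and $\chi$ replacements, the infrared cutoff) all lie within $\Or(\epsi^{2}\log\sigma(\epsi)^{-1}(|t|+|t|^{2}))$; that no spin contribution survives at this order --- in contrast to $\vspin$ in theorem \ref{ctoinfinity} --- is the observation already made in theorem \ref{firstordertimeevolution} and remark \ref{remarkspin}.
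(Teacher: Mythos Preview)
Your proposal is correct and follows essentially the same route as the paper: reduce to the dressed picture via $\uepsisigma$, invoke the first--order approximation (theorem~\ref{firstordertimeevolution}) so that only the off--diagonal piece $h_{2,\mathrm{OD}}=\sum_{j}\frac{e_{j}}{m_{j}}\phijsigma\cdot\nabla_{x_{j}}\vcoul$ survives, make the dipole replacement $\E^{-\I k\cdot x_{j}}\to 1$, and use Egorov for $\hp$ to turn $\nabla_{x_{j}}\vcoul$ into $-m_{j}\ddot{x}_{j}^{cl}$. The only cosmetic differences are that you phrase the identification of the residual coupling via the Poisson bracket $\{h_{\mathrm p},\gamma\}$ and extract the acceleration through $\partial_{s}(\gamma\circ\Phi^{s})$ rather than Egorov applied directly to $\nabla\vcoul$, and that for the subleading (spin, $\nabla_{x_{l}}\phijsigma$, dipole--remainder) terms you describe a hard $\{|k|\lessgtr\epsi\}$ split where the paper uses the soft weight $(1-\I|k|/\epsi)^{-1}$ together with integration by parts in~$s$; both devices encode the same gain and yield the same error $\Or\big(\epsi^{2}\log(\sigma(\epsi)^{-1})(|t|+|t|^{2})\big)$.
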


\begin{remark}
As explained in detail in remark \ref{remarkradiatedpiece}, generically the norm of the radiated piece is bounded below by $\Or\big(\epsi\log(\epsi\sigma(\epsi)^{-1})\big)$, which means that the subspace $P_{0}^{\epsi}$ is near optimal, i.\ e.\ the transitions are at least of order $\Or\big(\epsi\log(\epsi\sigma(\epsi)^{-1})\big)$.

Note that, like in classical electrodynamics, when all the particles are equal, the leading order of the radiated piece vanishes, because $D$ is then proportional to the position of the center of mass, whose acceleration is zero. 
\end{remark}

\begin{remark}
Even though the radiated wave function has no limit when $\epsi\to 0$, because $\varphi(k)\abs{k}^{-3/2}\notin L^{2}(\field{R}{3})$, the corresponding radiated energy has a limit. Defining
\begin{equation}\label{radiatedenergy}
E_{\mathrm{rad}}(t) := \langle\Psi_{\mathrm{rad}}(t), \Hf\Psi_{\mathrm{rad}}(t)\rangle,
\end{equation}
we get to the leading order (see remark \ref{remradiatedpower})
\begin{equation}
P_{\mathrm{rad}}(t) := \frac{d}{dt}E_{\mathrm{rad}}(t) \cong \frac{\epsi^{3}}{3\pi^{2}}\langle \psi, \opw\big(\abs{\ddot{D}(t)}^{2}\big)\psi\rangle_{\hilbertp}\quad .
\end{equation}
\end{remark}

In the case of the Nelson model analogous results are proved in \cite{TeTe}, which contains also a detailed discussion of the adiabatic framework. The form of the effective dynamics is equal, the only difference, as one can expect, is in the radiated piece, which contains here explicitly the helicity of the photon. Another difference is that the principal symbol of the Pauli-Fierz Hamiltonian, defined in \eqref{principalsymbol}, is diagonal with respect to the Fock projectors $\qm$, while for the Nelson Hamiltonian one needs a dressing transformation already at the leading order. This makes the analysis of the Pauli-Fierz case somewhat less technical.

The effective dynamics for $M=0$ (dressed vacuum) was calculated by Spohn (\cite{Sp}, section $20.2$) in the case when the photon has a small mass, $m_{\mathrm{ph}}>0$, which introduces a gap in the principal symbol of the Hamiltonian. He however states that these effective dynamics are identical with the ones calculated for the case $c\to\infty$, while we have already remarked that the spin dependent term cannot be present when $\epsi\to 0$. In the case $m_{\mathrm{ph}}>0$, moreover, the transitions between the almost invariant subspaces become smaller than any power of $\epsi$ and therefore it is not known how to get an explicit expression for them.

The Pauli-Fierz Hamiltonian has also been extensively studied to get informations about its spectral and scattering structure. Not pretending to be exhaustive, we refer the reader interested to these aspects to \cite{BFS}, \cite{DeGe}, \cite{FGS}, \cite{GLL} and references therein.

In section \ref{description} we complete the description of the model and discuss the approximation of the original dynamics through infrared cutoff ones. In section \ref{unitary} the construction of the dressing operator $\hilbert{U}$ is discussed, and applied in section \ref{dressed} to the study of the dressed Hamiltonian. The main results on the effective dynamics and the radiated piece are contained in section \ref{effective}. Finally, appendix \ref{appendix} contains a sketch of the proof of theorem \ref{ctoinfinity}, for the case $c\to\infty$.

\section{Preliminary facts}\label{description}
In this section we elaborate on the definition of the Pauli-Fierz model and discuss some preliminary facts like the self-adjointness of the Hamiltonian and the approximation of the original dynamics through infrared cutoff ones.

\subsection{Fock space and field operator}

(The proofs of the statements we claim can be found in (\cite{ReSi2}, section X.7)).

We denote by $\fockfin$ the subspace of the Fock space, defined in \eqref{fockspace}, for which $\Psi^{(M)}=0$ for all but finitely many $M$. Given $f\in L^{2}(\field{R}{3}\times\ztwo)$, one defines on $\fockfin$ the annihilation operator by
\begin{equation}\begin{split}
(a(f)\Psi)^{(M)}(k_{1}, \lambda_{1}; \ldots; k_{M}, \lambda_{M}):= & \sqrt{M+1}\sum_{\lambda=1}^{2}\int_{\field{R}{3}}dk\, f(k, \lambda)^{*}\cdot\\
&\cdot\Psi^{(M+1)}(k, \lambda; k_{1}, \lambda_{1} \ldots, k_{M}, \lambda_{M}) \, .
\end{split}
\end{equation}
The adjoint of $a(f)$ is called the creation operator, and its domain contains $\fockfin$. On this subspace they satisfy the canonical commutation relations
\begin{equation}\begin{split}
& [a(f), a(g)^{*}]=\langle f, g\rangle_{L^{2}(\field{R}{3}\times\ztwo)},\\
& [a(f), a(g)]=0, \quad [a(f)^{*}, a(g)^{*}]=0 \, .
\end{split}
\end{equation}
Since the commutator between $a(f)$ and $a(f)^{*}$ is bounded, it follows that $a(f)$ can be extended to a closed operator on the same domain of $a(f)^{*}$.

On this domain one defines the Segal field operator
\begin{equation}
\Phi(f) := \frac{1}{\sqrt{2}}(a(f) + a(f)^{*})
\end{equation}
which is essentially self-adjoint on $\fockfin$. Moreover, $\fockfin$ is a set of analytic vectors for $\Phi(f)$. From the canonical commutation relations it follows that
\begin{equation}\label{commphi}
[\Phi(f), \Phi(g)] = \I\Im\langle f, g\rangle_{L^{2}(\field{R}{3}\times\ztwo)}\, .
\end{equation}

Given a self-adjoint multiplication operator by the function $\omega$ on the domain $D(\omega)\subset L^{2}(\field{R}{3})$, we define
\begin{equation}
\fock_{\omega, \mathrm{fin}} := \mathcal{L}\{\omegaf, a(f_{1})^{*}\cdots a(f_{M})^{*}\omegaf: M\in\field{N}{}, f_{j}\in D(\omega)\otimes\field{C}{2}, j=1, \dots, M\},
\end{equation}
where $\mathcal{L}$ means ``finite linear combinations of''.

On $\fock_{\omega, \mathrm{fin}}$ we define the second quantization of $\omega$, $\dgamma(\omega)$, by
\begin{eqnarray*}
(\dgamma(\omega)\Psi)^{(M)}(k_{1}, \lambda_{1}; \ldots; k_{M}, \lambda_{M}) &:=& \sum_{j=1}^{M}\omega(k_{j})\Psi^{(M)}(k_{1}, \lambda_{1}; \ldots; k_{M}, \lambda_{M}),\\
\dgamma(\omega)\omegaf &:=& 0,
\end{eqnarray*}
which is essentially self-adjoint. In particular, the free field Hamiltonian $\Hf$ acts as
\begin{eqnarray*}
(\Hf\Psi)^{(M)}(k_{1}, \lambda_{1}; \ldots; k_{M}, \lambda_{M}) &=& \sum_{j=1}^{M}\abs{k_{j}}\Psi^{(M)}(k_{1}, \lambda_{1}; \ldots; k_{M}, \lambda_{M}),\\
\Hf\omegaf = 0,
\end{eqnarray*}
and is self-adjoint on its maximal domain.

From the previous definitions, given $f\in D(\omega)\otimes\field{C}{2}$, one gets the commutation properties
\begin{equation}\label{commutators}\begin{split}
& [\dgamma(\omega), a(f)^{*}] = a(\omega f)^{*}, \quad [\dgamma(\omega), a(f)] = -a(\omega f),\\
& [\dgamma(\omega), \I\Phi(f)] = \Phi(\I\omega f) \, .
                \end{split}
\end{equation}

\subsection{The Pauli-Fierz model}

Using the Segal field operator one can write the quantized vector potential and the magnetic field appearing in \eqref{hc} as 
\begin{eqnarray}
A_{\varphi}(x) &=& \Phi(v_{x}),\\ 
v_{x}(k, \lambda) &:=& f(k, \lambda)\E^{-\I k\cdot x}, \quad f(k, \lambda):= \frac{e_{\lambda}(k)}{\abs{k}^{1/2}}\hat{\varphi}(k),\\ 
B_{\varphi}(x) &=& \nabla_{x}\times A_{\varphi}(x) = -\Phi(\I k\times v_{x}) \label{magneticfield} ,
\end{eqnarray}
where $e_{\lambda}(k)$, $\lambda = 1, 2$, are, for simplicity, real photon polarization vectors satisfying
\begin{equation}
e_{\lambda}(k)\cdot e_{\mu}(k) = \delta_{\lambda\mu},\qquad k\cdot e_{\lambda}(k)=0\qquad .
\end{equation}

The smeared Coulomb potential is given by
\begin{equation}\label{coulomb}
\vcoul(x) = \frac{1}{2}\sum_{j, l=1}^{N}e_{j}e_{l}\int_{\field{R}{3}}dk\, \E^{\I k\cdot(x_{j}-x_{l})}\frac{\abs{\hat{\varphi}(k)}^{2}}{\abs{k}^{2}} \quad .
\end{equation}
Analogous expressions hold for the infrared cutoff Hamiltonian $\hepsisigma$, where the form factor $\hat{\varphi}$ is replaced by $\hat{\varphi}_{\sigma}$.

To separate more clearly the terms of different order in the Hamiltonian $\hepsi$, equation \eqref{hepsi}, it is useful to write it as
\begin{equation}\label{expansionhepsi}
\hepsi = \sum_{i=0}^{2}\epsi^{i}\hiepsi{i}
\end{equation}
where 
\begin{eqnarray}
\hiepsi{0}&:=& \sum_{j=1}^{N} \frac{1}{2m_{j}}\pop_{j}^{2} + \vcoul + \Hf, \nonumber\\
\hiepsi{1}&:=& - \sum_{j=1}^{N}\frac{e_{j}}{m_{j}}\pop_{j}\cdot \Phi(v_{x_{j}}),\\
\hiepsi{2}&:=& \sum_{j=1}^{N}\bigg(\frac{e_{j}}{2m_{j}}\sigma_{j}\cdot \Phi(\I k\times v_{x_{j}}) + \frac{e_{j}^{2}}{2m_{j}}:\Phi(v_{x_{j}})^{2}:\bigg)\quad.
\end{eqnarray}
Each of the $\hiepsi{i}$ is of order $\Or(1)$ when applied to functions of bounded kinetic energy. The coefficients for $\hepsisigma$ will be denoted by $\hiepsisigma{i}$.

As proved by Hiroshima \cite{Hi} using functional integral techniques the Hamiltonian $\hepsi$ (and analogously $\hc$) is self-adjoint on $\hilbert{H}_{0}$ for every value of the masses, charges and number of particles. Since however we study the limit $\epsi \to 0$ (respectively $c\to\infty$) it is enough for our purposes to show this using Kato theorem, like, e. g., in \cite{BFS}.

Even though the proof is well known, we repeat it because we need to show that the graph norms which appear are equivalent uniformly in $\epsi$ and $\sigma$. Moreover, the estimates which appear in the proof will be useful in propositions \ref{infraredcutoff} and lemma \ref{lemmachiinfra}.

Given $f\in L^{2}(\field{R}{3}\times\ztwo)$, we define  
\begin{equation}
\normomega{f} := (\norm{f\abs{k}^{-1/2}}_{L^{2}(\field{R}{3}\times\ztwo)}^{2} + \norm{f}^{2}_{L^{2}(\field{R}{3}\times\ztwo)})^{1/2} \quad .
\end{equation}
One has then the \emph{basic estimate} 
\begin{proposition}\label{basicestimate}
\begin{equation}
\norm{a^{\sharp}(f_{1})\cdots a^{\sharp}(f_{n})(\Hf + \id)^{-n/2}}_{\mathcal{L}(\fock)}\leq C_{n}\normomega{f_{1}}\cdots\normomega{f_{n}}\quad ,
\end{equation}
where $a^{\sharp}(f)$ can be $a(f)$ or $a^{*}(f)$.
\end{proposition}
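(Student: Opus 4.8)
The plan is to deduce the whole estimate from a single inequality controlling one annihilation operator by powers of $\Hf+\id$, and then to build everything up by iteration and Wick ordering; this is the standard route to such ``$N_\tau$''-type bounds.

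First I would prove the basic lemma: for every $s\geq 0$, every $g\in L^{2}(\field{R}{3}\times\ztwo)$, and every $\Psi$ in the domain of $(\Hf+\id)^{s+1/2}$,
\[
\norm{(\Hf+\id)^{s}a(g)\Psi}_{\fock}\leq\normomega{g}\,\norm{(\Hf+\id)^{s+1/2}\Psi}_{\fock}.
\]
The argument is componentwise on $\fock$: on the $M$-photon sector $a(g)\Psi$ is $\sqrt{M+1}$ times the integral of $g^{*}\Psi^{(M+1)}$ over the first photon variable, and since $\Hf^{(M)}(k_{1},\dots,k_{M})\leq\Hf^{(M+1)}(k,k_{1},\dots,k_{M})$ one has $(\Hf^{(M)}+1)^{s}\leq(\Hf^{(M+1)}+1)^{s}$ pointwise inside that integral; a Cauchy--Schwarz in the removed momentum with weight $\abs{k}^{-1}$, followed by summation over $M$ and the identity $(M+1)\int\abs{k_{1}}F=\int\Hf^{(M+1)}F$ (valid for symmetric $F$), yields the claim, using $\norm{g\,\abs{k}^{-1/2}}\leq\normomega{g}$.

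Next I would iterate: this gives $\norm{a(g_{1})\cdots a(g_{m})\Psi}\leq\prod_{i}\normomega{g_{i}}\,\norm{(\Hf+\id)^{m/2}\Psi}$, hence $\norm{a(g_{1})\cdots a(g_{m})(\Hf+\id)^{-m/2}}\leq\prod_{i}\normomega{g_{i}}$ and, more generally, $\norm{(\Hf+\id)^{r/2}a(g_{1})\cdots a(g_{m})(\Hf+\id)^{-(r+m)/2}}\leq\prod_{i}\normomega{g_{i}}$ for $r\geq 0$. Strings of creation operators I would treat by duality: expanding $\norm{a^{*}(h_{1})\cdots a^{*}(h_{p})\Phi}^{2}=\langle\Phi,a(h_{p})\cdots a(h_{1})a^{*}(h_{1})\cdots a^{*}(h_{p})\Phi\rangle$ and Wick-ordering the middle product with the canonical commutation relations, every term is a Wick monomial with $a\leq p$ creation and $a$ annihilation operators times $p-a$ scalars $\langle h_{i},h_{j}\rangle$; bounding each monomial by the annihilation estimate and each scalar by $\norm{h_{i}}\norm{h_{j}}$ and regrouping gives $\norm{a^{*}(h_{1})\cdots a^{*}(h_{p})(\Hf+\id)^{-p/2}}\leq C_{p}\prod_{i}\normomega{h_{i}}$. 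A general Wick monomial with $p$ creators and $q$ annihilators, $p+q\leq n$, then splits as $[a^{*}(h_{1})\cdots a^{*}(h_{p})(\Hf+\id)^{-p/2}]\cdot[(\Hf+\id)^{p/2}a(g_{1})\cdots a(g_{q})(\Hf+\id)^{-n/2}]$, a product of two operators bounded by the two preceding estimates (the surplus resolvent power on the right, since $n\geq p+q$, only helps).

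Finally I would reduce an arbitrary product $a^{\sharp}(f_{1})\cdots a^{\sharp}(f_{n})$ to Wick-ordered form via the canonical commutation relations: this produces a number of terms depending only on $n$, each a Wick monomial of the above type times contractions $\langle f_{i},f_{j}\rangle$ bounded by $\normomega{f_{i}}\normomega{f_{j}}$, and summing yields the constant $C_{n}$. To justify the manipulations I would first work on the dense set of vectors in $\fockfin$ with smooth compactly supported components (so every intermediate vector lies in $\fockfin$ and in all relevant domains of $\Hf$) and then extend by density and boundedness. I expect the only real obstacle to be presentational: because $\omega(k)=\abs{k}$ vanishes at the origin, the naive estimate by the number operator is useless, so the weight $\abs{k}^{-1/2}$ must be carried at every step and, as in the lemma, never multiplied by an unbounded function of $k$; that, together with the (routine) Wick-ordering bookkeeping, is really all there is.
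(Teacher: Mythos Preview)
The paper does not supply its own proof of this proposition: it is stated as the ``basic estimate'' immediately after the sentence ``The proofs of the statements we claim can be found in (\cite{ReSi2}, section X.7)'', and is then used without further comment. So there is nothing to compare against beyond that reference.

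Your route is the standard one and is correct. The core lemma $\norm{(\Hf+\id)^{s}a(g)\Psi}\leq\normomega{g}\,\norm{(\Hf+\id)^{s+1/2}\Psi}$ follows exactly as you say (Cauchy--Schwarz in the removed momentum with weight $\abs{k}^{-1}$, the monotonicity $(\Hf^{(M)}+1)^{s}\leq(\Hf^{(M+1)}+1)^{s}$, and the symmetrization identity), and iterates cleanly. For pure creation strings, note that simple duality gives $(\Hf+\id)^{-p/2}a^{*}(h_{p})\cdots a^{*}(h_{1})$ bounded, with the resolvent on the wrong side, so your Wick-ordering argument for $\norm{a^{*}(h_{1})\cdots a^{*}(h_{p})\Phi}^{2}$ is indeed the right move; the bookkeeping you describe (each $h_{l}$ appears exactly twice, either in a contraction or in a surviving $a/a^{*}$) closes with a combinatorial constant depending only on $p$. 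The final reduction of an arbitrary $a^{\sharp}$-product to Wick monomials plus contractions is routine, and working first on $\fockfin$ with smooth compactly supported components is the correct way to make the domain issues harmless. In short: your proof is a complete and standard derivation of a result the paper merely quotes.
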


\begin{proposition}\label{selfadjointness}
Both Hamiltonians $\hepsi$ and $\hepsisigma$ are self-adjoint on $\hilbert{H}_{0}$. Moreover the graph norms they define are equivalent to the one defined by $\hepsi_{0}$ uniformly in $\epsi$ and $\sigma$. The same holds for the graph norm defined by $(\hepsi)^{1/2}$ and $(\hepsisigma)^{1/2}$.
\end{proposition}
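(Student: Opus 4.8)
The plan is to apply the Kato–Rellich theorem, showing that the interaction terms $\epsi\hiepsi{1} + \epsi^{2}\hiepsi{2}$ (and their $\sigma$-cutoff versions) are relatively bounded with respect to $\hepsi_{0}$ with relative bound that can be made arbitrarily small, uniformly in $\epsi \in (0,1]$ and in $\sigma$. Since $\hepsi_{0} = \sum_j \frac{1}{2m_j}\pop_j^2 + \Hf$ and $\vcoul$ is a bounded multiplication operator (the integral in \eqref{coulomb} converges because of the ultraviolet cutoff at $\Lambda$ and because $\abs{k}^{-2}$ is locally integrable in $\field{R}{3}$, with the same bound for $\vcoulsigma$ since removing low modes only decreases the integral), it suffices to control the genuinely field-dependent pieces against $\hepsi_{0}$.

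First I would treat each term of $\hiepsi{1}$ and $\hiepsi{2}$ using the basic estimate of Proposition \ref{basicestimate}. For the linear term $\frac{e_j}{m_j}\pop_j\cdot\Phi(v_{x_j})$: the factor $\pop_j$ is controlled by $(\hepsi_0+\id)^{1/2}$ (indeed $\norm{\pop_j\psi}^2 \le 2m_j\langle\psi,\hepsi_0\psi\rangle + C\norm{\psi}^2$ after adding a constant to make $\hepsi_0$ positive), and $\Phi(v_{x_j})(\Hf+\id)^{-1/2}$ is bounded by $C\normomega{v_{x_j}}$, which is finite and \emph{independent of $x_j$ and of $\sigma$} since $\normomega{v_{x_j}}^2 = \int_{\abs{k}\le\Lambda}(\abs{k}^{-2}+\abs{k}^{-1})\abs{\hat\varphi(k)}^2\,dk$ (again smaller for the $\sigma$-cutoff). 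One must be a little careful that $\pop_j$ and $\Phi(v_{x_j})$ do not commute — $[\pop_j,\Phi(v_{x_j})] = \epsi\Phi(\partial_{x_j}v_{x_j}) = \epsi\Phi(-\I k\,v_{x_j})$ — but the commutator is again of the same form with an extra factor $\epsi$ and $\norm{k\,v_{x_j}}_\omega \le \Lambda\normomega{v_{x_j}}$, so it only improves the estimate. For the spin term $\frac{e_j}{2m_j}\sigma_j\cdot\Phi(\I k\times v_{x_j})$ one uses $\norm{\sigma_j}=1$ and $\normomega{\I k\times v_{x_j}}\le\Lambda\normomega{v_{x_j}}$, and this term is already relatively $\hepsi_0$-bounded with bound $0$ (it needs only $(\Hf+\id)^{1/2}$). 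For the quadratic term $\frac{e_j^2}{2m_j}\!:\!\Phi(v_{x_j})^2\!:$, the normal ordering removes the constant, leaving $a^*a^* + 2a^*a + aa$ terms each controlled by $C\normomega{v_{x_j}}^2(\Hf+\id)$, so this term is $\Hf$-bounded, hence $\hepsi_0$-bounded, with bound $C\epsi^2\normomega{v_{x_j}}^2$ that is $\le$ any prescribed $b<1$ for $\epsi$ small. Multiplying through by the $\epsi$ and $\epsi^2$ prefactors, the total relative bound of the interaction is $O(\epsi)$, which for $\epsi$ below some $\epsi_0$ is strictly less than $1$; for the (finitely many, if one wished uniformity down to $\epsi=\epsi_0$, or simply by the above for all $\epsi\le1$ after choosing constants) remaining range one notes all constants are $\epsi$- and $\sigma$-independent. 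Kato–Rellich then gives self-adjointness on $D(\hepsi_0)=\hilbert{H}_0$, and the relative boundedness with uniform constants gives at once the equivalence of graph norms: $c_1\norm{(\hepsi_0+\id)\psi} \le \norm{(\hepsi+\I)\psi} \le c_2\norm{(\hepsi_0+\id)\psi}$ with $c_1,c_2$ independent of $\epsi,\sigma$, and likewise with $\hepsisigma$.

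For the square-root statement, I would use the fact that if two nonnegative self-adjoint operators $H$ and $H_0$ (after a common shift to make them $\ge 1$) have equivalent graph norms on a common form core, then $D(H^{1/2})=D(H_0^{1/2})$ with equivalent graph norms; concretely, the operator inequality $c_1(H_0+c) \le H+c \le c_2(H_0+c)$ obtained from the relative-boundedness estimates (via a form-version of the KLMN argument, using that the interaction is also $\hepsi_0$-form-bounded with bound $<1$ uniformly — which follows from the same basic estimate, interpolating) passes to square roots by the operator monotonicity of $t\mapsto t^{1/2}$, yielding $\sqrt{c_1}\norm{(H_0+c)^{1/2}\psi} \le \norm{(H+c)^{1/2}\psi} \le \sqrt{c_2}\norm{(H_0+c)^{1/2}\psi}$ with the same uniform constants, and identically for $\hepsisigma$.

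The main obstacle is not any single estimate but bookkeeping the \emph{uniformity}: one must verify at every step that the constants produced — in the Coulomb bound, in the $\normomega{\cdot}$ norms of the form factors, in the constants $C_n$ of Proposition \ref{basicestimate}, and in the shift needed to make $\hepsi_0$ positive — do not depend on $\epsi$ or on $\sigma$. The key points making this work are that $\normomega{v_{x_j}}$ is bounded uniformly in the particle position $x_j$ (the $\E^{-\I k\cdot x_j}$ phase has modulus one) and is only \emph{decreased} by the infrared cutoff $\sigma$, and that the kinetic term $\sum_j\frac{1}{2m_j}\pop_j^2$ has an $\epsi$-independent domain since the $\epsi$ in $\pop_j=-\I\epsi\nabla_{x_j}$ affects the operator but not its domain $H^2$. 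I would close by remarking that the explicit estimates assembled here — in particular the bounds on $\Phi(v_{x_j})(\Hf+\id)^{-1/2}$ and $:\!\Phi(v_{x_j})^2\!:(\Hf+\id)^{-1}$ uniformly in $x_j$ and $\sigma$ — are exactly what is invoked later in Proposition \ref{infraredcutoff} and Lemma \ref{lemmachiinfra}.
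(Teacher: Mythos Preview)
The proposal is correct and follows essentially the same approach as the paper: both apply Kato--Rellich by bounding each interaction term via the basic estimate of Proposition~\ref{basicestimate}, verifying that all constants (in particular $\normomega{v_{x_j,\sigma}}$) are uniform in $x_j$, $\epsi$, and $\sigma$. Two minor differences: the paper exploits the Coulomb gauge condition $k\cdot e_\lambda(k)=0$ to write $\pop_j\cdot A_{\varphi_\sigma}(x_j)=\Phi(v_{x_j,\sigma})\cdot\pop_j$ directly (so the commutator you carefully bound in fact vanishes), and the paper does not spell out the square-root statement at all---your argument via operator monotonicity of $t\mapsto t^{1/2}$ is a genuine addition that the paper leaves implicit.
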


\begin{proof} \emph{(We give the proof for $\hepsisigma$, the one for $\hepsi$ is the same)}

The regularized Coulomb potential is a bounded function, therefore for it the statement is trivial.

We choose a vector $\Psi$ in a core of $\hepsi_{0}$ made up of smooth functions with compact support both in $x$ and $k$. 

For the term of order $\epsi$ we get then
\begin{equation*}\begin{split}
& \bigg\lVert- \epsi\sum_{j=1}^{N}\frac{e_{j}}{m_{j}}\pop_{j}\cdot A_{\varphi_{\sigma}}(x_{j})\Psi\bigg\rVert_{\hilbert{H}}=\bigg\lVert- \epsi\sum_{j=1}^{N}\frac{e_{j}}{m_{j}}\Phi(v_{x_{j},\sigma})\cdot\pop_{j} \Psi\bigg\rVert_{\hilbert{H}}\leq\\
& \leq\epsi\sum_{j=1}^{N}\frac{\abs{e_{j}}}{m_{j}}\sum_{\alpha=1}^{3}\norm{\Phi(v_{x_{j},\sigma}^{\alpha})\pop_{j}^{\alpha}\Psi}\leq \epsi\sum_{j=1}^{N}\frac{\abs{e_{j}}}{m_{j}}\sum_{\alpha=1}^{3}\norm{\Phi(v_{x_{j},\sigma}^{\alpha})(\Hf+\id)^{-1/2}}_{\mathcal{L}(\hilbert{H})}\cdot\\
&\cdot \norm{(\Hf +\id)^{1/2}\pop_{j}^{\alpha}\Psi}_{\hilbert{H}}\leq\\
& \leq C\epsi\sum_{j=1}^{N}\frac{\abs{e_{j}}}{m_{j}}\normomega{v_{x_{j}, \sigma}}\norm{\Psi}_{\hilbert{H}_{0}},
                 \end{split}
\end{equation*}
so for $\epsi$ sufficiently small this term is Kato small with respect to the free Hamiltonian, with a constant uniformly bounded in $\epsi$ and $\sigma$.

An analogous estimate holds for the term with the magnetic field. For the remaining one we have
\begin{equation*}\begin{split}
& \bigg\lVert\sum_{j=1}^{N}\epsi^{2}\frac{e_{j}^{2}}{2m_{j}}:A_{\varphi_{\sigma}}(x_{j})^{2}:\Psi\bigg\rVert_{\hilbert{H}}=\bigg\lVert\epsi^{2}\sum_{j=1}^{N}\frac{e_{j}^{2}}{2m_{j}}:\Phi(v_{x_{j},\sigma})^{2}:\Psi\bigg\rVert_{\hilbert{H}}\leq\\
&\leq C\epsi^{2}\sum_{j=1}^{N}\frac{e_{j}^{2}}{2m_{j}}\normomega{v_{x_{j},\sigma}}^{2}\norm{(\Hf+\id)\Psi}_{\hilbert{H}},
\end{split}
\end{equation*}
which completes the proof.
\end{proof}

\begin{proposition}\label{infraredcutoff} If $\sigma(\epsi)<\epsi^{2}$ then
\begin{equation}\label{cutofftimeev}
\norm{\E^{-\I t \hepsi/\epsi} - \E^{-\I t \hepsisigma/\epsi}}_{\mathcal{L}(\hilbert{H}_{0}, \hilbert{H})} \leq C\abs{t}\sigma^{1/2}
\end{equation}
\end{proposition}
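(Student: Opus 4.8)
The plan is to estimate the difference of the two unitary groups by the standard Duhamel (variation-of-constants) formula, reducing everything to a bound on $(\hepsi - \hepsisigma)$ sandwiched between resolvents of the free Hamiltonian. Concretely, for $\Psi\in\hilbert{H}_{0}$ I would write
\begin{equation*}
(\E^{-\I t\hepsi/\epsi} - \E^{-\I t\hepsisigma/\epsi})\Psi = -\frac{\I}{\epsi}\int_{0}^{t}ds\; \E^{-\I(t-s)\hepsi/\epsi}\,(\hepsi - \hepsisigma)\,\E^{-\I s\hepsisigma/\epsi}\Psi,
\end{equation*}
which is legitimate because both Hamiltonians are self-adjoint on the common domain $\hilbert{H}_{0}$ (Proposition \ref{selfadjointness}). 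The difference $\hepsi - \hepsisigma$ involves only the interaction terms, with $v_{x_{j}}$ replaced by $v_{x_{j}} - v_{x_{j},\sigma}$, i.e. the form factor $\hat\varphi$ restricted to the low-momentum shell $\abs{k}<\sigma$; in particular $\hepsi - \hepsisigma = \epsi \hiepsi{1}^{\mathrm{low}} + \epsi^{2}\hiepsi{2}^{\mathrm{low}}$ where the superscript indicates that all field operators carry the infrared-truncated form factor $f_{\sigma}^{\mathrm{low}}(k,\lambda) := \frac{e_{\lambda}(k)}{\abs{k}^{1/2}}\hat\varphi(k)\mathbf{1}_{\abs{k}<\sigma}$.

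The key point is that $\normomega{v^{\mathrm{low}}_{x_{j},\sigma}}$ is small: since $\hat\varphi$ is a bounded constant on $\abs{k}<\sigma$,
\begin{equation*}
\norm{f^{\mathrm{low}}_{\sigma}\abs{k}^{-1/2}}^{2}_{L^{2}} \leq C\int_{\abs{k}<\sigma}\frac{dk}{\abs{k}^{2}} = C'\sigma, \qquad \norm{f^{\mathrm{low}}_{\sigma}}^{2}_{L^{2}} \leq C\int_{\abs{k}<\sigma}\frac{dk}{\abs{k}} = C''\sigma^{2},
\end{equation*}
so $\normomega{v^{\mathrm{low}}_{x_{j},\sigma}} \leq C\sigma^{1/2}$ for $\sigma<1$. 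Feeding this into the basic estimate (Proposition \ref{basicestimate}) exactly as in the proof of Proposition \ref{selfadjointness}, the first-order term obeys $\norm{\epsi\hiepsi{1}^{\mathrm{low}}\Phi}_{\hilbert{H}} \leq C\epsi\,\sigma^{1/2}\,\norm{\Phi}_{\hilbert{H}_{0}}$ for $\Phi$ in a core, and the second-order (magnetic plus $:A^2:$) term obeys $\norm{\epsi^{2}\hiepsi{2}^{\mathrm{low}}\Phi}_{\hilbert{H}} \leq C\epsi^{2}(\sigma^{1/2}+\sigma)\norm{\Phi}_{\hilbert{H}_{0}} \leq C\epsi^{2}\sigma^{1/2}\norm{\Phi}_{\hilbert{H}_{0}}$; here one should note that $\normomega{\I k\times v^{\mathrm{low}}_{x_{j},\sigma}}$ picks up an extra factor of $\abs{k}\leq\sigma$ inside the shell, which only improves the bound. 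Hence as an operator from $\hilbert{H}_{0}$ to $\hilbert{H}$,
\begin{equation*}
\norm{\hepsi - \hepsisigma}_{\mathcal{L}(\hilbert{H}_{0},\hilbert{H})} \leq C(\epsi + \epsi^{2})\sigma^{1/2} \leq C\epsi\,\sigma^{1/2}.
\end{equation*}

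To close the Duhamel estimate I need $\E^{-\I s\hepsisigma/\epsi}\Psi$ to stay in $\hilbert{H}_{0}$ with controlled $\hilbert{H}_{0}$-norm; this follows because $\E^{-\I s\hepsisigma/\epsi}$ commutes with $\hepsisigma$ and the $\hepsisigma$-graph norm is equivalent to the $\hepsi_{0}$-graph norm \emph{uniformly in $\epsi$ and $\sigma$} (Proposition \ref{selfadjointness}), so $\norm{\E^{-\I s\hepsisigma/\epsi}\Psi}_{\hilbert{H}_{0}} \leq C\norm{\Psi}_{\hilbert{H}_{0}}$ with $C$ independent of $s,\epsi,\sigma$. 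Combining, the integrand is bounded in $\hilbert{H}$-norm by $\epsi^{-1}\cdot C\epsi\sigma^{1/2}\cdot C\norm{\Psi}_{\hilbert{H}_{0}} = C\sigma^{1/2}\norm{\Psi}_{\hilbert{H}_{0}}$, and integrating over $s\in[0,t]$ gives the claimed bound $C\abs{t}\sigma^{1/2}$. The only subtlety — and the place I would be most careful — is keeping all constants uniform in $\epsi$ and $\sigma$: this is precisely why Proposition \ref{selfadjointness} was stated with that uniformity, and it is what makes the hypothesis $\sigma(\epsi)<\epsi^{2}$ irrelevant to \emph{this} estimate (it is only needed elsewhere, to make the error $\abs{t}\sigma^{1/2}$ negligible on the time scale $\Or(\epsi^{-1})$ against the other errors). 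A minor technical point is that the Duhamel formula is first derived on the core of smooth compactly supported vectors where all manipulations are justified, and then extended to all of $\hilbert{H}_{0}$ by density and the uniform bounds just established.
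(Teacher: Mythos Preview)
Your approach is essentially the paper's own proof: Duhamel formula, then term-by-term estimation of $\hepsi-\hepsisigma$ in $\mathcal{L}(\hilbert{H}_{0},\hilbert{H})$ using Proposition~\ref{basicestimate} and the uniform graph-norm equivalence from Proposition~\ref{selfadjointness}.

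There is, however, one genuine omission. You write that ``the difference $\hepsi-\hepsisigma$ involves only the interaction terms'', but the smeared Coulomb potential $\vcoul$ in \eqref{coulomb} is itself built from the form factor $\hat\varphi$, so there is a zeroth-order piece $\vcoul-\vcoulsigma$ that you have dropped. A direct computation gives $\norm{\vcoul-\vcoulsigma}_{\mathcal{L}(\hilbert{H})}=\Or(\sigma)$. After the factor $\epsi^{-1}$ from Duhamel this contributes $\Or(\sigma/\epsi)$ to the integrand, and one needs $\sigma/\epsi\leq C\sigma^{1/2}$, i.e.\ $\sigma^{1/2}\leq C\epsi$, to absorb it into the stated bound. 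This is exactly the hypothesis $\sigma(\epsi)<\epsi^{2}$, so your parenthetical remark that the hypothesis is ``irrelevant to \emph{this} estimate'' is incorrect: it is precisely what controls the Coulomb contribution.

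A smaller point in the same spirit: the difference of the $:A_{\varphi}^{2}:$ terms is not simply $:\Phi(v^{\mathrm{low}})^{2}:$ with both legs carrying the low-momentum form factor, but rather the cross terms $\Phi(v_{\sigma})\Phi(v^{\mathrm{low}})+\Phi(v^{\mathrm{low}})\Phi(v_{\sigma})+\Phi(v^{\mathrm{low}})^{2}$. The dominant cross terms still produce only one factor of $\normomega{v^{\mathrm{low}}}=\Or(\sigma^{1/2})$, so your final order $\Or(\epsi^{2}\sigma^{1/2})$ is correct, but the description ``all field operators carry the infrared-truncated form factor'' is not.
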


\begin{proof}
From the previous proposition we know that both Hamiltonians are self-adjoint on $\hilbert{H}_{0}$, so, given $\Psi\in\hilbert{H}_{0}$, we can apply Duhamel formula to get
\begin{equation*}\begin{split}
& \norm{(\E^{-\I t \hepsi/\epsi} - \E^{-\I t \hepsisigma/\epsi})\Psi}_{\hilbert{H}}\leq\frac{1}{\epsi}\int_{0}^{t}ds\, \norm{(\hepsi - \hepsisigma)\E^{-\I s\hepsi/\epsi}\Psi}_{\hilbert{H}} \quad .
\end{split}
\end{equation*}

Putting $\Psi_{s}:= \E^{-\I s\hepsi/\epsi}\Psi$,  the difference of the two Hamiltonians is
\begin{equation}\label{estimatehepsisigma}\begin{split}
(\hepsi - \hepsisigma)\Psi_{s} &= (\vcoul - \vcoulsigma)\Psi_{s} -\epsi\sum_{j=1}^{N}\frac{e_{j}}{m_{j}}\pop_{j}\cdot\Phi(\id_{(0, \sigma)}(k)v_{x_{j}})\Psi_{s}+\\
&- \epsi^{2}\sum_{j=1}^{N}\frac{e_{j}}{2m_{j}}\sigma_{j}\cdot \Phi(-\I k\times \id_{(0, \sigma)}(k)v_{x_{j}})\Psi_{s} +\\ 
&+ \epsi^{2}\sum_{j=1}^{N}\frac{e_{j}^{2}}{2m_{j}}[A_{\varphi}(x_{j})^{2} - A_{\varphi_{\sigma}}(x_{j})^{2}]\Psi_{s} \quad.
\end{split}
\end{equation}

Using the explicit expression \eqref{coulomb}, the term with the Coulomb potential gives
\begin{equation*}\begin{split}
\abs{\vcoul(x) - \vcoulsigma(x)} &\leq \frac{1}{2}\sum_{j, l=1}^{N}\abs{e_{j}e_{l}}\int \frac{dk}{\abs{k}^{2}} \bigg\lvert\abs{\hat{\varphi}}^{2} - \abs{\hat{\varphi}_{\sigma}}^{2}\bigg\rvert=\\
&=\frac{1}{12\pi^{2}}\sum_{j,l=1}^{N}\abs{e_{j}e_{l}}\sigma
\end{split}
\end{equation*}
\begin{equation}\label{estimatecoulomb}
\Rightarrow \norm{\vcoul - \vcoulsigma}_{\mathcal{L}(\hilbert{H})}=\Or(\sigma)\quad.
\end{equation}
For the term of order $\epsi$, proceeding as in the proof of proposition \ref{selfadjointness} we get that
\begin{equation*}
\bigg\lVert-\epsi\sum_{j=1}^{N}\frac{e_{j}}{m_{j}}\pop_{j}\cdot\Phi(\id_{(0, \sigma)}(k)v_{x_{j}})\Psi_{s}\bigg\rVert_{\hilbert{H}}\leq \tilde{C}\epsi\sum_{j=1}^{N}\frac{\abs{e_{j}}}{m_{j}}\normomega{\id_{(0, \sigma)}(k)v_{x_{j}, \sigma}}\norm{\Psi_{s}}_{\hilbert{H}_{0}} \quad .
\end{equation*}

From the same proposition it follows that the graph norm associated to $\hepsi_{0}$ and the one associated to $\hepsi$ are equivalent uniformly in $\epsi$ and $\sigma$, therefore
\begin{equation}\label{estimateepsisigma}\begin{split}
& \bigg\lVert-\epsi\sum_{j=1}^{N}\frac{e_{j}}{m_{j}}\pop_{j}\cdot\Phi(\id_{(0, \sigma)}(k)v_{x_{j}})\Psi_{s}\bigg\rVert_{\hilbert{H}}\leq C\epsi\sum_{j=1}^{N}\frac{\abs{e_{j}}}{m_{j}}\normomega{\id_{(0, \sigma)}(k)v_{x_{j}, \sigma}}\norm{\Psi}_{\hilbert{H}_{0}} = \\
&=\Or(\epsi\sigma^{1/2})\norm{\Psi}_{\hilbert{H}_{0}} \quad .
\end{split}
\end{equation}

The same reasoning holds for the term containing the spin, which has however a $\abs{k}$ more, which gives in the end
\begin{equation}\label{infraredspin}
\bigg\lVert- \epsi^{2}\sum_{j=1}^{N}\frac{e_{j}}{2m_{j}}\sigma_{j}\cdot \Phi(-\I k\times \id_{(0, \sigma)}(k)v_{x_{j}})\Psi_{s}\bigg\rVert_{\hilbert{H}}= \Or(\epsi^{2}\sigma^{3/2})\norm{\Psi}_{\hilbert{H}_{0}}\quad .
\end{equation}

Concerning the last term we have
\begin{equation*}\begin{split}
& \epsi^{2}\sum_{j=1}^{N}\frac{e_{j}^{2}}{2m_{j}}[A_{\varphi}(x_{j})^{2} - A_{\varphi_{\sigma}}(x_{j})^{2}]\Psi_{s}=\epsi^{2}\sum_{j=1}^{N}\frac{e_{j}^{2}}{2m_{j}}[\Phi(v_{x_{j}})^{2} - \Phi(v_{x_{j}, \sigma})^{2}]\Psi_{s}=\\
&=\epsi^{2}\sum_{j=1}^{N}\frac{e_{j}^{2}}{2m_{j}}\{[\Phi(v_{x_{j},\sigma}) + \Phi(\id_{(0, \sigma)}(k)v_{x_{j}})]^{2} - \Phi(v_{x_{j}, \sigma})^{2}\}\Psi_{s}=\\
&=\epsi^{2}\sum_{j=1}^{N}\frac{e_{j}^{2}}{2m_{j}}[\Phi(v_{x_{j}, \sigma})\Phi(\id_{(0, \sigma)}(k)v_{x_{j}}) + \Phi(\id_{(0, \sigma)}(k)v_{x_{j}})\Phi(v_{x_{j}, \sigma}) +\\
&+ \Phi(\id_{(0, \sigma)}(k)v_{x_{j}})^{2}]\Psi_{s}\quad .
\end{split}
\end{equation*}

Using again the basic estimate in proposition \ref{basicestimate} we get, for example,
\begin{equation}\label{infraredasquare}\begin{split}
& \norm{\Phi(v_{x_{j}, \sigma})\Phi(\id_{(0, \sigma)}(k)v_{x_{j}})\Psi_{s}}\leq C\normomega{v_{x_{j},\sigma}}\normomega{\id_{(0, \sigma)}(k)v_{x_{j}}}\norm{\Psi_{s}}_{\hilbert{H}_{0}}=\\
&=\Or(\sigma^{1/2})\norm{\Psi}_{\hilbert{H}_{0}},
\end{split}
\end{equation}
by the same reasoning we used for the terms of order $\epsi$.
\end{proof}

\begin{lemma}\label{lemmachiinfra}
Given a function $\chi\in\coinf(\field{R}{})$ and assuming $\sigma(\epsi)<\epsi^{2}$, then
\begin{equation}\label{cutoffchi}
\norm{\chi(\hepsi) - \chi(\hepsisigma)}_{\mathcal{L}(\hilbert{H})}\leq C\epsi\sigma^{1/2}
\end{equation}
\end{lemma}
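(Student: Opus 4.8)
The plan is to use the Helffer--Sj\"ostrand formula to reduce the claim to a bound on the difference of the resolvents of $\hepsi$ and $\hepsisigma$, and then to feed in the relative bounds already worked out in the proof of Proposition~\ref{infraredcutoff}. Concretely, fix an almost analytic extension $\tilde\chi$ of $\chi$, supported in a bounded neighbourhood of $\mathrm{supp}\,\chi$ in $\field{C}{}$ and satisfying $\abs{\bar\partial\tilde\chi(z)}\leq C_{N}\abs{\mathrm{Im}\,z}^{N}$ for every $N$. Since both Hamiltonians are self-adjoint by Proposition~\ref{selfadjointness}, one has
\begin{equation*}
\chi(\hepsi) - \chi(\hepsisigma) = \frac{1}{\pi}\int_{\field{C}{}}\bar\partial\tilde\chi(z)\,\big[(\hepsi - z)^{-1} - (\hepsisigma - z)^{-1}\big]\,dx\,dy ,
\end{equation*}
and by the resolvent identity $(\hepsi - z)^{-1} - (\hepsisigma - z)^{-1} = (\hepsi - z)^{-1}(\hepsisigma - \hepsi)(\hepsisigma - z)^{-1}$, so everything comes down to estimating the operator norm of the right-hand side.

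The key input is the relative bound
\begin{equation*}
\norm{(\hepsi - \hepsisigma)\Psi}_{\hilbert{H}} \leq C\epsi\sigma^{1/2}\,\norm{\Psi}_{\hilbert{H}_{0}} ,
\end{equation*}
which is exactly what the estimates \eqref{estimatehepsisigma}--\eqref{infraredasquare} in the proof of Proposition~\ref{infraredcutoff} provide: the photon-linear term gives $\Or(\epsi\sigma^{1/2})$ by \eqref{estimateepsisigma}, the spin term $\Or(\epsi^{2}\sigma^{3/2})$ by \eqref{infraredspin}, the $A^{2}$-difference $\Or(\epsi^{2}\sigma^{1/2})$ by \eqref{infraredasquare}, and the Coulomb difference $\Or(\sigma)$ by \eqref{estimatecoulomb}, the last being absorbed into $\Or(\epsi\sigma^{1/2})$ precisely because the hypothesis $\sigma(\epsi)<\epsi^{2}$ gives $\sigma<\epsi\sigma^{1/2}$. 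Here I also use Proposition~\ref{selfadjointness} to pass freely between the graph norms of $\hepsi_{0}$, $\hepsi$ and $\hepsisigma$ with constants uniform in $\epsi$ and $\sigma$.

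Composing this bound with $(\hepsisigma - z)^{-1}$ and controlling the $\hepsisigma$-graph norm of $(\hepsisigma - z)^{-1}$ by means of $\hepsisigma(\hepsisigma - z)^{-1} = \id + z(\hepsisigma - z)^{-1}$, one gets
\begin{equation*}
\norm{(\hepsi - \hepsisigma)(\hepsisigma - z)^{-1}}_{\mathcal{L}(\hilbert{H})} \leq C\epsi\sigma^{1/2}\,\frac{1+\abs{z}}{\abs{\mathrm{Im}\,z}} ,
\end{equation*}
and then, using $\norm{(\hepsi - z)^{-1}}_{\mathcal{L}(\hilbert{H})}\leq\abs{\mathrm{Im}\,z}^{-1}$,
\begin{equation*}
\norm{(\hepsi - z)^{-1}(\hepsisigma - \hepsi)(\hepsisigma - z)^{-1}}_{\mathcal{L}(\hilbert{H})} \leq C\epsi\sigma^{1/2}\,\frac{1+\abs{z}}{\abs{\mathrm{Im}\,z}^{2}} .
\end{equation*}
Inserting this into the Helffer--Sj\"ostrand integral, the factor $1+\abs{z}$ is bounded on the (bounded) support of $\tilde\chi$, and taking $N=2$ in the bound on $\bar\partial\tilde\chi$ makes the integrand bounded there, so the integral is finite and yields $\norm{\chi(\hepsi) - \chi(\hepsisigma)}_{\mathcal{L}(\hilbert{H})}\leq C\epsi\sigma^{1/2}$.

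There is no real obstacle here: the single point that requires care is that the relative bound on $\hepsi - \hepsisigma$ be genuinely of order $\epsi\sigma^{1/2}$ uniformly in $\epsi$ and $\sigma$, but this has already been established in the proof of Proposition~\ref{infraredcutoff}, once the $\Or(\sigma)$ Coulomb contribution is absorbed using $\sigma<\epsi^{2}$; the rest is routine resolvent bookkeeping.
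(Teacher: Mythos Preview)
Your proposal is correct and follows essentially the same route as the paper: Helffer--Sj\"ostrand representation, resolvent identity, and the relative bound $\norm{\hepsi-\hepsisigma}_{\mathcal{L}(\hilbert{H}_{0},\hilbert{H})}=\Or(\epsi\sigma^{1/2})$ drawn from the estimates in Proposition~\ref{infraredcutoff}. The only cosmetic differences are that the paper writes the resolvent identity in the opposite order and absorbs your factor $(1+\abs{z})$ directly into the constant (since $z$ ranges over the bounded support of the almost analytic extension); your explicit remark that $\sigma<\epsi^{2}$ is what allows the $\Or(\sigma)$ Coulomb term to be dominated by $\epsi\sigma^{1/2}$ is a point the paper leaves implicit.
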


\begin{proof}
Using the Hellfer-Sj\"ostrand formula (see, e. g., \cite{DiSj} chapter $8$), given a self-adjoint operator $A$, we can write 
\begin{equation}\label{hellfersjostrand}
\chi(A) = \frac{1}{\pi}\int_{\field{R}{2}}dx dy\ \bar{\partial}\chi^{a}(z)(A-z)^{-1}, \quad z:=x+\I y,
\end{equation}
where $\chi^{a}\in \coinf(\field{C}{})$ is an almost analytic extension of $\chi$, which satisfies the properties
\begin{eqnarray*}
\forall \bar{N}\in \field{N}{}\quad \exists\,D_{\bar{N}}&:& \abs{\bar{\partial}\chi^{a}}\leq D_{\bar{N}}\abs{\Im z}^{\bar{N}},\\
\chi^{a}_{|_{\field{R}{}}}=\chi \quad .
\end{eqnarray*}
(For the explicit construction of such a $\chi^{a}$ see \cite{DiSj}).

Applied to our case \eqref{hellfersjostrand} yields
\begin{equation*}
\chi(\hepsi) - \chi(\hepsisigma) = \frac{1}{\pi}\int_{\field{R}{2}}dxdy\, \bar{\partial}\chi^{a}(z)\big[(\hepsi-z)^{-1} - (\hepsisigma-z)^{-1}\big].
\end{equation*}
Since both Hamiltonians are self-adjoint on $\hilbert{H}_{0}$ we have 
\begin{eqnarray}\label{differenceresolvents}
 (\hepsi-z)^{-1} - (\hepsisigma-z)^{-1} &=& (\hepsisigma-z)^{-1}(\hepsisigma - \hepsi)(\hepsi-z)^{-1}\,,
 \end{eqnarray}
 and hence
 \begin{eqnarray*}\lefteqn{
 \norm{\chi(\hepsi) - \chi(\hepsisigma)}_{\mathcal{L}(\hilbert{H})} \leq}\\
& \leq&\frac{1}{\pi} \int_{\field{R}{2}} dxdy\, \abs{\bar{\partial}\chi^{a}(z)}\norm{(\hepsisigma-z)^{-1}}_{\mathcal{L}(\hilbert{H})}\norm{(\hepsisigma - \hepsi)(\hepsi-z)^{-1}}_{\mathcal{L}(\hilbert{H})}.
\end{eqnarray*}
In addition we have that
\begin{equation}\label{resolvbounded}
\norm{(\hepsisigma-z)^{-1}}_{\mathcal{L}(\hilbert{H})} \leq \frac{C}{\abs{\Im z}}.
\end{equation}
This follows because $\hilbert{H}_{0}$ is dense in the domain of $H^{\epsi=0, \sigma=0}$, and for every $\Psi\in\hilbert{H}_{0}$
\begin{equation*}
\hepsisigma\Psi \to H^{\epsi=0, \sigma=0}\Psi \quad\mbox{as}\quad  (\epsi, \sigma)\to (0, 0) \, .
\end{equation*}
According to theorem VIII.25 \cite{ReSi1}, this implies that 
\begin{equation*}
(\hepsisigma-z)^{-1}\Psi \to (H^{\epsi=0, \sigma=0}-z)^{-1}\Psi,
\end{equation*}
therefore $\abs{\Im z}\norm{(\hepsisigma-z)^{-1}\Psi}$ is bounded for every $\Psi$ and the uniform boundedness principle gives \eqref{resolvbounded}.

For the second norm we find that for $z\in \mathrm{supp}\,\chi^{a}$
\begin{eqnarray*}\lefteqn{\hspace{-1cm}
 \norm{(\hepsisigma - \hepsi)(\hepsi-z)^{-1}}_{\mathcal{L}(\hilbert{H})}\leq}\\&\leq &\norm{(\hepsisigma - \hepsi)}_{\mathcal{L}(\hilbert{H}_{0}, \hilbert{H})}\cdot\norm{(\hepsi-z)^{-1}}_{\mathcal{L}(\hilbert{H}, \hilbert{H}_{0})}\\
& \leq& \frac{C}{\abs{\Im z}}\norm{(\hepsisigma - \hepsi)}_{\mathcal{L}(\hilbert{H}_{0}, \hilbert{H})}\quad.  
\end{eqnarray*}

The right-hand side was already estimated in the previous proposition (see equation \eqref{estimatehepsisigma} and the following calculations), the only difference being that here we have to multiply the final result by $\epsi$.
\end{proof}

\section{Construction of the dressing operator}\label{unitary}

In this section we construct the unitary dressing operator $\uepsisigma$. We apply to the Pauli-Fierz Hamiltonian the general procedure explained in some detail in \cite{TeTe} and, for the case with spectral gap, in \cite{Te}. All the calculations expounded in section \ref{unitaryformal} are formal, and they serve as a guide for the rigorous definition of $\uepsisigma$ given in section \ref{unitaryrigorous}.

\subsection{The formal procedure}\label{unitaryformal}

The main idea is to build an approximate projector, $\proj{(1)}{}$, which satisfies formally
\begin{equation*}
(\proj{(1)}{})^{2} - \proj{(1)}{} = \Or(\epsi^{2}),\qquad\qquad [\proj{(1)}{}, \hepsi] = \Or(\epsi^{2}).
\end{equation*} 
Integrating over time the second equation one gets in a loose sense that $[\E^{-\I t\hepsi/\epsi}, \proj{(1)}{}] = \Or(\epsi\abs{t})$.

The projector $\proj{(1)}{}$ is found using an iterative procedure, which assumes that one can expand it in powers of $\epsi$,
\begin{equation*}
\proj{(1)}{} = \proj{}{0} + \epsi\proj{}{1},
\end{equation*}
where the coefficient $\proj{}{0}$ is a known input and must commute with the coefficient of order zero in the expansion of the Hamiltonian $\hepsi$, see equation \eqref{expansionhepsi}. As it turns out, the procedure does not work directly for $\hepsi$, but only for the infrared cutoff Hamiltonian $\hepsisigma$.

An obvious choice for $\proj{}{0}$ is 
\begin{equation*}
\proj{}{0} = \qm,
\end{equation*}
which satisfies $[\hiepsisigma{0}, \qm] = 0$.

Proceeding now in the same way as described in \cite{TeTe} we get a formal expression for the first order almost projection given by
\begin{eqnarray}\label{almostproj}
\proj{(1)}{M}&:=& \qm + \epsi\proj{M}{1},\nonumber\\
\proj{M}{1}&:=& \bigg[\qm, \I\sum_{j=1}^{N}\frac{e_{j}}{m_{j}}\pop_{j}\cdot\Phi\bigg(\frac{\I v_{x_{j},\sigma}(\lambda, k)}{\abs{k}}\bigg)\bigg] \quad.\nonumber
\end{eqnarray}

For brevity, we put from now on
\begin{equation}
\Phi_{j, \sigma}:= \Phi\bigg(\frac{\I v_{x_{j},\sigma}(\lambda, k)}{\abs{k}}\bigg) \quad .
\end{equation}

It is clear from equation \eqref{almostproj} that 
\begin{equation*}
\qm\proj{M}{1}\qm = (\id - \qm)\proj{M}{1}(\id - \qm) = 0,
\end{equation*}
so 
\begin{equation*}
(\proj{(1)}{M})^{2} - \proj{(1)}{M} = \Or(\epsi^{2}) \quad .
\end{equation*}

$\proj{(1)}{M}$ is also almost invariant for the total dynamics, in the sense that
\begin{equation*}\begin{split}
& [\proj{(1)}{M}, \hepsisigma] = [\qm, \hiepsisigma{0}] + \epsi[\proj{M}{1}, \hiepsisigma{0}] + \epsi[\qm, \hiepsisigma{1}] + \epsi^{2}[\proj{M}{1}, \hiepsisigma{1}] +\\
&+ \epsi^{2}[\qm, \hiepsisigma{2}] +\epsi^{3}[\proj{M}{1}, \hiepsisigma{2}]=\Or(\epsi^{2}\sqrt{\log(\sigma^{-1})})\quad.
\end{split}
\end{equation*}

To justify this claim we note that
\begin{equation*}\begin{split}
& [\proj{M}{1}, \hiepsisigma{0}] + [\qm, \hiepsisigma{1}] = \I\sum_{j=1}^{N}\frac{e_{j}}{m_{j}}\big[\pop_{j}\cdot[\qm, \phijsigma], \hiepsisigma{0}\big] - \sum_{j=1}^{N}\frac{e_{j}}{m_{j}}\pop_{j}\cdot\\
&\cdot[\qm, \Phi(v_{x_{j}, \sigma})] = \I\sum_{j,l=1}^{N}\frac{e_{j}}{m_{j}}\big[\pop_{j}\cdot[\qm, \phijsigma], \frac{\pop_{l}^{2}}{2}\big] +\\
&+ \I\sum_{j=1}^{N}\frac{e_{j}}{m_{j}}\big[\pop_{j}\cdot[\qm, \phijsigma], \vcoulsigma\big] + \I\sum_{j=1}^{N}\frac{e_{j}}{m_{j}}\big[\pop_{j}\cdot[\qm, \phijsigma], \Hf\big] +\\
&- \sum_{j=1}^{N}\frac{e_{j}}{m_{j}}\pop_{j}\cdot[\qm, \Phi(v_{x_{j}, \sigma})] = \Or(\epsi\sqrt{\log(\sigma^{-1})}) +\\
&+ \I\sum_{j=1}^{N}\frac{e_{j}}{m_{j}}\pop_{j}\cdot\big[\qm, [\phijsigma, \Hf]\big] - \sum_{j=1}^{N}\frac{e_{j}}{m_{j}}\pop_{j}\cdot[\qm, \Phi(v_{x_{j}, \sigma})] =\\
&= \Or(\epsi\sqrt{\log(\sigma^{-1})}) \quad .
\end{split}
\end{equation*}

To analyze in a simple way the restriction of the dynamics to the subspace defined by $\proj{(1)}{M}$ one builds an almost unitary $U^{(1)}$, which maps the almost projections to a reference projection up to terms of order $\Or(\epsi^{2})$. Using the formal expression we get for $U^{(1)}$, we will define in next section a true unitary operator which will allow us to construct a rigorous version of the almost projections $\proj{(1)}{M}$.

A natural choice for the reference projections, linked to the physics of the system, is to choose them equal to the $\qm$s. We assume then that also $U^{(1)}$ can be expanded in powers of $\epsi$,
\begin{equation*}
U^{(1)}:= \id + \epsi U_{1},  
\end{equation*}
with the condition $U_{1} + U_{1}^{*}=\Or(\epsi)$. This ensures that
\begin{equation*}
U^{(1)}U^{(1)*}= \Or(\epsi^{2}),\qquad U^{(1)*}U^{(1)}=\Or(\epsi^{2})\quad.
\end{equation*}

To determine $U_{1}$ we impose that $U^{(1)}$ intertwines the almost invariant projections with the reference projections $\qm$ up to terms of order $\Or(\epsi^{2})$:
\begin{equation*}
U^{(1)}\proj{(1)}{M}U^{(1)*}\eqex \qm + \Or(\epsi^{2}) \quad .
\end{equation*}

The left-hand side gives
\begin{equation*}\begin{split}
& U^{(1)}\proj{(1)}{M}U^{(1)*}=(\id + \epsi U_{1})(\qm + \epsi\proj{M}{1})(\id - \epsi U_{1})+\Or(\epsi^{2})=\\
&=\qm + \epsi([U_{1}, \qm] + \proj{M}{1}) + \Or(\epsi^{2}) = \qm + \epsi\bigg([U_{1}, \qm] +\\
&- \bigg[\I\sum_{j=1}^{N}\frac{e_{j}}{m_{j}}\phijsigma\cdot\pop_{j}, \qm\bigg]\bigg) + \Or(\epsi^{2}),
\end{split}
\end{equation*}
so we can choose
\begin{equation}
U^{(1)} = \id + \I\epsi\sum_{j=1}^{N}\frac{e_{j}}{m_{j}}\phijsigma\cdot\pop_{j}\quad .
\end{equation}

\subsection{Rigorous definition}\label{unitaryrigorous}

To get a well-defined unitary operator from the formal expression for $U^{(1)}$ we first cutoff the number of photons in the field operator $\phijsigma$, replacing it by
\begin{equation}
\phijsigmal:= \qless{L}\phijsigma\qless{L},
\end{equation} 
where $L$ is fixed, but otherwise arbitrary. 

We introduce then a cutoff in the total energy, to cope with the unboundedness of the momentum of the electrons $\pop_{j}$. This reflects the fact that the adiabatic approximation holds uniformly only on states where the kinetic energy of the slow particles in uniformly bounded.

More precisely, given a function $\chi\in\coinf(\field{R}{})$, we define
\begin{equation}\label{uonelchi}\begin{split}
& \uonelchi := \id + \I\epsi\sum_{j=1}^{N}\frac{e_{j}}{m_{j}}\phijsigmal\cdot\pop_{j} - \I\epsi(\id - \chi(\hepsisigma))\sum_{j=1}^{N}\frac{e_{j}}{m_{j}}\phijsigmal\cdot\pop_{j}(\id - \chi(\hepsisigma))\\
&=\id + \epsi\chi(\hepsisigma)\uonelsigma + \epsi(\id - \chi(\hepsisigma))\uonelsigma\chi(\hepsisigma),
\end{split}
\end{equation}
where we have defined 
\begin{equation}
\uonelsigma:= \I\sum_{j=1}^{N}\frac{e_{j}}{m_{j}}\phijsigmal\cdot\pop_{j}\quad.
\end{equation}

Note that it follows from the Coulomb gauge condition that $\uonelsigmastar = -\uonelsigma$, so
\begin{equation}\label{uonelchistar}
\uonelchistar = \id - \epsi\chi(\hepsisigma)\uonelsigma - \epsi(\id - \chi(\hepsisigma))\uonelsigma\chi(\hepsisigma)\quad.
\end{equation}
\begin{lemma}\label{lemmaphisigma}\begin{enumerate}
\item For each $x\in\field{R}{3}$, $\phijsigmal(x)\in\mathcal{L}(\fock)$, and $\phijsigmal(x)^{*}=\phijsigmal(x)$. 

Moreover,
\begin{equation}
\phijsigmal: \field{R}{3}\to\mathcal{L}(\fock), \qquad x\mapsto \phijsigmal(x), \quad\in C_{\mathrm{b}}^{\infty}(\field{R}{3}, \mathcal{L}(\fock)),
\end{equation}
and, for $\sigma$ small enough,
\begin{equation}
\norm{\phijsigmal}_{\mathcal{L}(\hilbert{H})}\leq C\sqrt{\mathrm{L}+1}\sqrt{\log(\sigma^{-1})}.
\end{equation}

Given $\alpha\in\field{N}{3}$ with $\abs{\alpha}>0$, it holds instead 
\begin{equation}
\deriv{x}{\alpha}\phijsigmal = \deriv{x}{\alpha}\phijzerol + \Or(\sigma^{\abs{\alpha}}\sqrt{\mathrm{L}+1})_{\mathcal{L}(\hilbert{H})},
\end{equation}
where
\begin{equation}
\deriv{x}{\alpha}\phijzerol := (\deriv{x}{\alpha}\phijsigmal)_{|\sigma=0}
\end{equation}
is a well-defined bounded operator on $\hilbert{H}$.

\item The statements of point $1$ (except for the self-adjointness of $\phijsigmal(x)$) remain true if $\fock$ is replaced by $D(\Hf)$.
\end{enumerate} 
\end{lemma}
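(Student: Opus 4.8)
The plan is to reduce the whole statement to one elementary tool, a \emph{photon-number-truncated field estimate}: for every $g\in L^{2}(\field{R}{3}\times\ztwo)$ (or, componentwise, $\field{C}{3}$-valued) and every $\Psi\in\fock$,
\begin{equation*}
\norm{\qless{L}\Phi(g)\qless{L}\Psi}_{\fock}\leq\sqrt{2}\,\sqrt{\mathrm{L}+1}\,\norm{g}_{L^{2}}\,\norm{\Psi}_{\fock}.
\end{equation*}
I would prove this directly from the elementary number-operator bounds $\norm{a(g)\Psi^{(M)}}\leq\sqrt{M}\,\norm{g}\,\norm{\Psi^{(M)}}$ and $\norm{a(g)^{*}\Psi^{(M)}}\leq\sqrt{M+1}\,\norm{g}\,\norm{\Psi^{(M)}}$, using that $\qless{L}\Phi(g)\qless{L}$ couples only photon sectors with at most $\mathrm{L}$ photons. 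The point is that, unlike the basic estimate of Proposition~\ref{basicestimate}, the truncation dispenses with the $\abs{k}^{-1/2}$-weight of $\normomega{\cdot}$ — which would diverge like $\sigma^{-1/2}$ on the argument of $\phijsigma$ — and keeps only the plain $L^{2}$-norm, whose divergence is merely logarithmic. Concretely, writing $g_{x,\sigma}(k,\lambda):=\I\,\abs{k}^{-1}v_{x_{j},\sigma}(k,\lambda)=\I\,e_{\lambda}(k)\hat{\varphi}_{\sigma}(k)\abs{k}^{-3/2}\E^{-\I k\cdot x_{j}}$, a radial computation gives $\norm{g_{x,\sigma}}_{L^{2}}^{2}=\pi^{-2}\log(\Lambda/\sigma)$, the logarithm coming from the integrand $r^{2}/r^{3}$.

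Given this, point~1 follows quickly. Since $\phijsigmal(x)=\qless{L}\Phi(g_{x,\sigma})\qless{L}$, the truncated estimate yields at once $\phijsigmal(x)\in\mathcal{L}(\fock)$ and, for $\sigma$ small enough that $\log(\Lambda/\sigma)\leq C\log(\sigma^{-1})$, the bound $\norm{\phijsigmal}_{\mathcal{L}(\hilbert{H})}\leq C\sqrt{\mathrm{L}+1}\sqrt{\log(\sigma^{-1})}$. Self-adjointness is automatic: $\Phi(g)$ is symmetric for \emph{any} $g\in L^{2}$ and $\qless{L}$ is a bounded self-adjoint projection, so $\phijsigmal(x)$ is bounded and symmetric, hence self-adjoint. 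For the $C_{\mathrm{b}}^{\infty}$-regularity I would first observe that $x\mapsto g_{x,\sigma}$ is smooth as a map $\field{R}{3}\to L^{2}(\field{R}{3}\times\ztwo)$ with $\deriv{x}{\alpha}g_{x,\sigma}=(-\I k)^{\alpha}g_{x,\sigma}$, the difference quotients converging in $L^{2}$ by dominated convergence because $k^{\alpha}$ is bounded on $\mathrm{supp}\,\hat{\varphi}_{\sigma}\subset\{\abs{k}<\Lambda\}$; moreover $\norm{\deriv{x}{\alpha}g_{x,\sigma}}_{L^{2}}$ is finite and independent of $x$. Composing with the bounded, $\field{R}{}$-linear map $g\mapsto\qless{L}\Phi(g)\qless{L}$ gives $\phijsigmal\in C_{\mathrm{b}}^{\infty}(\field{R}{3},\mathcal{L}(\fock))$ with $\norm{\deriv{x}{\alpha}\phijsigmal}_{\mathcal{L}(\hilbert{H})}\leq C_{\alpha}\sqrt{\mathrm{L}+1}$ uniformly in $x$.

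For the comparison with $\sigma=0$ I would note that for $\abs{\alpha}\geq 1$ the function $\deriv{x}{\alpha}g_{x,0}$ (the same formula with $\hat{\varphi}$ in place of $\hat{\varphi}_{\sigma}$) is still in $L^{2}$, since its radial integral is $\int_{0}^{\Lambda}r^{2\abs{\alpha}-1}\,dr<\infty$; hence $\deriv{x}{\alpha}\phijzerol:=\qless{L}\Phi(\deriv{x}{\alpha}g_{x,0})\qless{L}$ is well-defined and bounded. Since $\deriv{x}{\alpha}g_{x,\sigma}-\deriv{x}{\alpha}g_{x,0}$ is supported in $\{\abs{k}<\sigma\}$ with $\norm{\deriv{x}{\alpha}g_{x,\sigma}-\deriv{x}{\alpha}g_{x,0}}_{L^{2}}^{2}\leq C\int_{0}^{\sigma}r^{2\abs{\alpha}-1}\,dr=\Or(\sigma^{2\abs{\alpha}})$, one more application of the truncated estimate gives $\deriv{x}{\alpha}\phijsigmal=\deriv{x}{\alpha}\phijzerol+\Or(\sigma^{\abs{\alpha}}\sqrt{\mathrm{L}+1})_{\mathcal{L}(\hilbert{H})}$. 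Finally, point~2 follows by the same arguments on $D(\Hf)$: $\Hf$ commutes with $\qless{L}$, and by the commutation relations \eqref{commutators} commuting $\Hf$ through $\Phi(g)$ produces a field-type term with argument $\abs{k}g$, for which $\norm{\abs{k}g}_{L^{2}}\leq\Lambda\norm{g}_{L^{2}}$ by the momentum cutoff; hence $\phijsigmal(x)$ maps $D(\Hf)$ into itself and all the estimates above carry over on $\mathcal{L}(D(\Hf))$, the extra factor $(1+\Lambda)$ being absorbed into the constants (self-adjointness, of course, is not asserted there). The calculations are routine; the only points I would watch are that it is the cutoff $\qless{L}$ — not an infrared weight — which makes $\phijsigmal$ bounded, so that the $\sigma$-divergence is only the harmless $\sqrt{\log(\sigma^{-1})}$, and the bookkeeping of the extra $\abs{k}$ in the $D(\Hf)$-version of point~2. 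Neither constitutes a real obstacle.
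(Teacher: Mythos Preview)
Your proposal is correct and follows essentially the same approach as the paper. The paper's proof is extremely terse---it simply records the truncated field estimate $\norm{\qless{L}\Phi(g_{x})\qless{L}}_{\mathcal{L}(\hilbert{H})}\leq 2^{1/2}\sqrt{L+1}\sup_{x}\norm{g_{x}}_{L^{2}}$ together with the formula $\deriv{x}{\alpha}(v_{x}\abs{k}^{-1})=(-\I)^{\abs{\alpha}}\abs{k}^{-3/2+\abs{\alpha}}e_{\lambda}(k)\hat{\varphi}(k)\E^{-\I k\cdot x}$ and declares both the lemma and its corollary to follow; your write-up spells out the same two ingredients and fills in the routine steps the paper omits, including the $D(\Hf)$ version.
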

\begin{corollary}
The fibered operators $\deriv{x}{\alpha}\phijsigmal$ belong to $\mathcal{L}(\hilbert{H})\cap\mathcal{L}(\hilbert{H}_{0})$ $\forall\alpha\in\field{N}{3}$.
\end{corollary}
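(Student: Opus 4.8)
The plan is to deduce the corollary from Lemma~\ref{lemmaphisigma} by a bootstrap on the number of derivatives, using only three ingredients: (i) for $|\alpha|=0$ the operator $\phijsigmal$ is bounded on $\fock$ (hence on $\hilbert{H}$) by part~1, and is bounded on $D(\Hf)$ (hence on $\hilbert{H}_0$) by part~2; (ii) for $|\alpha|>0$ we have $\deriv{x}{\alpha}\phijsigmal = \deriv{x}{\alpha}\phijzerol + \Or(\sigma^{|\alpha|}\sqrt{\mathrm{L}+1})$ in $\mathcal{L}(\hilbert{H})$, with the same identity valid in $\mathcal{L}(\hilbert{H}_0)$ by part~2; and (iii) $\deriv{x}{\alpha}\phijzerol$ is a well-defined bounded operator on $\hilbert{H}$ and (again by part~2) on $D(\Hf)$. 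Since the $\Or$-remainder in (ii) is an honest bounded operator on each of the two spaces, combining it with (iii) immediately gives $\deriv{x}{\alpha}\phijsigmal\in\mathcal{L}(\hilbert{H})$ and, separately, $\deriv{x}{\alpha}\phijsigmal\in\mathcal{L}(\hilbert{H}_0)$, which is exactly the claim $\deriv{x}{\alpha}\phijsigmal\in\mathcal{L}(\hilbert{H})\cap\mathcal{L}(\hilbert{H}_0)$.

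In slightly more detail, first I would record that $\hilbert{H}_0 = D(\hepsi_0)$ carries the graph norm of $\hepsi_0 = \sum_j \frac{1}{2m_j}\pop_j^2 + \Hf$; a bounded operator on $D(\Hf)$ equipped with its graph norm that also commutes suitably with the (bounded, since $x$-multiplication) $x$-dependence extends to a bounded operator on $\hilbert{H}_0$, because the two elliptic pieces $\sum_j \pop_j^2$ and $\Hf$ control the $\hilbert{H}_0$-norm and $\phijsigmal$ as a function of $x$ lies in $C_{\mathrm{b}}^\infty(\field{R}^3,\mathcal{L}(D(\Hf)))$ by part~2. Concretely, for $\Psi\in\hilbert{H}_0$ one estimates $\norm{\deriv{x}{\alpha}\phijsigmal\,\Psi}_{\hilbert{H}_0}$ by distributing $\pop_j$ and $\Hf$ across the fibered operator, using that $[\pop_j,\deriv{x}{\alpha}\phijsigmal]$ is again a bounded $x$-derivative of $\phijsigmal$ (one order higher, still covered by Lemma~\ref{lemmaphisigma}) and that $\phijsigmal$ maps $D(\Hf)$ to itself boundedly; this is the content of point~2 of the lemma, so no new work is needed beyond organizing it.

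The only genuine subtlety — and the step I would be most careful about — is the interplay between the $x$-derivatives and the unbounded operators $\pop_j$ and $\Hf$ when verifying membership in $\mathcal{L}(\hilbert{H}_0)$: one must check that all commutators that arise when moving $\pop_j$ or $\Hf$ past $\deriv{x}{\alpha}\phijsigmal$ remain in the class of bounded operators supplied by the lemma, so that the iteration closes after finitely many steps (namely $|\alpha|+2$ applications of point~2). Since Lemma~\ref{lemmaphisigma} already asserts that \emph{all} $x$-derivatives of $\phijsigmal$ are bounded on both $\fock$ and $D(\Hf)$, and $[\pop_j,\cdot] = -\I\epsi\nabla_{x_j}$ acting on the fibered $x$-dependence just produces one more such derivative, the iteration does close; I do not expect any obstruction beyond bookkeeping. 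Everything else is a direct quotation of Lemma~\ref{lemmaphisigma}, so the corollary follows.
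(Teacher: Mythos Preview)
Your proposal is correct. The paper's own proof is extremely terse: it simply records the norm bound
\[
\norm{\qless{L}\Phi(g_{x})\qless{L}}_{\mathcal{L}(\hilbert{H})}\leq 2^{1/2}\sqrt{L+1}\sup_{x}\norm{g_{x}}_{L^{2}}
\]
together with the explicit formula $\deriv{x}{\alpha}\big(v_{x}/\abs{k}\big)=(-\I)^{\abs{\alpha}}\abs{k}^{-3/2+\abs{\alpha}}e_{\lambda}(k)\hat{\varphi}(k)\E^{-\I k\cdot x}$, and declares that ``both statements'' (the lemma and the corollary) follow. In other words, the paper works directly with the concrete form of the operator and the elementary bound for truncated field operators, leaving the passage from $\mathcal{L}(D(\Hf))$ to $\mathcal{L}(\hilbert{H}_{0})$ entirely implicit.

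Your route is the same in substance but more explicit in organization: you take the conclusions of Lemma~\ref{lemmaphisigma} as input and then carry out the commutator bookkeeping ($[\pop_{j},\cdot]$ produces one more bounded $x$-derivative, $\Hf$ is handled by $\phijsigmal\in\mathcal{L}(D(\Hf))$) to upgrade boundedness on $D(\Hf)$ to boundedness on $\hilbert{H}_{0}$. This is exactly the argument the paper's two displayed facts are meant to feed, so there is no genuine methodological difference. One small remark: your use of the decomposition $\deriv{x}{\alpha}\phijsigmal=\deriv{x}{\alpha}\phijzerol+\Or(\sigma^{\abs{\alpha}}\sqrt{L+1})$ is unnecessary for the bare boundedness claim---the $C_{\mathrm{b}}^{\infty}$ statement in the lemma already gives $\deriv{x}{\alpha}\phijsigmal\in\mathcal{L}(\fock)\cap\mathcal{L}(D(\Hf))$ directly, without splitting off the $\sigma=0$ piece---but it does no harm.
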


\begin{proof}
The proof of both statements follows from the facts that
\begin{equation*}
\norm{\qless{L}\Phi(g_{x}(\cdot))\qless{L}}_{\mathcal{L}(\hilbert{H})}\leq 2^{1/2}\sqrt{L+1}\sup_{x\in\field{R}{3}}\norm{g_{x}(\cdot)}_{L^{2}(\field{R}{3}\times\ztwo)}
\end{equation*}
and that
\begin{equation*}
\deriv{x}{\alpha}v_{x}(k, \lambda)\abs{k}^{-1} = (-\I)^{\abs{\alpha}}\abs{k}^{-3/2 + \abs{\alpha}}e_{\lambda}(k)\hat{\varphi}(k)\E^{-\I k\cdot x} \quad .
\end{equation*}
\end{proof}

\begin{lemma}
The operator $\uonelchi$ is closable and its closure, which we denote by the same symbol, belongs to $\mathcal{L}(\hilbert{H})\cap\mathcal{L}(\hilbert{H}_{0})$. Moreover
\begin{equation}
\norm{\uonelchi}_{\mathcal{L}(\hilbert{K})}\leq C(1+\epsi\sqrt{\log(\sigma^{-1})}),
\end{equation}
where $\hilbert{K}=\hilbert{H}$ or $\hilbert{H}_{0}$. The same holds for $\uonelchistar$.
\end{lemma}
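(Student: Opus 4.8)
I would work from the second line of \eqref{uonelchi},
\begin{equation*}
\uonelchi=\id+\epsi\,\chi(\hepsisigma)\,\uonelsigma+\epsi\,\big(\id-\chi(\hepsisigma)\big)\,\uonelsigma\,\chi(\hepsisigma),
\end{equation*}
and bound the three summands separately, first as operators on $\hilbert{H}$ and then on $\hilbert{H}_{0}$. As written, $\uonelchi$ maps $\hilbert{H}_{0}$ into $\hilbert{H}$ (the bare momentum $\pop_{j}$ in $\uonelsigma$ is harmless on $\hilbert{H}_{0}$, since $\hilbert{H}_{0}\subset D(\pop_{j})$); showing that this map is bounded for the $\hilbert{H}$--norm gives, as $\hilbert{H}_{0}$ is dense in $\hilbert{H}$, closability with bounded closure, i.e.\ membership in $\mathcal{L}(\hilbert{H})$, while showing that it maps $\hilbert{H}_{0}$ boundedly into itself (in the graph norm of $\hepsi_{0}$) gives membership in $\mathcal{L}(\hilbert{H}_{0})$. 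Everything reduces to the single estimate
\begin{equation*}
\uonelsigma\,\chi(\hepsisigma)\in\mathcal{L}(\hilbert{H}),\qquad\norm{\uonelsigma\,\chi(\hepsisigma)}_{\mathcal{L}(\hilbert{H})}\leq C\sqrt{\log(\sigma^{-1})}\quad\text{uniformly in }\epsi .
\end{equation*}

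To establish this I would write $\uonelsigma\,\chi(\hepsisigma)=\I\sum_{j=1}^{N}\tfrac{e_{j}}{m_{j}}\,\phijsigmal\cdot\pop_{j}\chi(\hepsisigma)$ and control the two factors. By Lemma \ref{lemmaphisigma}, $\norm{\phijsigmal}_{\mathcal{L}(\hilbert{H})}\leq C\sqrt{L+1}\sqrt{\log(\sigma^{-1})}$, with $L$ fixed. For $\pop_{j}\chi(\hepsisigma)$ I insert $(\hepsi_{0}+\id)^{\mp1/2}$: from $\pop_{j}^{2}\leq 2m_{j}\hepsi_{0}$ the operator $\pop_{j}(\hepsi_{0}+\id)^{-1/2}$ is bounded, while $(\hepsi_{0}+\id)^{1/2}\chi(\hepsisigma)$ is bounded because, by Proposition \ref{selfadjointness}, the graph norm of $(\hepsi_{0})^{1/2}$ is equivalent — uniformly in $\epsi$ and $\sigma$ — to that of $(\hepsisigma+c)^{1/2}$ for a suitable constant $c$, and $\lambda\mapsto(\lambda+c)^{1/2}\chi(\lambda)$ is bounded so the spectral theorem applies. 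Hence $\norm{\pop_{j}\chi(\hepsisigma)}_{\mathcal{L}(\hilbert{H})}\leq C$ uniformly, and the estimate follows.

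With this in hand the third summand $\epsi(\id-\chi(\hepsisigma))\uonelsigma\chi(\hepsisigma)$ is immediately in $\mathcal{L}(\hilbert{H})$ with norm $\Or(\epsi\sqrt{\log(\sigma^{-1})})$. The second summand $\epsi\,\chi(\hepsisigma)\uonelsigma$ carries the bare momentum on the right, so a priori it is only defined on $\hilbert{H}_{0}$; here I would use the Coulomb--gauge identity $\uonelsigmastar=-\uonelsigma$, i.e.\ the skew-symmetry of $\uonelsigma$: for $\Phi\in\hilbert{H}$ and $\Psi\in\hilbert{H}_{0}$ one has $\chi(\hepsisigma)\Phi\in\hilbert{H}_{0}\subset D(\uonelsigma)$, whence
\begin{equation*}
\langle\Phi,\chi(\hepsisigma)\uonelsigma\Psi\rangle=\langle\chi(\hepsisigma)\Phi,\uonelsigma\Psi\rangle=-\langle\uonelsigma\chi(\hepsisigma)\Phi,\Psi\rangle ,
\end{equation*}
and the key estimate bounds the last term by $C\sqrt{\log(\sigma^{-1})}\norm{\Phi}\,\norm{\Psi}$. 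Taking the supremum over $\norm{\Phi}\leq1$ shows $\norm{\chi(\hepsisigma)\uonelsigma\Psi}\leq C\sqrt{\log(\sigma^{-1})}\norm{\Psi}$ on the dense set $\hilbert{H}_{0}$, so $\chi(\hepsisigma)\uonelsigma$ extends to a bounded operator on $\hilbert{H}$. Summing the three contributions gives $\norm{\uonelchi}_{\mathcal{L}(\hilbert{H})}\leq1+C\epsi\sqrt{\log(\sigma^{-1})}$; $\uonelchistar$ has the same structure by \eqref{uonelchistar} and is treated identically.

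For the $\mathcal{L}(\hilbert{H}_{0})$ bound I would run the same three-term analysis in the graph norm of $\hepsi_{0}$, now using in addition that $\phijsigmal\in\mathcal{L}(\hilbert{H}_{0})$ with the same norm bound (the corollary following Lemma \ref{lemmaphisigma} and its $x$-derivative estimates, the logarithm entering only through the zeroth derivative), that $\pop_{j}$ is bounded from $\hilbert{H}_{0}$ to $\hilbert{H}$, and that $\chi(\hepsisigma)$ is smoothing — it maps $\hilbert{H}$ into $D((\hepsisigma)^{n})$ for every $n$, uniformly in $\epsi,\sigma$ — which, together with $D(\hepsisigma)=\hilbert{H}_{0}$ and $[\,\pop_{j},\hepsi_{0}\,]=0$, lets all three terms be composed as maps into $\hilbert{H}_{0}$ with a uniform $\Or(\epsi\sqrt{\log(\sigma^{-1})})$ bound. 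The only genuinely non-mechanical point is the second summand $\chi(\hepsisigma)\uonelsigma$: it is \emph{not} bounded on $\hilbert{H}$ before closure, because of the bare $\pop_{j}$, and the duality argument built on $\uonelsigmastar=-\uonelsigma$ (ultimately the transversality of the polarization vectors, i.e.\ the Coulomb gauge) is exactly what rescues it — which is also why the rigorous definition \eqref{uonelchi} has to carry the energy cutoff $\chi(\hepsisigma)$. The remaining work — using the graph-norm equivalences of Proposition \ref{selfadjointness} at half-power and checking that $\chi(\hepsisigma)$ is uniformly smoothing — is routine from the relative-bound estimates already displayed in that proof.
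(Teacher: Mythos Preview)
Your proof is correct and follows essentially the same strategy as the paper's. The paper is considerably terser: it singles out the one problematic building block $\chi(\hepsisigma)\phijsigmal\cdot\pop_{j}$, observes that its adjoint $\pop_{j}\cdot\phijsigmal\chi(\hepsisigma)$ is bounded (since $\chi(\hepsisigma)$ lands in $\hilbert{H}_{0}$, $\phijsigmal$ preserves $\hilbert{H}_{0}$ by the corollary, and $\pop_{j}$ is bounded from $\hilbert{H}_{0}$ to $\hilbert{H}$), and concludes closability with bounded closure from that; the $\mathcal{L}(\hilbert{H}_{0})$ statement is dispatched by ``the same reasoning applied to $\hepsisigma\phijsigmal\cdot\pop_{j}\chi(\hepsisigma)$''. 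Your duality argument via the skew-symmetry $\uonelsigmastar=-\uonelsigma$ is exactly this adjoint computation written out, and your insertion of $(\hepsi_{0}+\id)^{\mp1/2}$ together with the half-power graph-norm equivalence from Proposition~\ref{selfadjointness} is a more explicit version of what the paper calls ``clearly bounded''. In short: same idea, you supply more detail.
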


\begin{proof}
The operator
\begin{equation*}
\chi(\hepsisigma)\phijsigmal\cdot\pop_{j}
\end{equation*}
is defined on $D(\pop_{j})$ and, since $\chi(\hepsisigma)\phijsigmal$ is a bounded operator, we have
\begin{equation*}
(\chi(\hepsisigma)\phijsigmal\cdot\pop_{j})^{*}=\pop_{j}\cdot\phijsigmal\chi(\hepsisigma)
\end{equation*}
which is clearly bounded. This shows that $\chi(\hepsisigma)\phijsigmal\cdot\pop_{j}$ is closable and its closure belongs to $\mathcal{L}(\hilbert{H})$.

The same reasoning can be applied to the operator
\begin{equation*}
\hepsisigma\phijsigmal\cdot\pop_{j}\chi(\hepsisigma)
\end{equation*}
which shows that $\uonelchi$ is also in $\mathcal{L}(\hilbert{H}_{0})$.

The estimate on the norm follows now from the estimate on the norm of $\phijsigmal$ given in lemma \ref{lemmaphisigma}.
\end{proof}

\begin{theorem}\label{unitaryintertwiner}
Assume that $\sigma=\sigma(\epsi)$ satisfies the condition
\begin{equation}
\epsi\sqrt{\log(\sigma(\epsi)^{-1})}\to 0, \qquad\epsi\to 0^{+},
\end{equation}
then the operator
\begin{equation}
\hilbert{U}:= \uonelchi[\uonelchistar\uonelchi]^{-1/2}
\end{equation}
is well-defined and unitary, for $\epsi$ small enough.

Both $\hilbert{U}$ and $\hilbert{U}^{*}$ belong to $\mathcal{L}(\hilbert{H})\cap\mathcal{L}(\hilbert{H}_{0})$, with the property that
\begin{equation}\label{normubounded}
\norm{\hilbert{U}}_{\mathcal{L}(\hilbert{H}_{0})},\qquad \norm{\hilbert{U}^{*}}_{\mathcal{L}(\hilbert{H}_{0})}\quad \leq C,
\end{equation}
where $C$ is independent of $\epsi$ and $\sigma$.

Moreover we can expand them in powers of $\epsi$ and the corresponding series converges both in $\mathcal{L}(\hilbert{H})$ and $\mathcal{L}(\hilbert{H}_{0})$.
\end{theorem}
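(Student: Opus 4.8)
The plan is to show that $\uonelchistar\uonelchi$ is a strictly positive bounded operator, bounded away from zero uniformly in $\epsi$ and $\sigma$ for $\epsi$ small, so that its inverse square root is well-defined by the spectral theorem and the polar-decomposition formula for $\hilbert{U}$ makes sense. First I would compute $\uonelchistar\uonelchi$ using the expressions \eqref{uonelchi} and \eqref{uonelchistar}. Writing $\uonelchi = \id + \epsi B$ with $B := \chi(\hepsisigma)\uonelsigma + (\id-\chi(\hepsisigma))\uonelsigma\chi(\hepsisigma)$, and noting $\uonelsigmastar = -\uonelsigma$, one gets $\uonelchistar = \id + \epsi B^{*}$ with $B^{*} = -\uonelsigma\chi(\hepsisigma) - \chi(\hepsisigma)\uonelsigma(\id-\chi(\hepsisigma))$ (as closures of the relevant operators, using the previous lemma that $B, B^{*}\in\mathcal{L}(\hilbert{H})\cap\mathcal{L}(\hilbert{H}_{0})$). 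Hence $\uonelchistar\uonelchi = \id + \epsi(B+B^{*}) + \epsi^{2}B^{*}B$. The key algebraic point, which mirrors the formal condition $U_{1}+U_{1}^{*}=\Or(\epsi)$ from section \ref{unitaryformal}, is that $B + B^{*} = \Or(\epsi)$ in $\mathcal{L}(\hilbert{H})$: indeed $B+B^{*} = \chi(\hepsisigma)\uonelsigma - \uonelsigma\chi(\hepsisigma) + (\id-\chi(\hepsisigma))\uonelsigma\chi(\hepsisigma) - \chi(\hepsisigma)\uonelsigma(\id-\chi(\hepsisigma)) = [\chi(\hepsisigma),\uonelsigma] + \{\text{terms involving }[\chi(\hepsisigma),\uonelsigma]\text{ again}\}$, so everything reduces to estimating the single commutator $[\chi(\hepsisigma),\uonelsigma]$. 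Using the Helffer--Sj\"ostrand formula \eqref{hellfersjostrand} for $\chi(\hepsisigma)$ exactly as in the proof of lemma \ref{lemmachiinfra}, together with $[\uonelsigma,\hepsisigma] = \Or(\sqrt{\log(\sigma^{-1})})$ in $\mathcal{L}(\hilbert{H}_{0},\hilbert{H})$ (which is essentially the formal computation of $[\proj{M}{1},\hiepsisigma{0}]+[\qm,\hiepsisigma{1}]$ made rigorous via lemma \ref{lemmaphisigma} and the basic estimate), one obtains $\norm{[\chi(\hepsisigma),\uonelsigma]}_{\mathcal{L}(\hilbert{H})} \leq C\sqrt{\log(\sigma^{-1})}$, hence $\norm{B+B^{*}}_{\mathcal{L}(\hilbert{H})} \leq C\sqrt{\log(\sigma^{-1})}(1+\epsi\sqrt{\log(\sigma^{-1})})$.

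Combining, $\norm{\uonelchistar\uonelchi - \id}_{\mathcal{L}(\hilbert{H})} \leq C(\epsi\sqrt{\log(\sigma^{-1})} + \epsi^{2}\log(\sigma^{-1}))$, which tends to $0$ under the hypothesis $\epsi\sqrt{\log(\sigma(\epsi)^{-1})}\to 0$. Therefore for $\epsi$ small enough $\uonelchistar\uonelchi$ is self-adjoint, bounded, and satisfies $\tfrac12\id \leq \uonelchistar\uonelchi \leq 2\id$, so $[\uonelchistar\uonelchi]^{-1/2}$ is well-defined via the spectral theorem, bounded with norm $\leq\sqrt 2$, and given by a norm-convergent power series $\sum_{n\geq 0}\binom{-1/2}{n}(\uonelchistar\uonelchi-\id)^{n}$ in $\mathcal{L}(\hilbert{H})$. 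The same computation carried out in $\mathcal{L}(\hilbert{H}_{0})$ — legitimate because by the previous lemma $B, B^{*}\in\mathcal{L}(\hilbert{H}_{0})$ and by lemma \ref{lemmachiinfra}-type arguments $\chi(\hepsisigma)$ and the resolvents are bounded on $\hilbert{H}_{0}$ uniformly — gives the analogous bound there, so $[\uonelchistar\uonelchi]^{-1/2}\in\mathcal{L}(\hilbert{H}_{0})$ with uniformly bounded norm and a power series converging in $\mathcal{L}(\hilbert{H}_{0})$ as well.

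It then follows that $\hilbert{U} = \uonelchi[\uonelchistar\uonelchi]^{-1/2}$ is well-defined and bounded on both $\hilbert{H}$ and $\hilbert{H}_{0}$, with norms bounded uniformly in $\epsi$ and $\sigma$ (giving \eqref{normubounded}), since $\uonelchi$ has norm $\leq C(1+\epsi\sqrt{\log(\sigma^{-1})})\leq C'$ on both spaces by the previous lemma. Unitarity is the standard polar-decomposition fact: $\hilbert{U}^{*}\hilbert{U} = [\uonelchistar\uonelchi]^{-1/2}\uonelchistar\uonelchi[\uonelchistar\uonelchi]^{-1/2} = \id$, and $\hilbert{U}\hilbert{U}^{*} = \uonelchi[\uonelchistar\uonelchi]^{-1}\uonelchistar$, which equals $\id$ once one checks $\uonelchi\uonelchistar$ is also invertible with $\uonelchi[\uonelchistar\uonelchi]^{-1}\uonelchistar$ a projection of full range — here one uses that $\uonelchistar\uonelchi$ and $\uonelchi\uonelchistar$ have the same spectrum away from $0$, both being close to $\id$, so $\hilbert{U}$ is a partial isometry with trivial kernel and cokernel, hence unitary. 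Finally, expanding $[\uonelchistar\uonelchi]^{-1/2}$ in its (norm-convergent) binomial series and multiplying by $\uonelchi = \id+\epsi B$ yields a power series for $\hilbert{U}$ in $\epsi$ (with $\sigma$-dependent coefficients), convergent in $\mathcal{L}(\hilbert{H})$ and in $\mathcal{L}(\hilbert{H}_{0})$; the same for $\hilbert{U}^{*} = [\uonelchistar\uonelchi]^{-1/2}\uonelchistar$.

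\textbf{Main obstacle.} The one genuinely non-routine step is the commutator estimate $\norm{[\chi(\hepsisigma),\uonelsigma]}_{\mathcal{L}(\hilbert{H})}\leq C\sqrt{\log(\sigma^{-1})}$: one must feed $[\uonelsigma,\hepsisigma]$ — which is unbounded as an operator from $\hilbert{H}$ to $\hilbert{H}$ but bounded from $\hilbert{H}_{0}$ to $\hilbert{H}$ with the stated logarithmic growth — into the Helffer--Sj\"ostrand representation and absorb the extra resolvent factor $(\hepsisigma - z)^{-1}$ (which maps $\hilbert{H}$ to $\hilbert{H}_{0}$ with a $\abs{\Im z}^{-1}$ bound, as in lemma \ref{lemmachiinfra}) against the decay of $\bar\partial\chi^{a}$, taking care that the momentum operators $\pop_{j}$ hidden in $\uonelsigma$ are controlled by the energy cutoff. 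Everything else is the standard polar-decomposition / Neumann-series machinery.
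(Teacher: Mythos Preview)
Your argument works, but you have missed the key algebraic simplification that makes the paper's proof almost trivial: the operator $B$ (which the paper calls $T_{\sigma}$) is \emph{exactly} anti-self-adjoint, $B^{*}=-B$, so $B+B^{*}=0$ identically and $\uonelchistar\uonelchi=\id-\epsi^{2}B^{2}$ with no first-order term at all. You actually computed this yourself without noticing: in your expression
\[
B+B^{*}=[\chi(\hepsisigma),\uonelsigma]+\big((\id-\chi)\uonelsigma\chi-\chi\uonelsigma(\id-\chi)\big),
\]
the second bracket expands to $\uonelsigma\chi-\chi\uonelsigma\chi-\chi\uonelsigma+\chi\uonelsigma\chi=-[\chi(\hepsisigma),\uonelsigma]$, so the two pieces cancel. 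The whole Helffer--Sj\"ostrand commutator estimate you single out as the ``main obstacle'' is therefore unnecessary for this theorem. (Incidentally, your text says ``$B+B^{*}=\Or(\epsi)$'' but the bound you actually derive is $\Or(\sqrt{\log(\sigma^{-1})})$; this inconsistency is harmless for the final estimate, since the term enters as $\epsi(B+B^{*})$, but it signals that something simpler is going on.)

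With $B^{*}=-B$ in hand the paper just observes that $\uonelchistar\uonelchi=\id-\epsi^{2}T_{\sigma}^{2}$, invokes the previous lemma to get $\epsi^{2}\norm{T_{\sigma}^{2}}_{\mathcal{L}(\hilbert{K})}\leq C\epsi^{2}\log(\sigma^{-1})<1$ for small $\epsi$ under the hypothesis, and writes the inverse square root as the convergent binomial series \eqref{squareroot} in both $\mathcal{L}(\hilbert{H})$ and $\mathcal{L}(\hilbert{H}_{0})$. Your polar-decomposition and Neumann-series reasoning from that point on is essentially the same as the paper's ``standard calculations''. So your route is correct but takes a detour through a commutator bound that, while true and used elsewhere in the paper, is not needed here; the paper's route is shorter and purely algebraic at the crucial step.
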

\begin{proof}
It follows from equations \eqref{uonelchi} and \eqref{uonelchistar} that, defining
\begin{equation}\label{tsigma}
T_{\sigma}:= \chi(\hepsisigma)\uonelsigma + (\id - \chi(\hepsisigma))\uonelsigma\chi(\hepsisigma),
\end{equation}
we have then
\begin{equation*}
\uonelchi = \id + \epsi T_{\sigma(\epsi)}, \qquad T_{\sigma}^{*}=-T_{\sigma},\, \norm{T_{\sigma}}_{\mathcal{L}(\hilbert{K})}\leq C\sqrt{\log(\sigma^{-1})},
\end{equation*}
where $\hilbert{K}=\hilbert{H}$ or $\hilbert{H}_{0}$.

From this expression we get immediately that
\begin{equation*}
\uonelchistar\uonelchi = \id - \epsi^{2}T_{\sigma(\epsi)}^{2},
\end{equation*}
so, choosing $\epsi$ small enough, we have that $\epsi^{2}\norm{T_{\sigma(\epsi)}^{2}}_{\mathcal{L}(\hilbert{K})}<1$, therefore the square root is well-defined, and can be expressed through a convergent power series in $\mathcal{L}(\hilbert{K})$:
\begin{equation}\label{squareroot}
(\id - \epsi^{2}T_{\sigma(\epsi)}^{2})^{-1/2}=\sum_{j=0}^{\infty}\frac{(2j-1)!!}{(2j)!!}\epsi^{2j}T_{\sigma(\epsi)}^{2j}\quad.
\end{equation}

From standard calculations it follows in the end that $\hilbert{U}$ is unitary on $\hilbert{H}$.
\end{proof}

\section{The dressed Hamiltonian}\label{dressed}

We define the dressed Hamiltonian as the unitary transform of $\hepsisigma$,
\begin{equation}\label{hdres}
\hdres:= \hilbert{U}\hepsisigma\hilbert{U}^{*} \quad.
\end{equation}

Since $\hilbert{U}$ is a bijection on $\hilbert{H}_{0}$, $\hdres$ is self-adjoint on $\hilbert{H}_{0}$, and using the expansion of $\hilbert{U}$ on $\mathcal{L}(\hilbert{H}_{0})$, we can expand $\hdres$ in $\mathcal{L}(\hilbert{H}_{0}, \hilbert{H})$. 

\begin{theorem}\label{thmhdres}
Assume that $\sigma = \sigma(\epsi)$ satisfies conditions \eqref{conditionssigma}. The expansion of the dressed Hamiltonian up to the second order is then given by
\begin{equation}
\hdres = \hiepsidres{0} + \epsi\hiepsidres{1} + \epsi^{2}\hiepsidres{2} + \Or\big(\epsi^{3}(\log\sigma^{-1})^{3/2}\big)_{\mathcal{L}(\hilbert{H}_{0}, \hilbert{H})},
\end{equation}
where
\begin{equation*}
\hiepsidres{0} = \hiepsi{0} = \sum_{j=1}^{N} \frac{1}{2m_{j}}\pop_{j}^{2} + \vcoul + \Hf,
\end{equation*}
$\hiepsidres{1}$ is given in equation \eqref{honedres} and $\hiepsidres{2}$ in equation \eqref{htwodres}.
\end{theorem}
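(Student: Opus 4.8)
The plan is to compute $\hilbert{U}\hepsisigma\hilbert{U}^{*}$ order by order in $\epsi$ by substituting the power-series expansions of $\hilbert{U}$ and $\hilbert{U}^{*}$ established in Theorem~\ref{unitaryintertwiner} together with the decomposition $\hepsisigma = \hiepsisigma{0} + \epsi\hiepsisigma{1} + \epsi^{2}\hiepsisigma{2}$. Writing $\hilbert{U} = \id + \epsi T_{\sigma} + \epsi^{2}R_{\sigma} + \dots$ with $T_{\sigma}^{*} = -T_{\sigma}$ (so that $\hilbert{U}^{*} = \id - \epsi T_{\sigma} + \epsi^{2}R_{\sigma}^{*} + \dots$), one expands the conjugation as a nested commutator series: the order-zero term is $\hiepsisigma{0}$, the order-$\epsi$ term is $[T_{\sigma}, \hiepsisigma{0}] + \hiepsisigma{1}$, and the order-$\epsi^{2}$ term collects $[T_{\sigma}, \hiepsisigma{1}] + \frac{1}{2}[T_{\sigma},[T_{\sigma},\hiepsisigma{0}]] + [R_{\sigma},\hiepsisigma{0}] + R_{\sigma}^{*}\hiepsisigma{0} + \hiepsisigma{0}R_{\sigma}^{*} + \hiepsisigma{2}$ (up to the $\chi$-insertions, which only matter away from the kinetic-energy cutoff). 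The key algebraic input is the defining relation $[T_{\sigma}, \hiepsisigma{0}] + \hiepsisigma{1} = \Or(\epsi\sqrt{\log\sigma^{-1}})$, which is exactly the formal first-order intertwining computation carried out in Section~\ref{unitaryformal}; this makes $\hiepsidres{1}$ the "regular" leftover of that near-cancellation, and it must still be shown to be a bounded operator from $\hilbert{H}_{0}$ to $\hilbert{H}$ with the claimed logarithmic dependence on $\sigma$.

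The concrete steps are as follows. First I would make rigorous the commutator expansion of the conjugation, justifying each rearrangement on the core of smooth compactly supported vectors used in Proposition~\ref{selfadjointness}, using that $\hilbert{U},\hilbert{U}^{*}\in\mathcal{L}(\hilbert{H})\cap\mathcal{L}(\hilbert{H}_{0})$ so that all the operators in sight map $\hilbert{H}_{0}$ continuously into $\hilbert{H}$. Second, I would insert the explicit form $\uonelsigma = \I\sum_{j}\frac{e_{j}}{m_{j}}\phijsigmal\cdot\pop_{j}$ and evaluate the commutators $[\phijsigmal\cdot\pop_{j}, \hiepsisigma{0}]$ termwise against $\frac{1}{2m_{l}}\pop_{l}^{2}$, $\vcoulsigma$, and $\Hf$; the crucial point is that $[\Phi(\I k\times\text{-})\,|\,k|^{-1}, \Hf]$ reproduces, up to sign, the field operator $\Phi(v_{x_{j},\sigma})$ appearing in $\hiepsisigma{1}$, while the $\pop_{l}^{2}$- and $\vcoulsigma$-commutators produce derivative-of-$\phi$ terms which, by Lemma~\ref{lemmaphisigma}, are $\Or(\sqrt{\log\sigma^{-1}})$ and contribute to $\hiepsidres{1}$. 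Third, I would collect the second-order terms, in which the double commutator $\frac{1}{2}[T_{\sigma},[T_{\sigma},\hiepsisigma{0}]]$ — equivalently $-\frac{1}{2}[T_{\sigma},\hiepsisigma{1}]$ modulo higher order — together with $[T_{\sigma},\hiepsisigma{1}]$ combines to $\frac{1}{2}[T_{\sigma},\hiepsisigma{1}]$, which contains the $:\Phi^{2}:$- and $\pop\cdot\Phi\cdot\pop$-type contributions; normal-ordering the product of two field operators via the commutation relation \eqref{commphi} extracts the $c$-number Darwin-type kernel and leaves a genuinely operator-valued remainder which, after projecting onto $\qm$-sectors, yields $\hiepsidres{2}$. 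Fourth, I would bound the remainder: every term of order $\epsi^{3}$ or higher carries at least three factors of $T_{\sigma}$ or one factor of $R_{\sigma}\sim T_{\sigma}^{2}$, each contributing $\sqrt{\log\sigma^{-1}}$, and each is relatively bounded with respect to $\hiepsi{0}$ uniformly in $\epsi,\sigma$ by the basic estimate of Proposition~\ref{basicestimate} combined with the uniform equivalence of graph norms from Proposition~\ref{selfadjointness}; summing the geometric-type series from \eqref{squareroot} gives the stated $\Or(\epsi^{3}(\log\sigma^{-1})^{3/2})$ bound in $\mathcal{L}(\hilbert{H}_{0},\hilbert{H})$.

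The main obstacle I anticipate is bookkeeping the logarithmic divergences carefully enough to be sure the error really is $\Or(\epsi^{3}(\log\sigma^{-1})^{3/2})$ and not worse: the operator $T_{\sigma}$ individually is only $\Or(\sqrt{\log\sigma^{-1}})$ because $\phijsigmal$ involves $|k|^{-3/2}\hat{\varphi}_{\sigma}(k)$, which is not square-integrable near $k=0$, so one must track precisely how many uncancelled powers of $T_{\sigma}$ survive at each order and verify that the first-order near-cancellation (and the analogous partial cancellation at second order) genuinely removes the most singular pieces, leaving only the infrared-regular operators $\hiepsidres{1}$, $\hiepsidres{2}$. A secondary technical nuisance is handling the $\chi(\hepsisigma)$-insertions in $\hilbert{U}$: these must be commuted through using Lemma~\ref{lemmachiinfra}-type estimates and the fact that $\chi(\hepsisigma)$ localizes the kinetic energy of the slow particles, so that factors of $\pop_{j}$ acting on $\chi(\hepsisigma)\Psi$ stay $\Or(1)$; keeping the $\sigma$-dependence uniform through these manipulations, so that the final expansion coefficients $\hiepsidres{1},\hiepsidres{2}$ are well-defined elements of $\mathcal{L}(\hilbert{H}_{0},\hilbert{H})$ with the advertised bounds, is where most of the real work lies.
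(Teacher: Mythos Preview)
Your overall strategy---expand $\hilbert{U}$ in powers of $\epsi$, conjugate, collect by order---is exactly the paper's approach, but several of the details you sketch are off and would lead you astray if you tried to execute them.

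First, the second-order piece of $\hilbert{U}$ is not an independent operator $R_{\sigma}$: from \eqref{squareroot} one has $\hilbert{U} = \id + \epsi T + \tfrac{\epsi^{2}}{2}T^{2} + \Or(\epsi^{3}T^{3})$, so the conjugation collapses cleanly to $\hepsisigma + \epsi[T,\hepsisigma] + \tfrac{\epsi^{2}}{2}[T,[T,\hepsisigma]]$ plus remainder. Your displayed order-$\epsi^{2}$ expression double-counts the $R$-contributions and should be replaced by $\hiepsisigma{2} + [T,\hiepsisigma{1}] + \tfrac{1}{2}[T,[T,\hiepsisigma{0}]]$.

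Second, the commutators of $\uonelsigma$ with $\tfrac{1}{2m_{l}}\pop_{l}^{2}$ and with $\vcoulsigma$ do \emph{not} contribute to $\hiepsidres{1}$: each carries an extra factor of $\epsi$ because $[\pop_{l},f(x)] = -\I\epsi\,\partial_{x_{l}}f$, so these terms land in $\hiepsidres{2}$ (they are exactly the second and third summands in \eqref{htwodres}). Only the $\Hf$-commutator survives at first order, and it reproduces $-\qless{L}\hiepsisigma{1}\qless{L}$.

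Third, and most importantly, the relation $[T_{\sigma},\hiepsisigma{0}] + \hiepsisigma{1} = \Or(\epsi\sqrt{\log\sigma^{-1}})$ is \emph{false} as a norm estimate in $\mathcal{L}(\hilbert{H}_{0},\hilbert{H})$. The cancellation occurs only on the range of $\qless{L}$ and where $\chi(\hepsisigma)=1$; the leftover $\hiepsidres{1}$ in \eqref{honedres} is a genuine $\Or(1)$ operator containing pieces like $\qmore{L}\hiepsi{1}\qmore{L}$ and $(\id-\chi)\qless{L}\hiepsi{1}\qless{L}(\id-\chi)$. For the present theorem this is harmless---$\hiepsidres{1}$ is simply \emph{defined} to be this coefficient---but your phrasing suggests you expect it to be small in norm, which would mislead the later analysis. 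Its vanishing on the relevant states is established only afterwards, in Section~\ref{effective}, via Lemmas~\ref{lemmaxi} and~\ref{approxchi}.
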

\begin{proof}
Applying equation \eqref{squareroot}, we get that
\begin{equation*}
\hilbert{U}= \id +\epsi T +\frac{\epsi^{2}}{2}T^{2} + \Or(\epsi^{3}T^{3})\footnote[1]{The expansion of $\hilbert{U}$ till the second order coincides with that of $\E^{\epsi T}$, however $\E^{\epsi T}$ is \emph{not} the correct dressing transformation to every order. Following the formal procedure sketched in section \ref{unitaryformal} one can construct a second order expression for $\hilbert{U}$, which depends however also in general on the second order coefficient of the Hamiltonian, $\hiepsisigma{2}$ and has not a simple exponential form.},
\end{equation*}
therefore 
\begin{equation*}\begin{split}
& \hdres = (\id +\epsi T +\epsi^{2}T^{2}/2)\hepsisigma(\id -\epsi T +\epsi^{2}T^{2}/2) +\\ &+\Or\big(\epsi^{3}(\log\sigma^{-1})^{3/2}\big)_{\mathcal{L}(\hilbert{H}_{0}, \hilbert{H})}=\\
&=\hepsisigma +\epsi[T, \hepsisigma] +\frac{\epsi^{2}}{2}[T, [T, \hepsisigma]] +\Or\big(\epsi^{3}(\log\sigma^{-1})^{3/2}\big)_{\mathcal{L}(\hilbert{H}_{0}, \hilbert{H})}\stackrel{\eqref{tsigma}}{=}\\
&=\hiepsisigma{0} + \epsi\big(\hiepsisigma{1} + \chi(\hepsisigma)[\uonelsigma, \hiepsisigma{0}] + ``(\id-\chi)\cdots\chi\textrm{''}\big) +\\
&+ \epsi^{2}\bigg(\hiepsisigma{2} + \chi(\hepsisigma)[\uonelsigma, \hiepsisigma{1}] +\frac{1}{2}\big[T, \chi(\hepsisigma)[\uonelsigma, \hiepsisigma{0}]\big] +\\
&+ ``(\id-\chi)\cdots\chi\textrm{''}\bigg) +\Or\big(\epsi^{3}(\log\sigma^{-1})^{3/2}\big)_{\mathcal{L}(\hilbert{H}_{0}, \hilbert{H})}=:\\
&=: \hiepsidres{0} + \epsi\hiepsidres{1} + \epsi^{2}\hiepsidres{2} + \Or\big(\epsi^{3}(\log\sigma^{-1})^{3/2}\big)_{\mathcal{L}(\hilbert{H}_{0}, \hilbert{H})}
\end{split}
\end{equation*}
where $``(\id-\chi)\cdots\chi\textrm{''}$ indicates that for every term containing $\chi\cdots$ we have to add a corresponding term containing $(\id-\chi)\cdots\chi$, as in equation \eqref{tsigma}.

Using equation \eqref{estimatecoulomb} to eliminate the $\sigma$, we get immediately
\begin{equation}
\hiepsidres{0} = \hiepsi{0} \quad .
\end{equation}

The commutator in the term of order $\epsi$ gives:
\begin{equation}\label{commuonehzero}\begin{split}
& [\uonelsigma, \hiepsisigma{0}] = -\qless{L}\hiepsisigma{1}\qless{L} + \epsi\sum_{j=1}^{N}\frac{e_{j}}{m_{j}}\phijsigmal\cdot\nabla_{x_{j}}\vcoulsigma +\\ &+ \epsi\sum_{j,l=1}^{N}\frac{e_{j}}{m_{l}m_{j}}\nabla_{x_{l}}(\phijsigmal\cdot\pop_{j})\cdot\pop_{l} + \Or(\epsi^{2}),
\end{split}
\end{equation}
therefore, taking into account equation \eqref{estimateepsisigma},
\begin{equation}\label{honedres}\begin{split}
& \hiepsidres{1} = \big(\id-\chi(\hepsisigma)\big)\qless{L}\hiepsi{1}\qless{L}\big(\id-\chi(\hepsisigma)\big) + \qless{L}\hiepsi{1}\qmore{L} + \qmore{L}\hiepsi{1}\qless{L} +\\
&+ \qmore{L}\hiepsi{1}\qmore{L}\quad.
\end{split}
\end{equation}
We will show below that this term vanishes in the effective dynamics.

Concerning the terms of second order we have
\begin{equation}\label{commuuonehone}\begin{split}
& [\uonelsigma, \hiepsisigma{1}] = -\I\sum_{j,l=1}^{N}\frac{e_{j}e_{l}}{m_{j}m_{l}}[\phijsigmal\cdot\pop_{j}, \Phi(v_{x_{l}, \sigma})\cdot\pop_{l}] =\\ &=-\I\sum_{j,l=1}^{N}\frac{e_{j}e_{l}}{m_{j}m_{l}}\qless{L}[\phijsigma^{\alpha}, \Phi(v_{x_{l}, \sigma}^{\beta})]\qless{L}\pop_{j}^{\alpha}\pop_{j}^{\beta} + \mathcal{R}_{\mathrm{L-1}} + \Or(\epsi)_{\mathcal{L}(\hilbert{H}_{0}, \hilbert{H})},
\end{split}
\end{equation}
where $\mathcal{R}_{\mathrm{L-1}}$ is a term which vanishes on the range of $Q_{j}$ when $j<\mathrm{L}-1$.

Using equation \eqref{commphi} to calculate the commutator of the two field operator we get in the end
\begin{equation}\label{darwin}\begin{split}
& -\I\sum_{j,l=1}^{N}\frac{e_{j}e_{l}}{m_{j}m_{l}}[\phijsigma^{\alpha}, \Phi(v_{x_{l}, \sigma}^{\beta})]\pop_{j}^{\alpha}\pop_{j}^{\beta} = \I\sum_{j,l=1}^{N}\frac{e_{j}e_{l}}{m_{j}m_{l}}\Re \langle \frac{v_{x_{j}, \sigma}^{\alpha}}{\abs{k}}, v_{x_{l}, \sigma}^{\beta}\rangle_{L^{2}(\field{R}{3}, dk)\otimes\field{C}{2}}\cdot\\
&\cdot\pop_{j}^{\alpha}\pop_{j}^{\beta}=\\
&=\I\sum_{j,l=1}^{N}\frac{e_{j}e_{l}}{m_{j}m_{l}}\int_{\field{R}{3}}dk\, \frac{\abs{\hat{\varphi}(k)}^{2}}{\abs{k}^{2}}\E^{\I k\cdot(x_{j}-x_{l})}\pop_{j}\cdot(\id - \kappa\otimes\kappa)\pop_{l} =\\
&= \I\sum_{j,l=1}^{N}\frac{e_{j}e_{l}}{m_{j}m_{l}}\int_{\field{R}{3}}dk\, \frac{\abs{\hat{\varphi}(k)}^{2}}{\abs{k}^{2}}\E^{\I k\cdot x_{j}}\pop_{j}\cdot(\id - \kappa\otimes\kappa)\pop_{l}\E^{-\I k\cdot x_{l}}
\end{split}
\end{equation}

The remaining term of second order gives,
\begin{equation*}\begin{split}
& \big[T, \chi(\hepsisigma)[\uonelsigma, \hiepsisigma{0}]\big] \stackrel{\eqref{commuonehzero}}{=} -\big[T, \chi(\hepsisigma)\qless{L}\hiepsisigma{1}\qless{L}\big] + \Or(\epsi)\stackrel{\eqref{tsigma}}{=}\\ &=-\chi[\uonelsigma, \chi]\qless{L}\hiepsisigma{1}\qless{L} - \chi^{2}\qless{L}[\uonelsigma, \hiepsisigma{1}]\qless{L} - \chi[\chi, \qless{L}\hiepsisigma{1}\qless{L}]\uonelsigma 
\end{split}
\end{equation*}

Putting the calculations above together, and using equations \eqref{infraredspin} and \eqref{infraredasquare}, we get in the end
\begin{equation}\label{htwodres}\begin{split}
& \hiepsidres{2} = \hiepsi{2} + \sum_{j=1}^{N}\frac{e_{j}}{m_{j}}\phijsigmal\cdot\nabla_{x_{j}}\vcoul + \sum_{j, l =1}^{N}\frac{e_{j}}{m_{l}m_{j}}\nabla_{x_{l}}(\phijsigmal\cdot\pop_{j})\cdot\pop_{l} + \\
&+ \chi[\uonelsigma, \hiepsisigma{1}] -\frac{1}{2}\chi^{2}\qless{L}[\uonelsigma, \hiepsisigma{1}]\qless{L} + (\id - \chi)[\uonelsigma, \hiepsisigma{1}]\chi +\\
&- \frac{1}{2}\chi[\chi, \qless{L}\hiepsisigma{1}\qless{L}]\uonelsigma -\frac{1}{2}\chi[\uonelsigma, \chi]\qless{L}\hiepsisigma{1}\qless{L},
\end{split}
\end{equation}
where the commutator $[\uonelsigma, \hiepsisigma{1}]$ is given in equations \eqref{commuuonehone} and \eqref{darwin}. We will show below that only few of the terms written above contribute to the effective dynamics, giving the correction to the mass of the electrons and the Darwin term.
\end{proof}

\section{The effective dynamics}\label{effective}

We quote without proof a number of lemmas, which, with minor modifications, are identical to the ones proved in \cite{TeTe}.

\begin{lemma}\label{lemmaxi}(see corollary $4$ \cite{TeTe})

Given a function $\chi\in\coinf(\field{R}{})$ and a $\sigma>0$, we have
\begin{equation}\label{energynotincreas}
a\bigg(\frac{\I v_{x_{j}, \sigma}(\lambda, k)}{\abs{k}}\bigg)\qm\chi(\hiepsi{0})=\mathrm{Q}_{M-1}\xi(\hiepsi{0})a\bigg(\frac{\I v_{x_{j}, \sigma}(x, \lambda)}{\abs{k}}\bigg)\qm\chi(\hiepsi{0}) + \Or_{0}(\epsi^{\infty}),
\end{equation}
where $\xi\in\coinf(\field{R}{})$, $\xi\chi=\chi$ and
\begin{equation*}
c_{\xi}=2d_{\chi} + E_{\infty},
\end{equation*}
where 
\begin{eqnarray*}
d_{\chi}&:=& 2c_{\chi}+E_{\infty}+\min\{c_{\chi}, \Lambda\}\,,\\
 c_{\chi}&:=&\sup\{\abs{k}: k\in\mathrm{supp}\ \chi\}\,,\\
 E_{\infty}&:=& \sup_{x\in\field{R}{3N}}\abs{\vcoul(x)}\,,
\end{eqnarray*}
and we can choose $\sup\{\abs{k}: k\in\mathrm{supp}\ \xi\}$ arbitrarily close to $c_{\xi}$.

An analogous statement holds for the creation operator.
\end{lemma}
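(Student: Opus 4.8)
The plan is to exploit the fact that the annihilation operator lowers the photon number by one together with an energy-localization (``pull-through'') argument for $\hiepsi{0}$. First I would observe that $a(g)\qm = \qless{M-1}\,a(g)\qm$ trivially, since applying $a(g)$ to an $M$-photon state yields an $(M-1)$-photon state; the nontrivial point is to insert the cutoff $\xi(\hiepsi{0})$ on the left. The standard device is the pull-through formula: for the free field Hamiltonian one has $\Hf\, a(g) = a(g)\,\Hf - a(\abs{k}g)$, and more generally $a(g)\, F(\hiepsi{0}) = F(\hiepsi{0} + \abs{k})\, a(g)$ in the appropriate smeared sense, because the particle part $\sum_j \pop_j^2/(2m_j) + \vcoul$ commutes with $a(g)$ while $\Hf$ shifts by the energy $\abs{k}$ of the annihilated photon. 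Hence $a(g)\,\chi(\hiepsi{0}) = \chi(\hiepsi{0}+\abs{k})\,a(g)$ with $g = \I v_{x_j,\sigma}(\lambda,k)/\abs{k}$, where the argument $\abs{k}$ is understood as a multiplication operator acting on the mode that is being annihilated.

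Next I would restrict the range of $k$. On $\mathrm{supp}\,\chi$ the total energy $\hiepsi{0}$ is at most $c_\chi$, and since $\Hf \geq 0$ and $\vcoul \geq -E_\infty$, the photon momenta carried by a state in the range of $\qm\chi(\hiepsi{0})$ are effectively bounded by something like $c_\chi + E_\infty$; but one must also account for the form factor $\hat\varphi$, which is supported in $\abs{k}\leq\Lambda$, so the annihilated photon has $\abs{k}\leq\min\{c_\chi,\Lambda\}$ up to an $\Or_0(\epsi^\infty)$ error coming from the tails of the almost-analytic functional calculus (this is where the precise constants $d_\chi = 2c_\chi + E_\infty + \min\{c_\chi,\Lambda\}$ and $c_\xi = 2d_\chi + E_\infty$ get bookkept: roughly, after the energy shift by $\abs{k}\leq\min\{c_\chi,\Lambda\}$ the surviving energy window is enlarged, and one iterates the estimate once more to control the commutator $[\hiepsi{0},a(g)]$). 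Choosing $\xi\in\coinf(\field{R}{})$ with $\xi\chi = \chi$ and $\mathrm{supp}\,\xi$ inside a window of width slightly larger than $c_\xi$, one gets $\chi(\hiepsi{0}+\abs{k})\,a(g)\qm\chi(\hiepsi{0}) = \xi(\hiepsi{0})\,\chi(\hiepsi{0}+\abs{k})\,a(g)\qm\chi(\hiepsi{0}) + \Or_0(\epsi^\infty)$, and then undo the pull-through on the inner $\chi(\hiepsi{0}+\abs{k})$ to recover $a(g)\qm\chi(\hiepsi{0})$ on the right, landing in $\mathrm{Q}_{M-1}$.

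The main obstacle I anticipate is making the ``energy is bounded on the support of $\chi$, hence the photon momenta are bounded'' step quantitative and uniform in $\epsi$ and $\sigma$: the annihilation operator is unbounded, so one cannot simply say $\hiepsi{0}\leq c_\chi$ on the relevant subspace as an operator inequality after the shift. The clean way is to commute a resolvent $(\hiepsi{0}-z)^{-1}$ (or the almost-analytic extension $\bar\partial\xi^a$) through $a(g)$ using the pull-through identity, estimate the resulting commutator by the basic estimate of Proposition \ref{basicestimate} applied to $a(\abs{k}g) = a(\I v_{x_j,\sigma})$ (which is where the extra factor of $\abs{k}$ makes the norm finite and $\sigma$-uniform), and absorb the off-support contribution into the rapidly decaying weight $\abs{\bar\partial\xi^a(z)}\leq D_{\bar N}\abs{\Im z}^{\bar N}$ to produce the $\Or_0(\epsi^\infty)$ error — exactly as in the proof of the corresponding Corollary~4 in \cite{TeTe}, which is why the paper elects to quote it without repeating the argument. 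The statement for the creation operator follows by taking adjoints, noting that $a(g)^*$ raises the photon number and shifts the energy by $-\abs{k}$, so the roles of the two energy windows are interchanged but the constants are the same.
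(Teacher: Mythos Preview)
The paper does not give its own proof of this lemma; it is quoted verbatim from \cite{TeTe} (Corollary~4) with the remark ``We quote without proof a number of lemmas\ldots''. So I can only assess your reconstruction on its own merits.

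There is a genuine gap in your argument. You assert that ``the particle part $\sum_j \pop_j^2/(2m_j) + \vcoul$ commutes with $a(g)$'', and this is what underpins your pull-through identity $a(g)\,F(\hiepsi{0}) = F(\hiepsi{0}+\abs{k})\,a(g)$. But it is false: the smearing function $g = \I v_{x_j,\sigma}(\lambda,k)/\abs{k}$ carries the phase $\E^{-\I k\cdot x_j}$, so $a(g)$ contains $\E^{+\I k\cdot x_j}$ acting on the particle variables, and this does \emph{not} commute with $\pop_j = -\I\epsi\nabla_{x_j}$. Your pull-through formula therefore fails as written, and the rest of the sketch (inserting $\xi$ by comparing supports after the shift, then ``undoing'' the pull-through) rests on it.

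The repair is to separate the two mechanisms. For the field part one does have the exact pointwise relation $a_\lambda(k)\,F(\Hf) = F(\Hf+\abs{k})\,a_\lambda(k)$. For the particle part, $\E^{\I k\cdot x_j}$ conjugates $\pop_j$ to $\pop_j+\epsi k$, hence $\Hepsip$ to the same operator with a momentum shift of size $\epsi\abs{k}\leq\epsi\Lambda$. One then argues directly on the range: on $\mathrm{ran}\,\chi(\hiepsi{0})$ both $\Hepsip$ and $\Hf$ are separately bounded ($\Hepsip\leq c_\chi$ since $\Hf\geq 0$, and $\Hf\leq c_\chi+E_\infty$ since the kinetic energy is nonnegative), the form factor enforces $\abs{k}\leq\Lambda$, and the momentum shift perturbs the kinetic energy by an amount that is $\Or(\epsi)$ on this bounded-energy set. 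Combining these gives an $\epsi$-independent upper bound on $\hiepsi{0}$ on the range of $a(g)\qm\chi(\hiepsi{0})$ for $\epsi$ small, so that $(1-\xi)(\hiepsi{0})$ with $\mathrm{supp}\,\xi$ just above $c_\xi$ kills it exactly --- which is why the error is $\Or_0(\epsi^\infty)$ rather than merely $\Or(\epsi)$. The two-layer constants $d_\chi$ and $c_\xi$ reflect precisely this two-step bookkeeping (first bound $\Hf$ and $\Hepsip$ separately, then recombine after the shift); your accounting of them is roughly right, but the mechanism producing them is not the naive pull-through you wrote down.
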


\begin{lemma}\label{approxchi}

Assume that $\sigma$ satisfies conditions \eqref{conditionssigma}, then

$1.$\ Given a function $\tilde{\chi}\in\coinf(\field{R}{})$, we have
\begin{equation}
\tilde{\chi}(\hepsisigma)-\tilde{\chi}(\hiepsi{0})=\epsi\mathcal{R}^{\epsi}_{\chi},
\end{equation}
where $\mathcal{R}^{\epsi}_{\chi}\in\mathcal{L}(\hilbert{H}, \hilbert{H}_{0})$, $\norm{\mathcal{R}^{\epsi}_{\chi}}_{\mathcal{L}(\hilbert{H}, \hilbert{H}_{0})}=\Or(1)$ and
\begin{equation}
\mathcal{R}^{\epsi}_{\chi}\qm\tilde{\chi}(\hiepsi{0}) = (\mathrm{Q}_{M+1}+\mathrm{Q}_{M-1})\xi(\hiepsi{0})\mathcal{R}^{\epsi}_{\chi}\qm\tilde{\chi}(\hiepsi{0}) + \Or(\epsi)_{\mathcal{L}(\hilbert{H}, \hilbert{H}_{0})},
\end{equation}
where $\xi$ has the properties described in lemma \ref{lemmaxi}.

$2.$\ Moreover, we have that
\begin{equation}\label{diffheffhzero}
\tilde{\chi}(\hdres)-\tilde{\chi}(\hiepsi{0})=\Or(\epsi)_{\mathcal{L}(\hilbert{H}, \hilbert{H}_{0})},
\end{equation}
and that
\begin{equation}\label{diffqnheffhzero}
\qm\tilde{\chi}(\hdres)=\qm\tilde{\chi}(\hiepsi{0})\tilde{\tilde{\chi}}(\hdres) +\Or(\epsi^{2}\sqrt{\log(\sigma^{-1})})_{\mathcal{L}(\hilbert{H}, \hilbert{H}_{0})},
\end{equation}
where $\tilde{\tilde{\chi}}$ is any $\coinf(\field{R}{})$ function such that $\tilde{\chi}\tilde{\tilde{\chi}}=\tilde{\chi}$ and $\tilde{\tilde{\chi}}\chi=\tilde{\tilde{\chi}}$, $M<\mathrm{L}-1$.
\end{lemma}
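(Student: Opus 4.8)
The plan is to obtain all three estimates from the Helffer--Sj\"ostrand formula \eqref{hellfersjostrand}, combined with the second resolvent identity and the norm expansions already established for $\hepsisigma$ (Proposition \ref{selfadjointness}, equation \eqref{estimatecoulomb}) and for $\hdres$ (Theorem \ref{thmhdres}), exactly as in the proof of Lemma \ref{lemmachiinfra}. Throughout one works in $\mathcal{L}(\hilbert{H},\hilbert{H}_{0})$: each of $\hepsisigma$, $\hiepsi{0}$, $\hdres$ is self-adjoint on $\hilbert{H}_{0}$ with resolvent bounded by $C/\abs{\Im z}$ both in $\mathcal{L}(\hilbert{H})$ and in $\mathcal{L}(\hilbert{H},\hilbert{H}_{0})$, uniformly in $\epsi$ and $\sigma$ (for $\hdres$ one also invokes that it is the $\hilbert{U}$-conjugate of $\hepsisigma$ with $\hilbert{U}^{\pm1}\in\mathcal{L}(\hilbert{H}_{0})$ uniformly, Theorem \ref{unitaryintertwiner}), and the decay $\abs{\bar{\partial}\tilde{\chi}^{a}(z)}\le D_{\bar N}\abs{\Im z}^{\bar N}$ of the almost-analytic extension makes every $z$-integral converge.

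For part $1$ I would write
\begin{equation*}
\tilde{\chi}(\hepsisigma) - \tilde{\chi}(\hiepsi{0}) = \frac{1}{\pi}\int_{\field{R}{2}}dx\,dy\ \bar{\partial}\tilde{\chi}^{a}(z)\,(\hepsisigma - z)^{-1}(\hiepsi{0} - \hepsisigma)(\hiepsi{0} - z)^{-1}
\end{equation*}
and insert $\hiepsi{0} - \hepsisigma = (\vcoul - \vcoulsigma) - \epsi\hiepsisigma{1} - \epsi^{2}\hiepsisigma{2}$, where the first summand is $\Or(\sigma)=\Or(\epsi^{2})$ by \eqref{estimatecoulomb} (using $\sigma<\epsi^{2}$) and $\hiepsisigma{1}$, $\hiepsisigma{2}$ are $\hiepsi{0}$-bounded with uniform constants. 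The integrand is then $\epsi$ times an $\mathcal{L}(\hilbert{H},\hilbert{H}_{0})$-valued function bounded by $C\abs{\Im z}^{-2}$ near the real axis, so dividing by $\epsi$ defines $\mathcal{R}^{\epsi}_{\chi}\in\mathcal{L}(\hilbert{H},\hilbert{H}_{0})$ with $\norm{\mathcal{R}^{\epsi}_{\chi}}=\Or(1)$. To get the band structure I would replace $(\hepsisigma-z)^{-1}$ by $(\hiepsi{0}-z)^{-1}$ (cost $\Or(\epsi)$) and discard the $\epsi^{2}\hiepsisigma{2}$ and Coulomb pieces (which after division by $\epsi$ are $\Or(\epsi)$), leaving
\begin{equation*}
\mathcal{R}^{\epsi}_{\chi}\qm\tilde{\chi}(\hiepsi{0}) = -\frac{1}{\pi}\int_{\field{R}{2}}dx\,dy\ \bar{\partial}\tilde{\chi}^{a}(z)\,(\hiepsi{0}-z)^{-1}\hiepsisigma{1}(\hiepsi{0}-z)^{-1}\qm\tilde{\chi}(\hiepsi{0}) + \Or(\epsi)_{\mathcal{L}(\hilbert{H},\hilbert{H}_{0})}.
\end{equation*}
Since $(\hiepsi{0}-z)^{-1}$ and $\tilde{\chi}(\hiepsi{0})$ commute with $\qm$, and $\hiepsisigma{1}$ is linear in $a$ and $a^{*}$ (hence sends $\mathrm{Ran}\,\qm$ into $\mathrm{Ran}(\mathrm{Q}_{M+1}\oplus\mathrm{Q}_{M-1})$) and moves the total energy only within a bounded window (the form factor is ultraviolet cut off), Lemma \ref{lemmaxi} lets me insert $\xi(\hiepsi{0})$ and pull $\mathrm{Q}_{M+1}+\mathrm{Q}_{M-1}$ to the front, which is the assertion.

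Equation \eqref{diffheffhzero} is identical with $\hdres$ replacing $\hepsisigma$: by Theorem \ref{thmhdres} one has $\hdres-\hiepsi{0}=\epsi\hiepsidres{1}+\Or(\epsi^{2}(\log\sigma^{-1})^{1/2})_{\mathcal{L}(\hilbert{H}_{0},\hilbert{H})}$, $\hiepsidres{1}$ is $\hiepsi{0}$-bounded, and the resolvent expansion gives $\Or(\epsi)$. For \eqref{diffqnheffhzero} I would use $\tilde{\chi}\tilde{\tilde{\chi}}=\tilde{\chi}$, so that $\tilde{\chi}(\hdres)=\tilde{\chi}(\hdres)\tilde{\tilde{\chi}}(\hdres)$ and hence $\qm\tilde{\chi}(\hdres)-\qm\tilde{\chi}(\hiepsi{0})\tilde{\tilde{\chi}}(\hdres)=\qm(\tilde{\chi}(\hdres)-\tilde{\chi}(\hiepsi{0}))\tilde{\tilde{\chi}}(\hdres)$. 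Expanding this last expression by Helffer--Sj\"ostrand and the resolvent identity, replacing $\qm(\hdres-z)^{-1}$ by $(\hiepsi{0}-z)^{-1}\qm$ (an $\Or(\epsi)$, hence $\Or(\epsi^{2})$, error), and inserting $\hiepsi{0}-\hdres=-\epsi\hiepsidres{1}+\Or(\epsi^{2}(\log\sigma^{-1})^{1/2})$, I am left with the term containing $\qm\hiepsidres{1}$. By the explicit form \eqref{honedres} and $M<\mathrm{L}-1$ (so $\qm\qless{L}=\qm$, $\qm\qmore{L}=0$), only the first summand survives: $\qm\hiepsidres{1}=\qm(\id-\chi(\hepsisigma))\qless{L}\hiepsi{1}\qless{L}(\id-\chi(\hepsisigma))$. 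Commuting the rightmost $(\id-\chi(\hepsisigma))$ through $(\hiepsi{0}-z)^{-1}$ (cost $\Or(\epsi)$) and using $\chi(\hdres)=\hilbert{U}\chi(\hepsisigma)\hilbert{U}^{*}=\chi(\hepsisigma)+\Or(\epsi(\log\sigma^{-1})^{1/2})$ (from $\hilbert{U}=\id+\epsi T_{\sigma}+\cdots$ with $\norm{T_{\sigma}}=\Or((\log\sigma^{-1})^{1/2})$) together with $(\id-\chi(\hdres))\tilde{\tilde{\chi}}(\hdres)=0$ (because $\tilde{\tilde{\chi}}\chi=\tilde{\tilde{\chi}}$), the product $(\id-\chi(\hepsisigma))(\hiepsi{0}-z)^{-1}\tilde{\tilde{\chi}}(\hdres)$ becomes $\Or(\epsi(\log\sigma^{-1})^{1/2})$; this supplies the extra power of $\epsi$ and the $\sqrt{\log\sigma^{-1}}$, and collecting all contributions yields $\Or(\epsi^{2}(\log\sigma^{-1})^{1/2})$.

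The hard part will be precisely this last computation: one has to run everything inside $\mathcal{L}(\hilbert{H},\hilbert{H}_{0})$ so that the resolvent commutators and the replacement $\chi(\hdres)\rightsquigarrow\chi(\hepsisigma)$ really produce bounded errors of the claimed order rather than merely formal identities on a core, and one must keep careful track of which operators do and do not commute with $\qm$. This is the same bookkeeping as for the Nelson model in \cite{TeTe}; the present simplification is that the principal symbol is already diagonal with respect to the $\qm$, so no dressing of the projectors is needed at leading order.
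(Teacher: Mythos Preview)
Your approach is essentially the same as the paper's: Helffer--Sj\"ostrand plus iterated resolvent identity to isolate the $-\epsi(\hiepsi{0}-z)^{-1}\hiepsisigma{1}(\hiepsi{0}-z)^{-1}$ term, followed by Lemma~\ref{lemmaxi} for the $\mathrm{Q}_{M\pm1}$ structure; for part~2 the paper merely calls the argument ``analogous'' and defers to \cite{TeTe}, while you spell out the mechanism (the explicit form \eqref{honedres} of $\hiepsidres{1}$ and the vanishing of $(\id-\chi)\tilde{\tilde{\chi}}$) that produces the extra $\epsi\sqrt{\log\sigma^{-1}}$. Your bookkeeping is correct.
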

\begin{proof}
We give the proof just for point $1$. The proof of point $2$ is analogous and can be found also in (\cite{TeTe}, lemma $7$).

Proceeding as in lemma \ref{lemmachiinfra}, we apply Hellfer-Sj\"ostrand formula and we get
\begin{equation*}
\chi(\hepsisigma) - \chi(\hiepsi{0}) = \frac{1}{\pi}\int_{\field{R}{2}}dxdy\, \bar{\partial}\chi^{a}(z)\big[(\hepsisigma-z)^{-1} - (\hiepsi{0}-z)^{-1}\big].
\end{equation*}

Since both Hamiltonians are self-adjoint on the same domain we get, iterating equation \eqref{differenceresolvents},
\begin{equation*}
(\hepsisigma-z)^{-1} - (\hiepsi{0}-z)^{-1} = -\epsi(\hiepsi{0}-z)^{-1}\hiepsisigma{1}(\hiepsi{0}-z)^{-1} + \Or(\epsi^{2}\abs{\Im z}^{-3})_{\mathcal{L}(\hilbert{H}, \hilbert{H}_{0})}.
\end{equation*}

The statement now follows from the explicit expression of $\hiepsisigma{1}$ using lemma \ref{lemmaxi}.
\end{proof}

\begin{theorem}\emph{(Zero order approximation to the time evolution)} 
\begin{equation}
\norm{(\E^{-\I\hdres \frac{t}{\epsi}} - \E^{-\I\hiepsi{0}\frac{t}{\epsi}})\qm\tilde{\chi}(\hdres)}_{\mathcal{L}(\hilbert{H})} = \Or(\sqrt{M+1}\abs{t}\epsi\sqrt{\log(\sigma(\epsi)^{-1})}),
\end{equation}
\begin{equation}
\norm{\qm(\E^{-\I\hdres \frac{t}{\epsi}} - \E^{-\I\hiepsi{0}\frac{t}{\epsi}})\tilde{\chi}(\hdres)}_{\mathcal{L}(\hilbert{H})} = \Or(\sqrt{M+1}\abs{t}\epsi\sqrt{\log(\sigma(\epsi)^{-1})}),
\end{equation}
for every $\tilde{\chi}\in\coinf(\field{R}{})$ such that $\tilde{\chi}\chi=\tilde{\chi}$.
\end{theorem}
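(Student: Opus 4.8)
The plan is to combine Duhamel's formula with the explicit block structure of $\hiepsidres{1}$ from Theorem~\ref{thmhdres} and with the interchange identities of Lemma~\ref{approxchi}, which allow one to trade spectral functions of $\hdres$ and $\hepsisigma$ for spectral functions of $\hiepsi{0}$, the only ones that commute with $\qm$. Since $\hdres$ and $\hiepsi{0}$ are self-adjoint on the common domain $\hilbert{H}_{0}$, for the first estimate I would start from
\begin{equation*}
\big(\E^{-\I\hdres\frac{t}{\epsi}}-\E^{-\I\hiepsi{0}\frac{t}{\epsi}}\big)\qm\tilde\chi(\hdres)=-\frac{\I}{\epsi}\int_{0}^{t}ds\,\E^{-\I\hdres\frac{t-s}{\epsi}}\,(\hdres-\hiepsi{0})\,\E^{-\I\hiepsi{0}\frac{s}{\epsi}}\,\qm\tilde\chi(\hdres),
\end{equation*}
replace $\qm\tilde\chi(\hdres)$ by $\qm\tilde\chi(\hiepsi{0})\tilde{\tilde{\chi}}(\hdres)$ via \eqref{diffqnheffhzero} (the $\Or(\epsi^{2}\sqrt{\log(\sigma^{-1})})_{\mathcal{L}(\hilbert{H},\hilbert{H}_{0})}$ error is negligible after multiplication by the small operator $\hdres-\hiepsi{0}$), and then use $[\hiepsi{0},\qm]=0$ to pull $\E^{-\I\hiepsi{0}s/\epsi}$ through $\qm\tilde\chi(\hiepsi{0})$. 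As $\E^{-\I\hdres(t-s)/\epsi}$, $\E^{-\I\hiepsi{0}s/\epsi}$ and $\tilde{\tilde{\chi}}(\hdres)$ are uniformly bounded, this reduces the first estimate to $\norm{(\hdres-\hiepsi{0})\qm\tilde\chi(\hiepsi{0})}_{\mathcal{L}(\hilbert{H})}=\Or(\sqrt{M+1}\,\epsi^{2}\sqrt{\log(\sigma^{-1})})$. The second estimate is symmetric: with the companion Duhamel formula ($\E^{-\I\hiepsi{0}(t-s)/\epsi}$ on the left, $\E^{-\I\hdres s/\epsi}\tilde\chi(\hdres)=\tilde\chi(\hdres)\E^{-\I\hdres s/\epsi}$ on the right, $\qm$ pulled through $\E^{-\I\hiepsi{0}(t-s)/\epsi}$) it reduces to the companion bound $\norm{\qm(\hdres-\hiepsi{0})\tilde\chi(\hdres)}=\Or(\sqrt{M+1}\,\epsi^{2}\sqrt{\log(\sigma^{-1})})$.

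To get the core estimate I would insert the expansion $\hdres=\hiepsi{0}+\epsi\hiepsidres{1}+\epsi^{2}\hiepsidres{2}+\Or(\epsi^{3}(\log\sigma^{-1})^{3/2})$ of Theorem~\ref{thmhdres} and treat the three orders separately, assuming throughout that $\mathrm{L}$ was fixed large enough that $M<\mathrm{L}-1$. At first order, \eqref{honedres} shows that $\hiepsidres{1}$ is off-diagonal with respect to the $\qless{L}/\qmore{L}$ splitting and, inside the diagonal block, is squeezed between two factors $\id-\chi(\hepsisigma)$: the off-diagonal pieces $\qmore{L}\hiepsi{1}\qm$ vanish identically because $\hiepsi{1}$ shifts the photon number by exactly $\pm 1$ and $M+1<\mathrm{L}$, while $(\id-\chi(\hepsisigma))\qm\tilde\chi(\hiepsi{0})=-\epsi\,\mathcal{R}^{\epsi}_{\chi}\qm\tilde\chi(\hiepsi{0})=\Or(\sqrt{M+1}\,\epsi)$ by $\chi(\hepsisigma)=\chi(\hiepsi{0})+\epsi\mathcal{R}^{\epsi}_{\chi}$ (Lemma~\ref{approxchi}, point~1), $[\chi(\hiepsi{0}),\qm]=0$, $\chi\tilde\chi=\tilde\chi$, and the single field operator in $\mathcal{R}^{\epsi}_{\chi}$ producing the $\sqrt{M+1}$; hence $\epsi\hiepsidres{1}\qm\tilde\chi(\hiepsi{0})=\Or(\sqrt{M+1}\,\epsi^{2})$. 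At second order, \eqref{htwodres} shows that the only $\sigma$-dependent contributions of $\hiepsidres{2}$ carry a single $\phijsigmal$, of norm $\Or(\sqrt{\mathrm{L}+1}\sqrt{\log\sigma^{-1}})$, so that $\epsi^{2}\hiepsidres{2}\qm\tilde\chi(\hiepsi{0})=\Or(\sqrt{M+1}\,\epsi^{2}\sqrt{\log\sigma^{-1}})$; the Darwin part is infrared regular and $\mathcal{R}_{\mathrm{L-1}}$ annihilates $\qm$. The cubic remainder contributes $\Or(\epsi^{3}(\log\sigma^{-1})^{3/2})$, which under \eqref{conditionssigma} is of the same order as (or smaller than) the target once the $s$-integration has traded one power of $\epsi$ for $\abs{t}$. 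Carrying out $\frac{1}{\epsi}\int_{0}^{t}ds$ then yields $\Or(\sqrt{M+1}\,\abs{t}\,\epsi\sqrt{\log(\sigma(\epsi)^{-1})})$; the companion bound follows identically from $\qm\hiepsidres{1}=\qm(\id-\chi(\hepsisigma))\qless{L}\hiepsi{1}\qless{L}(\id-\chi(\hepsisigma))$ together with $(\id-\chi(\hepsisigma))\tilde\chi(\hdres)=\Or(\epsi)$, the latter from \eqref{diffheffhzero}.

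I expect the genuine obstacle to be turning ``$\hiepsidres{1}$ is almost block-off-diagonal'' into the honest operator-norm estimate: one must juggle the three cutoffs $\chi(\hepsisigma)$, $\tilde\chi(\hdres)$, $\tilde\chi(\hiepsi{0})$, interchanging them only through Lemma~\ref{approxchi} and the Helffer--Sj\"ostrand calculus while keeping the residual commutators under control, and at the same time track the photon-number grading (which is what forces $M<\mathrm{L}-1$ and makes the dangerous cross-terms vanish exactly) and the $\sqrt{M+1}$ dependence (which always comes from the single creation/annihilation operator hidden in $\hiepsi{1}$ and in the $\phijsigmal$ inside $\hiepsidres{2}$). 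The logarithmic divergence itself is harmless: by the structure of the expansion in Theorem~\ref{thmhdres} it never appears without at least one extra power of $\epsi$ beyond the $\epsi\sqrt{\log\sigma^{-1}}$ target, and admissibility is guaranteed by conditions \eqref{conditionssigma}.
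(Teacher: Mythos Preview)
Your proposal is correct and follows essentially the same route as the paper: Duhamel's formula, replacement of $\tilde\chi(\hdres)$ by $\tilde\chi(\hiepsi{0})$ via Lemma~\ref{approxchi}, commutation of $\E^{-\I\hiepsi{0}s/\epsi}$ through $\qm\tilde\chi(\hiepsi{0})$, and then the observation that $\hiepsidres{1}\qm\tilde\chi(\hiepsi{0})$ is small because the $\qmore{L}$-blocks in \eqref{honedres} annihilate $\qm$ while the remaining block carries a factor $(\id-\chi(\hepsisigma))\qm\tilde\chi(\hiepsi{0})=\Or(\epsi)$. The only minor differences are that the paper uses the cruder replacement \eqref{diffheffhzero} (error $\Or(\epsi)$) rather than \eqref{diffqnheffhzero}, and it invokes Lemma~\ref{approxchi} twice together with Lemma~\ref{lemmaxi} to push $\hiepsidres{1}\qm\tilde\chi(\hiepsi{0})$ down to $\Or(\epsi^{2})$ instead of your $\Or(\epsi)$; either order suffices for the stated bound, and your tracking of the $\sqrt{M+1}$ and of the constraint $M<\mathrm{L}-1$ is more explicit than the paper's.
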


\begin{proof}
Using lemma \ref{approxchi} we replace $\chi(\hdres)$ with $\chi(\hiepsi{0})$, since the difference, being of order $\Or(\epsi)$, is smaller than the error we want to prove.

Both Hamiltonians are self-adjoint on $\hilbert{H}_{0}$, therefore, applying Duhamel formula, we get:
\begin{eqnarray*} 
\lefteqn{  (\E^{-\I t\hdres/{\epsi}}-\E^{-\I t \hiepsi{0}/{\epsi}})\qm\tilde{\chi}(\hiepsi{0}) =}\\&=& -\frac{\I}{\epsi} \int_{0}^{t} ds\, 
\E^{\I (s-t)\hdres/\epsi}
(\hdres-\hiepsi{0})\E^{-\I s \hiepsi{0}/\epsi}\qm\tilde{\chi}(\hiepsi{0})=\\&=& -\I\int_{0}^{t}ds\, \E^{\I (s-t)\hdres/\epsi}h_{1,\chi}\E^{-\I s h_{0}/\epsi}\qm\tilde{\chi}(\hiepsi{0}) 
+ \Or(\epsi\sqrt{M+1}\sqrt{\log(\sigma^{-1})})_{\mathcal{L}(\hilbert{H})}\\
&=&-\I\int_{0}^{t}ds\, \E^{\I (s-t)\hdres/\epsi}\hiepsidres{1} \qm\tilde{\chi}(\hiepsi{0})\E^{-\I s h_{0}/\epsi} + \Or(\epsi\sqrt{M+1}\sqrt{\log(\sigma^{-1})})_{\mathcal{L}(\hilbert{H})}.
\end{eqnarray*}
Putting in the previous equation the expression of $\hiepsidres{1}$, equation \eqref{honedres}, and observing that we can replace, again by lemma \ref{approxchi}, $\chi(\hepsisigma)$ with $\chi(\hiepsi{0})$, we get that the right-hand side is of order $\Or(\sqrt{M+1}\abs{t}\epsi\sqrt{\log(\sigma^{-1})})$.

For the second estimate we apply again Duhamel formula, inverting the position of the unitaries,
\begin{eqnarray*}
\lefteqn{ \qm(\E^{-\I t\hdres/\epsi}-\E^{-\I t\hiepsi{0}/\epsi})\tilde{\chi}(\hdres) =}\\&=& -\frac{\I}{\epsi}\int_{0}^{t}ds\, \qm\E^{\I (s-t)\hiepsi{0}/\epsi}(\hdres-\hiepsi{0}) \E^{-\I s \hdres/\epsi}\tilde{\chi}(\hdres)=\\&=&-\I\int_{0}^{t}ds\, \E^{\I (s-t)\hiepsi{0}/\epsi}\qm\hiepsidres{1}\tilde{\chi}(\hdres)\E^{-\I s \hdres/\epsi} 
+\Or(\epsi\sqrt{M+1}\sqrt{\log(\sigma^{-1})})\,,
\end{eqnarray*}
so, replacing $\chi(\hdres)$ and $\chi(\hepsisigma)$ by $\chi(\hiepsi{0})$, our claim is proved.
\end{proof}

\begin{corollary}\label{adibaticinvariancepepsim}
The dressed projectors $\pepsim$ are almost invariant with respect to the original dynamics, i. e.,
\begin{equation*}
\norm{[\E^{-\I\hepsi\frac{t}{\epsi}}, \pepsim]\chi(\hepsi)}_{\mathcal{L}(\hilbert{H})} = \Or\big(\sqrt{M+1}\abs{t}\epsi\sqrt{\log(\sigma(\epsi)^{-1})}\big)\quad .
\end{equation*}
\end{corollary}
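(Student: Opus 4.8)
The plan is to reduce the statement to the zero-order approximation theorem proved just above by two elementary manipulations: first discarding the infrared cutoff, then transporting everything into the dressed representation via $\hilbert{U}$. The single new ingredient is that the leading Hamiltonian $\hiepsi{0}$ is block-diagonal with respect to the Fock projectors $\qm$, which lets the two one-sided estimates of that theorem be combined into a genuine commutator bound.

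\emph{Removing the infrared cutoff.} Conditions \eqref{conditionssigma} force $\sigma(\epsi)<\epsi^{2}$ for $\epsi$ small, so Proposition \ref{infraredcutoff} gives $\norm{\E^{-\I t\hepsi/\epsi}-\E^{-\I t\hepsisigma/\epsi}}_{\mathcal{L}(\hilbert{H}_{0},\hilbert{H})}=\Or(\abs{t}\sigma^{1/2})$ and Lemma \ref{lemmachiinfra} gives $\norm{\chi(\hepsi)-\chi(\hepsisigma)}_{\mathcal{L}(\hilbert{H})}=\Or(\epsi\sigma^{1/2})$. Since $\pepsim=\hilbert{U}^{*}(\id_{\mathrm{p}}\otimes\qm)\hilbert{U}$ is a projection, hence of norm one on $\hilbert{H}$, and is also bounded on $\hilbert{H}_{0}$ (by Theorem \ref{unitaryintertwiner} and $[\hepsi_{0},\id_{\mathrm{p}}\otimes\qm]=0$), while $\chi(\hepsi),\chi(\hepsisigma)$ map $\hilbert{H}$ boundedly into $\hilbert{H}_{0}$, I would first replace $\E^{-\I t\hepsi/\epsi}$ by $\E^{-\I t\hepsisigma/\epsi}$ and $\chi(\hepsi)$ by $\chi(\hepsisigma)$ in the norm to be bounded; by the above estimates the error is $\Or(\abs{t}\sigma^{1/2})$, which the first part of \eqref{conditionssigma} makes $o(\epsi^{2}\abs{t})$, hence negligible.

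\emph{Passing to the dressed picture.} Since $\hdres=\hilbert{U}\hepsisigma\hilbert{U}^{*}$ with $\hilbert{U}$ unitary, the functional calculus gives $\E^{-\I t\hepsisigma/\epsi}=\hilbert{U}^{*}\E^{-\I t\hdres/\epsi}\hilbert{U}$ and $\chi(\hepsisigma)=\hilbert{U}^{*}\chi(\hdres)\hilbert{U}$, so, writing $\qm$ for $\id_{\mathrm{p}}\otimes\qm$,
\begin{equation*}
\norm{[\E^{-\I t\hepsisigma/\epsi},\pepsim]\chi(\hepsisigma)}_{\mathcal{L}(\hilbert{H})}=\norm{[\E^{-\I t\hdres/\epsi},\qm]\chi(\hdres)}_{\mathcal{L}(\hilbert{H})}.
\end{equation*}
Here the block structure enters: $\hiepsi{0}=\sum_{j}\tfrac{1}{2m_{j}}\pop_{j}^{2}+\vcoul+\Hf$ commutes with $\qm$ (the particle operators act trivially on the photon sectors and $\Hf$ preserves the photon number), so $[\E^{-\I t\hiepsi{0}/\epsi},\qm]=0$. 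Writing the commutator as $\E^{-\I t\hdres/\epsi}\qm\chi(\hdres)-\qm\E^{-\I t\hdres/\epsi}\chi(\hdres)$ and invoking the zero-order theorem (with its auxiliary cutoff taken equal to $\chi$ — legitimate provided $\hilbert{U}$ is built with an energy cutoff identically one on $\mathrm{supp}\,\chi$), its first estimate identifies the first summand with $\qm\E^{-\I t\hiepsi{0}/\epsi}\chi(\hdres)$ and its second estimate identifies the second summand with the \emph{same} $\qm\E^{-\I t\hiepsi{0}/\epsi}\chi(\hdres)$, in both cases up to $\Or(\sqrt{M+1}\abs{t}\epsi\sqrt{\log(\sigma^{-1})})$. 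The two leading parts cancel in the difference, which is therefore of exactly that order.

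\emph{Conclusion and main obstacle.} Combining the two steps and using $\sigma^{1/2}=o(\epsi^{2})=o(\epsi\sqrt{\log(\sigma^{-1})})$ to absorb the first-step error proves the corollary. Essentially all the analysis is already available — uniform self-adjointness (Proposition \ref{selfadjointness}), the infrared estimates, the construction and uniform boundedness of $\hilbert{U}$ (Theorem \ref{unitaryintertwiner}), and, above all, the Duhamel comparison of $\hdres$ with $\hiepsi{0}$ underlying the zero-order theorem — so no estimate is the real obstacle; what needs care is purely bookkeeping: keeping the three cutoff functions (the one in the statement, the one inside $\hilbert{U}$, and the auxiliary one in the zero-order theorem) consistent, respecting the photon-number restriction $M<\mathrm{L}-1$ implicit in that theorem, and checking that every operator inserted between two unitary groups is bounded on the relevant space, $\hilbert{H}$ or $\hilbert{H}_{0}$.
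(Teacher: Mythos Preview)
Your proposal is correct and follows essentially the same route as the paper: combine the two one-sided estimates of the zero-order theorem via $[\E^{-\I t\hiepsi{0}/\epsi},\qm]=0$ to get the commutator bound in the dressed picture, transport it back through $\hilbert{U}$, and use Proposition~\ref{infraredcutoff} and Lemma~\ref{lemmachiinfra} to remove the infrared cutoff. The paper simply runs these three moves in the opposite order and is terser about the cancellation mechanism; your explicit remark about the cutoff bookkeeping ($\tilde\chi\chi=\tilde\chi$, $M<\mathrm{L}-1$) is a fair caveat the paper leaves implicit.
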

\begin{proof}
From the previous theorem it follows that 
\begin{equation*}
\norm{[\E^{-\I\hdres\frac{t}{\epsi}}, \qm]\chi(\hdres)}_{\mathcal{L}(\hilbert{H})} = \Or\big(\sqrt{M+1}\abs{t}\epsi\sqrt{\log(\sigma(\epsi)^{-1})}\big)\quad .
\end{equation*}

From the definition of $\hdres$, equation \eqref{hdres}, we deduce therefore that
\begin{equation*}
\norm{[\E^{-\I\hepsisigma\frac{t}{\epsi}}, \pepsim]\chi(\hepsisigma)}_{\mathcal{L}(\hilbert{H})} = \Or\big(\sqrt{M+1}\abs{t}\epsi\sqrt{\log(\sigma(\epsi)^{-1})}\big)\quad ,
\end{equation*}
but
\begin{equation*}\begin{split}
& [\E^{-\I\hepsi\frac{t}{\epsi}}, \pepsim]\chi(\hepsi) = [(\E^{-\I\hepsi\frac{t}{\epsi}} - \E^{-\I\hepsisigma\frac{t}{\epsi}}), \pepsim]\chi(\hepsi) + \\
& + [\E^{-\I\hepsisigma\frac{t}{\epsi}}, \pepsim](\chi(\hepsi) - \chi(\hepsisigma)) + [\E^{-\I\hepsisigma\frac{t}{\epsi}}, \pepsim]\chi(\hepsisigma)=\\
& = \Or(\sigma(\epsi)^{1/2}\abs{t})_{\mathcal{L}(\hilbert{H})} + \Or(\epsi\sigma(\epsi)^{1/2})_{\mathcal{L}(\hilbert{H})} +\\ &+ \Or\big(\sqrt{M+1}\abs{t}\epsi\sqrt{\log(\sigma(\epsi)^{-1})}\big)_{\mathcal{L}(\hilbert{H})},
\end{split}
\end{equation*}
where we have used equations \eqref{cutofftimeev} and \eqref{cutoffchi}, the fact that $\chi(\hepsi)\in\mathcal{L}(\hilbert{H}, \hilbert{H}_{0})$ with norm uniformly bounded in $\epsi$ and equation \eqref{normubounded}.
\end{proof}

\begin{lemma}\label{lemtrunchdres}
The truncated dressed Hamiltonian
\begin{equation}
\hdrestwo:= \hiepsi{0} + \epsi\hiepsidres{1} + \epsi^{2}\hiepsidres{2}
\end{equation}
is self-adjoint on $\hilbert{H}_{0}$ for $\epsi$ small enough.
\end{lemma}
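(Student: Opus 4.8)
The plan is to realize $\hdrestwo$ as a symmetric, relatively bounded perturbation of the free Hamiltonian $\hiepsi{0} = \hepsi_{0} + \vcoul$ — which, $\vcoul$ being a bounded function, is self-adjoint on $\hilbert{H}_{0} = D(\hepsi_{0})$ — and to conclude by the Kato--Rellich theorem. Accordingly, write $\hdrestwo = \hiepsi{0} + V^{\epsi}$ with $V^{\epsi} := \epsi\hiepsidres{1} + \epsi^{2}\hiepsidres{2}$; it suffices to check (i) that $V^{\epsi}$ is symmetric on $\hilbert{H}_{0}$, and (ii) that $V^{\epsi}$ is $\hiepsi{0}$-bounded with relative bound $<1$ for $\epsi$ small enough.

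For (ii) I would use that $V^{\epsi}$ is bounded from $\hilbert{H}_{0}$, equipped with the $\hiepsi{0}$-graph norm, into $\hilbert{H}$, with a norm that tends to $0$ as $\epsi\to 0^{+}$. This is essentially already contained in the proof of Theorem \ref{thmhdres}: from \eqref{honedres} the coefficient $\hiepsidres{1}$ is assembled out of $\hiepsi{1}$ — which lies in $\mathcal{L}(\hilbert{H}_{0},\hilbert{H})$ with a norm uniform in $\epsi,\sigma$, cf.\ the estimate displayed in the proof of Proposition \ref{selfadjointness} — together with the uniformly bounded operators $\chi(\hepsisigma),\qless{L},\qmore{L}$, so $\norm{\hiepsidres{1}}_{\mathcal{L}(\hilbert{H}_{0},\hilbert{H})} = \Or(1)$; from \eqref{htwodres}, using Lemma \ref{lemmaphisigma} (for $\phijsigmal$ and $\uonelsigma$) and Proposition \ref{basicestimate}, $\norm{\hiepsidres{2}}_{\mathcal{L}(\hilbert{H}_{0},\hilbert{H})}$ is at most logarithmically divergent in $\sigma$. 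Hence, under conditions \eqref{conditionssigma} (which in particular force $\epsi^{2}\log(\sigma^{-1})\to 0$), $\norm{V^{\epsi}}_{\mathcal{L}(\hilbert{H}_{0},\hilbert{H})} = \Or(\epsi)\to 0$. Since the $\hiepsi{0}$-graph norm is equivalent to $\norm{\cdot}_{\hilbert{H}_{0}}$, one gets for $u\in\hilbert{H}_{0}$ that $\norm{V^{\epsi}u}_{\hilbert{H}} \le C\,\norm{V^{\epsi}}_{\mathcal{L}(\hilbert{H}_{0},\hilbert{H})}\big(\norm{\hiepsi{0}u}_{\hilbert{H}} + \norm{u}_{\hilbert{H}}\big)$, i.e.\ $V^{\epsi}$ is $\hiepsi{0}$-bounded with relative bound $\le C\norm{V^{\epsi}}_{\mathcal{L}(\hilbert{H}_{0},\hilbert{H})}$, which is $<1$ once $\epsi$ is small.

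For (i): the symmetry of $\hiepsidres{1}$ is immediate from \eqref{honedres}, which is the sum of the self-adjoint operators $(\id-\chi(\hepsisigma))\qless{L}\hiepsi{1}\qless{L}(\id-\chi(\hepsisigma))$ and $\qmore{L}\hiepsi{1}\qmore{L}$ and the self-adjoint combination $\qless{L}\hiepsi{1}\qmore{L} + \qmore{L}\hiepsi{1}\qless{L}$ (here $\hiepsi{1}$ is symmetric thanks to the Coulomb gauge condition $k\cdot e_{\lambda}(k)=0$, and $\chi(\hepsisigma),\qless{L},\qmore{L}$ are self-adjoint). For $\hiepsidres{2}$ I would argue through the commutator representation used in the proof of Theorem \ref{thmhdres}: modulo $\Or(\epsi^{3}(\log\sigma^{-1})^{3/2})$ one has $\hdres = \hepsisigma + \epsi[T_{\sigma},\hepsisigma] + \frac{\epsi^{2}}{2}[T_{\sigma},[T_{\sigma},\hepsisigma]]$ with $T_{\sigma}^{*} = -T_{\sigma}$ (see \eqref{tsigma} and Theorem \ref{unitaryintertwiner}); inserting $\hepsisigma = \hiepsisigma{0} + \epsi\hiepsisigma{1} + \epsi^{2}\hiepsisigma{2}$ and reading off the coefficient of $\epsi^{2}$ exhibits $\hiepsidres{2}$ as $\hiepsisigma{2}$ plus a finite sum of terms of the form $[T_{\sigma},\hiepsisigma{m}]$ and $[T_{\sigma},[T_{\sigma},\hiepsisigma{m}]]$, each of which is symmetric since $T_{\sigma}$ is anti-symmetric and every $\hiepsisigma{m}$ is symmetric; replacing the $\hiepsisigma{m}$ by the $\hiepsi{m}$ via the infrared estimates of Proposition \ref{infraredcutoff} (exactly as in the passage to the final form of $\hdres$) does not destroy symmetry.

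Granting (i) and (ii), the Kato--Rellich theorem yields that $\hdrestwo = \hiepsi{0} + V^{\epsi}$ is self-adjoint on $D(\hiepsi{0}) = \hilbert{H}_{0}$ for $\epsi$ small enough, which is the claim. The main obstacle is precisely the symmetry check (i) for $\hiepsidres{2}$: unwinding \eqref{htwodres} directly, term by term, is fiddly, and routing through the commutator form of $\hdres$ is what makes it transparent. (An equivalent alternative would perturb $\hdres$ itself — self-adjoint on $\hilbert{H}_{0}$ as a unitary conjugate of $\hepsisigma$ — by $\hdrestwo - \hdres = \Or(\epsi^{3}(\log\sigma^{-1})^{3/2})_{\mathcal{L}(\hilbert{H}_{0},\hilbert{H})}$, which renders the smallness of the perturbation immediate from Theorem \ref{thmhdres} but still requires one to know that $\hdrestwo$ is symmetric.)
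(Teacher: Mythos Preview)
Your proposal is correct and follows exactly the approach the paper indicates: the paper's own proof is the single sentence ``The proof follows from a symmetric version of the Kato theorem (\cite{ReSi2}, theorem X.13),'' and your argument spells out precisely the two ingredients this requires --- relative $\hiepsi{0}$-boundedness with small bound (which you correctly extract from the $\mathcal{L}(\hilbert{H}_{0},\hilbert{H})$-estimates already established in Theorem~\ref{thmhdres} and Proposition~\ref{selfadjointness}) and symmetry of the perturbation (which you verify via the anti-symmetry $T_{\sigma}^{*}=-T_{\sigma}$ and the commutator form of the expansion). Your parenthetical alternative, perturbing the manifestly self-adjoint $\hdres$ by the $\Or(\epsi^{3}(\log\sigma^{-1})^{3/2})$ remainder, is an equally valid route to the same conclusion.
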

\begin{proof}
The proof follows from a symmetric version of the Kato theorem (\cite{ReSi2}, theorem X.13).
\end{proof}

\begin{theorem}\label{firstordertimeevolution}\emph{(First order approximation to the time evolution)} Given a function $\tilde{\chi}\in\coinf(\field{R}{})$,
\begin{equation}\label{firstordertime}\begin{split} 
& \E^{-\I t\hdres/\epsi}\qm\tilde{\chi}(\hdres)= \\
& = \E^{-\I t\hdiag/\epsi}\qm\tilde{\chi}(\hdres)-\I\epsi\int_{0}^{t}ds\, \E^{\I(s-t)\hiepsi{0}/\epsi}h_{2, \mathrm{OD}}\E^{-\I s\hiepsi{0}/\epsi}\qm\tilde{\chi}(\hdres) \\
&+ \Or(\epsi^{3/2}\abs{t})_{\mathcal{L}(\hilbert{H})}(1-\delta_{M0}) +\,\Or\big(\epsi^{2}\sqrt{\log(\sigma^{-1})}(\abs{t} + \abs{t}^{2}))_{\mathcal{L}(\hilbert{H})} \,,
\end{split}
\end{equation}
where $\delta_{M0}=1$ when $M=0$, $0$ otherwise, $[\hdiag, \qm]=0\,\,\, \forall\, M$,
\begin{equation}\label{hdiag}\begin{split}
\hdiag &:= \sum_{j=1}^{N} \frac{1}{2m_{j}}\pop_{j}^{2} + \vcoul + \Hf +\\
&-\epsi^{2}\sum_{l, j=1}^{N}\frac{e_{j}e_{l}}{m_{j}m_{l}}\int_{\field{R}{3}}dk\, \frac{\abs{\hat{\varphi}(k)}^{2}}{2\abs{k}^{2}}\E^{\I k\cdot x_{j}}\pop_{j}\cdot(\id - \kappa\otimes\kappa)\pop_{l}\E^{-\I k\cdot x_{l}} =\\
&= \hiepsi{0} + \epsi^{2}\vdarw \quad.
\end{split}
\end{equation}
The off-diagonal Hamiltonian is defined by
\begin{equation}
h_{2, \mathrm{OD}}:= \sum_{j=1}^{N}\frac{e_{j}}{m_{j}}\phijsigma\cdot\nabla_{x_{j}}\vcoul\, .
\end{equation}
\end{theorem}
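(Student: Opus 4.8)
The plan is to reduce $\E^{-\I t\hdres/\epsi}\qm\tilde{\chi}(\hdres)$ to the evolution generated by the photon-number-block-diagonal operator $\hdiag$, keeping only the single off-diagonal term $h_{2,\mathrm{OD}}$ that the dressing transformation does not remove, by successive Duhamel expansions. First I would replace $\hdres$ by its truncation: by Theorem \ref{thmhdres} and Lemma \ref{lemtrunchdres}, $\hdres=\hdrestwo+\Or(\epsi^{3}(\log\sigma^{-1})^{3/2})_{\mathcal{L}(\hilbert{H}_{0},\hilbert{H})}$ with $\hdrestwo$ self-adjoint on $\hilbert{H}_{0}$, and since $\qm\tilde{\chi}(\hdres)$ maps $\hilbert{H}$ boundedly into $\hilbert{H}_{0}$, one Duhamel step lets us pass from $\E^{-\I t\hdres/\epsi}$ to $\E^{-\I t\hdrestwo/\epsi}$ and likewise interchange $\tilde{\chi}(\hdres)$ with $\tilde{\chi}(\hdrestwo)$, with an error of the type allowed in \eqref{firstordertime} (the remainder being photon-number-off-diagonal up to $\Or(\epsi^{3})$ diagonal pieces, it also profits from the decoupling described below). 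Then I would write $\hdrestwo=\hdiag+V$, splitting off what commutes with all the $\qm$: the first-order coefficient $\hiepsidres{1}$ of \eqref{honedres} is purely photon-number-off-diagonal, so $\qm\hiepsidres{1}\qm=0$; the block-diagonal part of $\epsi^{2}\hiepsidres{2}$ that matters in \eqref{htwodres} is $\epsi^{2}\vdarw$ — the c-number-in-the-field part of $[\uonelsigma,\hiepsisigma{1}]$ of \eqref{darwin}, with the $-\tfrac12\chi^{2}\qless{L}[\uonelsigma,\hiepsisigma{1}]\qless{L}$ correction and the $\sigma$ removed via \eqref{estimatecoulomb}, \eqref{infraredspin}, \eqref{infraredasquare}. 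The remainder $V=\epsi\hiepsidres{1}+\epsi^{2}(\hiepsidres{2}-\vdarw)$ contains $\epsi^{2}h_{2,\mathrm{OD}}$ among its summands.

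I would then Duhamel between $\E^{-\I t\hdrestwo/\epsi}$ and $\E^{-\I t\hdiag/\epsi}$: since $[\hdiag,\qm]=0$, the state $\E^{-\I s\hdiag/\epsi}\qm\tilde{\chi}(\hdrestwo)$ stays in $\qm\hilbert{H}$ and, the energy localization being transported by $\hdiag$ up to $\Or(\epsi)$, it remains essentially supported where $\chi\equiv1$. Now inspect $V$ acting on such a state. Every summand of $V$ other than $\epsi^{2}h_{2,\mathrm{OD}}$ carries either a projector $\qmore{L}$ next to $\qm$ — which annihilates it for $M<\mathrm{L}-1$, the field operators shifting the photon number by one (lemma \ref{lemmaxi}) — or a factor $(\id-\chi(\hepsisigma))$ (or a commutator $[\chi,\,\cdot\,]$) next to the energy-localized state; by lemma \ref{approxchi} together with $\chi\equiv1$ on $\mathrm{supp}\,\tilde{\chi}$, such a factor is $\Or(\epsi)$ there, and, crucially, it forces an $\Or(1)$ energy separation between $\mathrm{supp}\,\tilde{\chi}$ and the range of $(\id-\chi)$, so a further integration by parts in $s$ against the fast phase $\E^{\pm\I s\hdiag/\epsi}$ extracts one more power of $\epsi$. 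Hence all these terms together contribute only $\Or(\epsi^{2}\sqrt{\log\sigma^{-1}}(\abs{t}+\abs{t}^{2}))$, with an extra $\Or(\epsi^{3/2}\abs{t})$ loss for $M\geq1$ coming from the weaker bound on $\hiepsidres{1}$ on $\qm$ with $M\geq1$ (its annihilation part must also be controlled, forcing an interpolation between the $\hilbert{H}_{0}\to\hilbert{H}$ and $\hilbert{H}\to\hilbert{H}$ estimates). The term $h_{2,\mathrm{OD}}$ has none of this structure — $\nabla_{x_{j}}\vcoul$ is a bounded multiplication operator and $\phijsigma$ is linear in the field — so it survives.

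What is left, $-\frac{\I}{\epsi}\int_{0}^{t}ds\,\E^{-\I(t-s)\hdrestwo/\epsi}\,\epsi^{2}h_{2,\mathrm{OD}}\,\E^{-\I s\hdiag/\epsi}\qm\tilde{\chi}(\hdrestwo)$, becomes the stated integral once both propagators are replaced by $\E^{\mp\I(\,\cdot\,)\hiepsi{0}/\epsi}$ and $\tilde{\chi}(\hdrestwo)$ by $\tilde{\chi}(\hdres)$: each such replacement is controlled through an inner Duhamel by $\hdrestwo-\hiepsi{0}$ or $\hdiag-\hiepsi{0}$ acting on an energy-localized state (note that $h_{2,\mathrm{OD}}\E^{-\I s\hdiag/\epsi}\qm\tilde{\chi}$ is still energy-localized, with a wider cutoff, by lemma \ref{lemmaxi}), which produces the nested time integral responsible for the $\abs{t}^{2}$ in the error. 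Collecting the contributions of all the steps above yields \eqref{firstordertime}.

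The main obstacle is the decoupling step. There is no spectral gap in the principal symbol \eqref{principalsymbol} (soft photons), so the standard adiabatic machinery of \cite{Te} is unavailable, and the naive estimate inherited from the zero-order approximation gives only $\Or(\epsi\abs{t})$ for the $\hiepsidres{1}$-contribution; gaining the extra power of $\epsi$ — equivalently, seeing that the first-order term drops out of the effective dynamics — hinges on the $(\id-\chi)$-sandwich built into $\hiepsidres{1}$, which converts the absence of a gap in $\hepsisigma$ into an $\Or(1)$ gap between the relevant spectral regions, and on carrying this through the second Duhamel step uniformly in the logarithmically divergent cutoff $\sigma$ and in the photon number $M$.
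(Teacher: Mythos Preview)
Your overall strategy---truncate to $\hdrestwo$, split off a block-diagonal piece, Duhamel, discard the remainder---is the right skeleton, and your treatment of $\hiepsidres{1}$ (the $(\id-\chi)$ sandwich plus lemma~\ref{lemmaxi} giving $\Or(\epsi^{2})$) matches the paper. But the decoupling step has a real gap.

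You claim that every summand of $V$ other than $\epsi^{2}h_{2,\mathrm{OD}}$ carries either a $\qmore{L}$ or a $(\id-\chi)$ factor. This is false. Look at $\hiepsidres{2}$ in \eqref{htwodres}: the spin term $\sigma_{j}\cdot\Phi(\I k\times v_{x_{j}})$ inside $\hiepsi{2}$, the pieces $a(v_{x_{j}})^{*}a(v_{x_{j}})$, $a(v_{x_{j}})^{2}$, $a(v_{x_{j}})^{*\,2}$ of $:\Phi(v_{x_{j}})^{2}:$, and the gradient term $\sum_{j,l}\frac{e_{j}}{m_{l}m_{j}}\nabla_{x_{l}}(\phijsigmal\cdot\pop_{j})\cdot\pop_{l}$ all appear \emph{without} any $\chi$ or $\qmore{L}$ next to them. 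Your energy-gap mechanism simply does not apply to these terms, and there is no spectral separation to integrate by parts against---these operators act within the same energy window as $h_{2,\mathrm{OD}}$.

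What actually distinguishes $h_{2,\mathrm{OD}}$ from the rest is the infrared behaviour of the coupling: $\phijsigma$ carries $\abs{k}^{-3/2}$, while the spin term carries $\abs{k}^{+1/2}$, the $\nabla_{x_{l}}\phijsigma$ and $:\Phi^{2}:$ terms carry $\abs{k}^{-1/2}$. The paper does not use an energy gap; it integrates by parts in $s$ against the \emph{free-field} phase $\E^{-\I s\abs{k}/\epsi}$ inside the explicit Fock-space expression, then applies Cauchy--Schwarz and computes $\int_{0}^{\Lambda}\frac{\abs{k}^{2+2\alpha}}{1+\abs{k}^{2}\epsi^{-2}}\,d\abs{k}$ for the relevant $\alpha$. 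This is why the spin term, though nominally $\Or(\epsi^{2})$, actually contributes at a higher order (remark~\ref{remarkspin}), and why $h_{2,\mathrm{OD}}$ alone survives: its density of states does not vanish at $k=0$. Your proposal contains no trace of this oscillatory-in-$\abs{k}$ argument, and without it the proof cannot close. Relatedly, the $\Or(\epsi^{3/2}\abs{t})(1-\delta_{M0})$ error does not come from $\hiepsidres{1}$ as you suggest; it comes from the diagonal $a^{*}a$ piece of $:\Phi^{2}:$, which is exactly zero on the vacuum but for $M\geq1$ is handled by the same $\abs{k}$-oscillation estimate, yielding only $\epsi^{3/2}$ rather than $\epsi^{2}$.
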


\begin{remark}\label{remarkspin}
The spin term, as mentioned in the introduction, does not appear in the effective dynamics, even though it is apparently of order $\Or(\epsi^{2})$. This is due to the fact that it is off diagonal with respect to the decomposition of the Hilbert space associated to the $\qm$s and that the coupling function in the magnetic field, equation \eqref{magneticfield}, goes to zero like $\abs{k}^{1/2}$ when $\abs{k}\to 0^{+}$. This implies that the term is actually smaller than $\Or(\epsi^{2})$, as explained in the proof.
\end{remark}

\begin{proof}
We split the proof into three parts. In the first one, we show that equation \eqref{firstordertime} is true with a diagonal Hamiltonian $\hdiagtilde$ given by
\begin{equation}\label{hdiagtilde}\begin{split}
\hdiagtilde &:= \sum_{j=1}^{N} \frac{1}{2m_{j}}\pop_{j}^{2} + \vcoul + \Hf + \epsi^{2}\sum_{j=1}^{N}\frac{e_{j}^{2}}{2m_{j}}a(v_{x_{j}})^{*}a(v_{x_{j}}) +\\
&-\epsi^{2}\sum_{l, j=1}^{N}\frac{e_{j}e_{l}}{m_{j}m_{l}}\int_{\field{R}{3}}dk\, \frac{\abs{\hat{\varphi}(k)}^{2}}{2\abs{k}^{2}}\E^{\I k\cdot x_{j}}\pop_{j}\cdot(\id - \kappa\otimes\kappa)\pop_{l}\E^{-\I k\cdot x_{l}} \,,
\end{split}
\end{equation}
and an off-diagonal one $\tilde{h}_{2, \mathrm{OD}}$ defined by
\begin{equation}\label{htwotildeod}\begin{split} 
\tilde{h}_{2, \mathrm{OD}} &:= \sum_{j=1}^{N}\frac{e_{j}}{m_{j}}\phijsigma\cdot\nabla_{x_{j}}\vcoul +  \sum_{j, l =1}^{N} \frac{e_{j}}{m_{l}m_{j}}\nabla_{x_{l}}(\phijsigma\cdot\pop_{j})\cdot\pop_{l} +\\ &+ \sum_{j=1}^{N}\frac{e_{j}}{2m_{j}}\sigma_{j}\cdot\Phi(\I k \times v_{x_{j}}) + \sum_{j=1}^{N}\frac{e_{j}^{2}}{2m_{j}}\big[a(v_{x_{j}})^{2} + a(v_{x_{j}})^{*\,2}] \, .
\end{split}
\end{equation}
In the second part we prove that if one neglects the term
\begin{equation*}
\epsi^{2}\sum_{j=1}^{N}\frac{e_{j}^{2}}{2m_{j}}a(v_{x_{j}})^{*}a(v_{x_{j}})
\end{equation*}
in $\hdiagtilde$, one gets an error of order $\Or(\epsi^{3/2}\abs{t})$ in the time evolution. Note that this term is exactly zero if the initial state for the field is the Fock vacuum.

In the third part, we prove analogously that we can replace $\tilde{h}_{2, \mathrm{OD}}$ with $h_{2, \mathrm{OD}}$. 

More specifically, the terms which we neglect in $\hdiagtilde$ and $\tilde{h}_{2, \mathrm{OD}}$ give rise to higher order contributions to the time evolution, although their norm in $\mathcal{L}(\hilbert{H}_{0}, \hilbert{H})$ is not small. 

This is caused by the fact that they are strongly oscillating in $\abs{k}$, so that their behavior is determined by the value of the density of states in a neighborhood of $k=0$. For all these terms, the density however vanishes for $k=0$, uniformly in $\sigma$, and this implies that they are of lower order with respect to the leading pieces whose density is constant (for the terms in $\hdiag$) or diverges logarithmically in $\sigma$ (for the terms in $h_{2, \mathrm{OD}}$). We elaborate on this last observation in a corollary to this theorem. 

We start showing that we can, up to the desired error, replace $\hdres$ by $\hdrestwo$. 
By lemma \ref{lemtrunchdres}, $\hdrestwo$ is self-adjoint on $\hilbert{H}_{0}$ like $\hdres$, therefore we can apply the Duhamel formula and use theorem \ref{thmhdres} to get
\begin{eqnarray*}
\E^{-\I t\hdres/\epsi}-\E^{-\I t\hdrestwo/\epsi}& = &-\frac{\I}{\epsi}\int_{0}^{t}ds\, \E^{\I (s-t)\hdres/\epsi}(\hdres-\hdrestwo)\E^{-\I s \hdrestwo/\epsi}\\
&=&\Or(\epsi^{2}\big(\log(\sigma^{-1})\big)^{3/2})\,.
\end{eqnarray*}
Moreover, using lemma \ref{approxchi}, we can replace $\qm\tilde{\chi}(\hdres)$ by $\qm\tilde{\chi}(\hiepsi{0})\tilde{\tilde{\chi}}(\hdres)$.

Since the diagonal Hamiltonian $\hdiagtilde$ is also self-adjoint on $\hilbert{H}_{0}$ for $\epsi$ sufficiently small (the proof can be given along the same lines of lemma \ref{lemtrunchdres}), we apply again Duhamel formula, 
\begin{eqnarray*}
\lefteqn{ (\E^{-\I t\hdrestwo/\epsi}-\E^{-\I t\hdiagtilde/\epsi})\qm\tilde{\chi}(\hiepsi{0})\tilde{\tilde{\chi}}(\hdres) =}\\ &=&-\,\frac{\I}{\epsi}\int_{0}^{t}ds\ \E^{\I(s-t)\hdrestwo/\epsi}(\hdrestwo-\hdiagtilde)\E^{-\I s\hdiagtilde/\epsi}\qm\tilde{\chi}(\hiepsi{0})\tilde{\tilde{\chi}}(\hdres)\\
&=&-\,\I\int_{0}^{t}ds\ \E^{\I(s-t)\hdrestwo/\epsi}\hiepsidres{1}\E^{-\I s\hdiagtilde/\epsi}\qm\tilde{\chi}(\hiepsi{0})\tilde{\tilde{\chi}}(\hdres)\\
&&-\,\I\epsi\int_{0}^{t}ds\ \E^{\I(s-t)\hdrestwo/\epsi}(\hiepsidres{2}-h_{2, \mathrm{D}})\E^{-\I s\hdiagtilde/\epsi}\qm\tilde{\chi}(\hiepsi{0})\tilde{\tilde{\chi}}(\hdres)\,.
\end{eqnarray*}
To analyze the first term, we remark that, proceeding as in lemma \ref{approxchi}, one can prove that
\begin{equation*}
\tilde{\chi}(\hiepsi{0})-\tilde{\chi}(\hdiagtilde)= \Or(\epsi^{2}\sqrt{\log(\sigma^{-1})})_{\mathcal{L}(\hilbert{H})},
\end{equation*}
so 
\begin{equation*}
[\E^{-\I t\hdiagtilde/\epsi}, \tilde{\chi}(\hiepsi{0})] = \Or(\epsi^{2}\sqrt{\log(\sigma^{-1})})_{\mathcal{L}(\hilbert{H})},
\end{equation*}
therefore,
\begin{eqnarray*}
\lefteqn{\hspace{-1cm}-\I\int_{0}^{t}ds\ \E^{\I(s-t)\hdrestwo/\epsi}\hiepsidres{1}\E^{-\I s\hdiagtilde/\epsi}\qm\tilde{\chi}(\hiepsi{0})\tilde{\tilde{\chi}}(\hdres)=}\\
&=&-\,\I\int_{0}^{t}ds\ \E^{\I(s-t)\hdrestwo/\epsi}\hiepsidres{1}\qm\tilde{\chi}(\hiepsi{0})\E^{-\I s\hdiagtilde/\epsi}\tilde{\tilde{\chi}}(\hdres) \\
&&+\, \Or(\epsi^{2}\abs{t}\sqrt{\log(\sigma^{-1})})_{\mathcal{L}(\hilbert{H})}.
\end{eqnarray*}
From equation \eqref{honedres} it follows
\begin{equation*}\begin{split}
& \hiepsidres{1}\qm\tilde{\chi}(\hiepsi{0}) = \big(\id-\chi(\hepsisigma)\big)\qless{L}\hiepsi{1}\qless{L}\big(\id-\chi(\hepsisigma)\big)\qm\tilde{\chi}(\hiepsi{0}) = \\
&=\Or(\epsi^{2})_{\mathcal{L}(\hilbert{H})},
\end{split}
\end{equation*}
using lemma \ref{approxchi} twice and lemma \ref{lemmaxi}.

Concerning the second one, applying once again the Duhamel formula, we have
\begin{eqnarray*}
\lefteqn{ -\I\epsi\int_{0}^{t}ds\ \E^{\I(s-t)\hdrestwo/\epsi}(\hiepsidres{2}-h_{2, \mathrm{D}})\E^{-\I s\hdiagtilde/\epsi}\qm\tilde{\chi}(\hiepsi{0})\tilde{\tilde{\chi}}(\hdres)=}\\
&=& -\,\I\epsi\int_{0}^{t}ds\ \E^{\I(s-t)\hdrestwo/\epsi}(\hiepsidres{2}-h_{2, \mathrm{D}})\E^{-\I s\hiepsi{0}/\epsi}\qm\tilde{\chi}(\hiepsi{0})\tilde{\tilde{\chi}}(\hdres) \\
&&+\, \Or(\epsi^{2}\abs{t}^{2}\sqrt{\log(\sigma^{-1})})_{\mathcal{L}(\hilbert{H})}\,,
\end{eqnarray*}
so we have to look at $\E^{\I(s-t)\hdrestwo/\epsi}(\hiepsidres{2} - h_{2, \mathrm{D}})\qm\tilde{\chi}(\hiepsi{0})$. 
Following a procedure already employed several times, we first observe that, in the expression for $\hiepsidres{2}$, equation \eqref{htwodres}, we can replace, making an error of order $\Or(\epsi)$, $\chi(\hepsisigma)$ with $\chi(\hiepsi{0})$.

Applying lemma \ref{lemmaxi}, we see that the last three terms in \eqref{htwodres} vanish, and that the terms containing $[\uonelsigma, \hiepsisigma{1}]$ combine to give the Darwin term and the expression of the effective mass. What remains is exactly $\tilde{h}_{2, \mathrm{OD}}$. Finally, we apply again Duhamel formula to approximate the time evolution generated by $\hdrestwo$, getting
\begin{eqnarray*}
\lefteqn{ \E^{\I(s-t)\hdrestwo/\epsi}(\hiepsidres{2}-h_{2, \mathrm{D}})\qm\tilde{\chi}(\hiepsi{0})=}\\&=&\E^{\I(s-t)h_{0}/\epsi}\tilde{h}_{2, \mathrm{OD}}\qm\tilde{\chi}(\hiepsi{0}) +
\Or(\epsi\abs{t}\sqrt{\log(\sigma^{-1})})\,.
\end{eqnarray*}

We proceed now to show that we can replace $\hdiagtilde$ with $\hdiag$, up to an error of order $\Or(\epsi^{3/2}\abs{t})_{\mathcal{L}(\hilbert{H})}$.

Applying repeatedly Duhamel formula, and putting $\Psi:= \qm\tilde{\chi}(\hdres)\Psi_{0}$, we get
\begin{equation*}\begin{split}
& (\E^{-\I t\hdiagtilde/\epsi} - \E^{-\I t\hdiag/\epsi})\Psi = -\I\epsi\sum_{j=1}^{N}\frac{e_{j}^{2}}{2m_{j}}\int_{0}^{t}ds\, \E^{\I(s-t)\hiepsi{0}/\epsi}a(v_{x_{j}})^{*}a(v_{x_{j}})\E^{-\I s\hiepsi{0}/\epsi}\Psi \\
&+\Or(\epsi^{2}\abs{t}^{2})_{\mathcal{L}(\hilbert{H})}.
\end{split}
\end{equation*}
To streamline the presentation, we assume that $M=1$, the calculations for $M>1$ are basically the same, but more cumbersome.

The integral gives therefore
\begin{equation}\label{estimatetermm1}\begin{split}
&\E^{-\frac{\I}{\epsi} t\hiepsi{0}}f(k_{1}, \lambda_{1})\sum_{\lambda=1}^{2}\int_{\field{R}{3}}dk\, f(k, \lambda)^{*}\int_{0}^{t}ds\, \E^{\I \frac{s}{\epsi}(\abs{k_{1}}-\abs{k})}\E^{\I \frac{s}{\epsi}\hiepsi{\mathrm{p}}}\E^{\I x_{j}\cdot(k-k_{1})}\E^{-\I \frac{s}{\epsi}\hiepsi{\mathrm{p}}}\Psi(k, \lambda)\\
& = \E^{-\frac{\I}{\epsi} t\hiepsi{0}}f(k_{1}, \lambda_{1})\sum_{\lambda=1}^{2}\int_{\field{R}{3}}dk\, \frac{f(k, \lambda)^{*}}{1 + \I(\abs{k_{1}}-\abs{k})\epsi^{-1}}[1 + \I(\abs{k_{1}}-\abs{k})\epsi^{-1}]\cdot\\
&\cdot\int_{0}^{t}ds\, \E^{\I \frac{s}{\epsi}(\abs{k_{1}}-\abs{k})}\E^{\I \frac{s}{\epsi}\hiepsi{\mathrm{p}}}\E^{\I x_{j}\cdot(k-k_{1})}\E^{-\I \frac{s}{\epsi}\hiepsi{\mathrm{p}}}\Psi(k, \lambda).
\end{split}
\end{equation} 
Integrating by parts we get
\begin{equation*}\begin{split}
& \I(\abs{k_{1}}-\abs{k})\epsi^{-1}\int_{0}^{t}ds\, \E^{\I s(\abs{k_{1}}-\abs{k})/\epsi}\E^{\I s\hiepsi{\mathrm{p}}}\E^{\I x_{j}\cdot(k-k_{1})}\E^{-\I s\hiepsi{\mathrm{p}}}\Psi(k, \lambda)=\\
&=\E^{\I t(\abs{k_{1}}-\abs{k})/\epsi}\E^{\I t\hiepsi{\mathrm{p}}}\E^{\I x_{j}\cdot(k-k_{1})}\E^{-\I t\hiepsi{\mathrm{p}}}\Psi - \E^{\I x_{j}\cdot(k-k_{1})}\Psi +\\
&- \frac{\I}{\epsi}\int_{0}^{t}ds\, \E^{\I s(\abs{k_{1}}-\abs{k})/\epsi}\E^{\I s\hiepsi{\mathrm{p}}}[\hiepsi{\mathrm{p}}, \E^{\I x_{j}\cdot(k-k_{1})}]\E^{-\I s\hiepsi{\mathrm{p}}}\Psi,
\end{split}
\end{equation*}
where the commutator is of order $\Or(\epsi)$ when applied to functions of bounded kinetic energy, so that the right-hand side is uniformly bounded in $\epsi$.

We have now to put this expression back in \eqref{estimatetermm1} and estimate the single terms. We show how to do this for the first one, the others being entirely analogous. We ignore the unitary on the left, which does not change the norm, so we have to consider
\begin{equation*}\begin{split}
& f(k_{1}, \lambda_{1})\sum_{\lambda=1}^{2}\int_{\field{R}{3}}dk\, \frac{f(k, \lambda)^{*}}{1 + \I(\abs{k_{1}}-\abs{k})\epsi^{-1}}\cdot\\
&\cdot\int_{0}^{t}ds\, \E^{\I s(\abs{k_{1}}-\abs{k})/\epsi}\E^{\I s\hiepsi{\mathrm{p}}}\E^{\I x_{j}\cdot(k-k_{1})}\E^{-\I s\hiepsi{\mathrm{p}}}\Psi(k, \lambda).
\end{split}
\end{equation*}
Using twice the Cauchy-Schwarz inequality we get
\begin{equation*}\begin{split}
& \norm{\cdots}_{\hilbert{H}}^{2} = \sum_{\sigma_{1}, \ldots, \sigma_{N}, \lambda_{1}=1}^{2}\int_{\field{R}{3N}}dx\int_{\field{R}{3}}dk_{1}\abs{\cdots}^{2}\leq\\
&\leq\abs{t}\sum_{\sigma_{1}, \ldots, \sigma_{N}, \lambda_{1}=1}^{2}\int dx\int dk_{1}\abs{f(k_{1}, \lambda_{1})}^{2}\sum_{\lambda=1}^{2}\int dk \frac{\abs{f(k, \lambda)}^{2}}{1 + (\abs{k_{1}} - \abs{k})^{2}\epsi^{-2}}\cdot\\
&\cdot \sum_{\lambda=1}^{2}\int dk\int_{0}^{t}ds\, \bigg\lvert\E^{\I s\hiepsi{\mathrm{p}}}\E^{\I x_{j}\cdot(k-k_{1})}\E^{-\I s\hiepsi{\mathrm{p}}}\Psi(k, \lambda)\bigg\rvert^{2}=\\
&=\abs{t}^{2}\norm{\Psi}_{\hilbert{H}}^{2}\sum_{\lambda_{1}, \lambda=1}^{2}\int dk_{1}dk\, \frac{\abs{f(k, \lambda)f(k_{1}, \lambda_{1})}^{2}}{1 + (\abs{k_{1}} - \abs{k})^{2}\epsi^{-2}}\leq\\ &\leq C\epsi^{4}\abs{t}^{2}\norm{\Psi}_{\hilbert{H}}^{2}\int_{0}^{\Lambda/\epsi}dk_{1}\int_{0}^{\Lambda/\epsi}dk\, \frac{k_{1}k}{1 + (k_{1} - k)^{2}} = \Or(\epsi\abs{t}^{2}\norm{\Psi}^{2}_{\hilbert{H}}).
\end{split}
\end{equation*}

We proceed now to examine the last three terms in \eqref{htwotildeod} to show that they can be neglected. 

The second and the third term are a sum of terms of the form
\begin{equation*}
\Phi\big(g(\lambda, k)\E^{-\I k\cdot x_{j}}\big)\hat{T},
\end{equation*}
where the function $g\simeq\abs{k}^{\alpha}$, $\abs{k}\to 0^{+}$, with $\alpha = -1/2$ or $+1/2$, and $\hat{T}$ is a Pauli matrix, or the product of two momentum operators. 

For a term of this form we get then, putting again $\Psi:= \qm\tilde{\chi}(\hdres)\Psi_{0}$, 
\begin{equation*}\begin{split}
& \I\epsi\int_{0}^{t}ds\ \E^{\I(s-t)\hiepsi{0}/\epsi}\Phi\big(g(\lambda, k)\E^{-\I k\cdot x_{j}}\big)\hat{T}\E^{-\I s\hiepsi{0}/\epsi}\qm\tilde{\chi}(\hdres)\Psi_{0}=\\
&= \I\frac{\epsi}{\sqrt{2}}\int_{0}^{t}ds\ \E^{\I(s-t)\hiepsi{0}/\epsi}[a(g\E^{-\I k\cdot x_{j}}) + a(g\E^{-\I k\cdot x_{j}})^{*}]\hat{T}\E^{-\I s\hiepsi{0}/\epsi}\Psi\quad .
                 \end{split}
\end{equation*}
Expression of this type have already been estimated in (\cite{TeTe}, theorem $4$). For convenience of the reader, we give the proof for the annihilation part, referring to \cite{TeTe} for the creation part, which is entirely analogous.  

The annihilation part gives
\begin{equation}\label{annihilationpart}\begin{split}
& \I\frac{\epsi}{\sqrt{2}}\E^{-\I t\hiepsi{0}/\epsi}\sqrt{M}\int_{0}^{t}ds\, \E^{\I s\hiepsi{\mathrm{p}}/\epsi} \sum_{\lambda=1}^{2}\int_{\field{R}{3}}dk\,g(k, \lambda)^{*}\E^{\I k\cdot x_{j}}\hat{T}\E^{-\I s\abs{k}/\epsi}\E^{-\I s\hiepsi{\mathrm{p}}/\epsi}\cdot\\
&\cdot\Psi(x, \sigma; k, \lambda, k_{1}, \lambda_{1}, \ldots, k_{M}, \lambda_{M}) =\\
&= \I\frac{\epsi}{\sqrt{2}}\E^{-\I t\hiepsi{0}/\epsi}\sqrt{M}\sum_{\lambda=1}^{2}\int_{\field{R}{3}}dk\,\frac{g(k, \lambda)^{*}}{1 - \I\abs{k}\epsi^{-1}}(1 - \I\abs{k}\epsi^{-1})\cdot\\
&\cdot\int_{0}^{t}ds\, \E^{-\I s\abs{k}/\epsi}\E^{\I s\hiepsi{\mathrm{p}}/\epsi}\E^{\I k\cdot x_{j}}\hat{T}\E^{-\I s\hiepsi{\mathrm{p}}/\epsi}\Psi(x, \sigma; k, \lambda, \ldots) \quad .
\end{split}
\end{equation}
Integrating by parts we get
\begin{equation*}\begin{split}
& - \I\abs{k}\epsi^{-1}\int_{0}^{t}ds\, \E^{-\I s\abs{k}/\epsi}\E^{\I s\hiepsi{\mathrm{p}}/\epsi}\E^{\I k\cdot x_{j}}\hat{T}\E^{-\I s\hiepsi{\mathrm{p}}/\epsi}\Psi(x, \sigma; k, \lambda, \ldots)=\\
& = \E^{-\I t\abs{k}/\epsi}\E^{\I t\hiepsi{\mathrm{p}}/\epsi}\E^{\I k\cdot x_{j}}\hat{T}\E^{-\I t\hiepsi{\mathrm{p}}/\epsi}\Psi - \E^{\I k\cdot x_{j}}\hat{T}\Psi +\\
&-\frac{\I}{\epsi}\int_{0}^{t}ds\, \E^{-\I s\abs{k}/\epsi}\E^{\I s\hiepsi{\mathrm{p}}/\epsi}[\hiepsi{\mathrm{p}}, \E^{\I k\cdot x_{j}}\hat{T}]\E^{-\I s\hiepsi{\mathrm{p}}/\epsi}\Psi,
\end{split}
\end{equation*}
where
\begin{equation*}\begin{split}
& [\hiepsi{\mathrm{p}}, \E^{\I k\cdot x}\hat{T}] = \frac{1}{2m_{j}}(2\epsi \E^{\I k\cdot x_{j}}k\cdot\pop_{j} + \epsi^{2}\abs{k}^{2}\E^{\I k\cdot x_{j}})\hat{T} + \E^{\I k\cdot x_{j}}[\vcoul, \hat{T}]\quad.
\end{split}
\end{equation*}
The commutator on the right-hand side is zero if $\hat{T}$ is a Pauli matrix, or it is a term of order $\epsi$ times $\pop$, if $\hat{T}$ is the product of two momentum operators. Therefore it has the same form as the first part of the right-hand side and can be treated in the same way.

We have now to put the result of the integration by parts back in equation \eqref{annihilationpart} and estimate what comes out. We show how to do this for the first term, all the other ones can be treated in the same way.

Ignoring the unitary on the left and the constants, we consider then
\begin{equation*}\begin{split}
&\epsi\sum_{\lambda=1}^{2}\int_{\field{R}{3}}dk\,\frac{g(k, \lambda)^{*}}{1 - \I\abs{k}\epsi^{-1}}\int_{0}^{t}ds\, \E^{-\I s\abs{k}/\epsi}\E^{\I s\hiepsi{\mathrm{p}}/\epsi}\E^{\I k\cdot x_{j}}\hat{T}\E^{-\I s\hiepsi{\mathrm{p}}/\epsi}\Psi(x, \sigma; k, \lambda, \ldots).
\end{split}
\end{equation*}
Using the Cauchy-Schwarz inequality we get
\begin{equation*}\begin{split}
& \abs{\cdots}^{2}\leq \epsi^{2}\sum_{\lambda=1}^{2}\int_{\field{R}{3}}dk\, \frac{\abs{g(k, \lambda)}^{2}}{1+\abs{k}^{2}\epsi^{-2}}\cdot\\
&\cdot\abs{t}\sum_{\lambda=1}^{2}\int_{\field{R}{3}}dk\int_{0}^{t}ds\, \abs{\E^{\I s\hiepsi{\mathrm{p}}/\epsi}\E^{\I k\cdot x_{j}}\hat{T}\E^{-\I s\hiepsi{\mathrm{p}}/\epsi}\Psi}^{2},
\end{split}
\end{equation*}
so
\begin{equation*}\begin{split}
& \norm{\cdots}_{\hilbert{H}}^{2}\leq \epsi^{2}\sum_{\lambda=1}^{2}\int_{\field{R}{3}}dk\, \frac{\abs{g(k, \lambda)}^{2}}{1+\abs{k}^{2}\epsi^{-2}}\abs{t}\int_{0}^{t}ds\,\norm{\hat{T}\E^{-\I s\hiepsi{\mathrm{p}}/\epsi}\Psi}^{2}_{\hilbert{H}}\quad .
\end{split}
\end{equation*}
The left integral gives
\begin{equation*}
\int_{\field{R}{3}}dk\, \frac{\abs{g(k, \lambda)}^{2}}{1 + \abs{k}^{2}\epsi^{-2}}\leq C\int_{0}^{\Lambda}d\abs{k}\frac{\abs{k}^{2(1-\alpha)}}{1 + \abs{k}^{2}\epsi^{-2}} = \Or\big(\epsi^{2}\log(\Lambda\epsi^{-1})\big),
\end{equation*}
when $\alpha = 1/2$, or $-1/2$.

The same analysis can be carried out for the remaining term
\begin{equation*}
\sum_{j=1}^{N}\frac{e_{j}^{2}}{2m_{j}}\big[a(v_{x_{j}})^{2} + a(v_{x_{j}})^{*\,2}] \quad.
\end{equation*}
The proof is identical to the one given in (\cite{TeTe}, theorem $4$) for a similar term appearing in the case of Nelson model, and depends as above on the fact that $v_{x_{j}}(\lambda, k)\simeq\abs{k}^{-1/2}$, $\abs{k}\to 0^{+}$.
\end{proof}

\begin{corollary}\label{radiatedpiece} At the leading order, the radiated piece (i.\ e.\ the piece of the wave function which makes a transition between the almost invariant subspaces) for a system starting in the Fock vacuum, $\Psi_{0}(x)=\psi(x)\Omega_{\mathrm{F}}$, $\psi(x)\in\hilbertp$, is given by
\begin{equation}\label{equationtt}
-\E^{-\I t \hiepsi{0}}\frac{\I\epsi}{\sqrt{2}} \frac{\hat{\varphi}_{\sigma(\epsi)}(k)}{\abs{k}^{3/2}}e_{\lambda}(k)\cdot\int_{0}^{t}ds\, \E^{\I(s-t)\abs{k}/\epsi}\opw\bigg(\sum_{j=1}^{N}\frac{e_{j}}{m_{j}}\ddot{x}_{j}^{cl}(s; x, p)\bigg)\psi(x)\, ,
\end{equation}
where $x_{j}^{cl}$ is the solution to the classical equations of motion
\begin{equation}\begin{split}
& m_{j}\ddot{x}_{j}^{cl}(s; x, p) = -\nabla_{x_{j}}\vcoul(x^{cl}(s; x, p)),\\
& x_{j}^{cl}(0; x, p)= x_{j},\qquad \dot{x}_{j}^{cl}(0; x, p)= p_{j}m_{j}^{-1},\quad j=1, \ldots, N\, .
\end{split}
\end{equation}

This coincides with the leading order of the radiated piece corresponding to the original Hamiltonian $\hepsi$, for a system starting in the dressed vacuum, $\uepsi^{*}\omegaf$.
\end{corollary}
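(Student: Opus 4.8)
The plan is to carry the whole computation into the dressed representation, where Theorem~\ref{firstordertimeevolution} has essentially already isolated the transition term, and then to evaluate the surviving off--diagonal integral by a semiclassical (Egorov) argument. First I would replace $\hepsi$ by $\hepsisigma$ and $\chi(\hepsi)$ by $\chi(\hepsisigma)$; by Proposition~\ref{infraredcutoff} and Lemma~\ref{lemmachiinfra} this costs only $\Or(\abs{t}\sigma^{1/2})+\Or(\epsi\sigma^{1/2})$, which under \eqref{conditionssigma} lies below the asserted remainder $\Or(\epsi^{2}\log(\sigma^{-1})(\abs{t}+\abs{t}^{2}))$. Using $\E^{-\I t\hepsisigma/\epsi}=\uepsisigma^{*}\E^{-\I t\hdres/\epsi}\uepsisigma$, $P_{0}^{\epsi}=\uepsisigma^{*}(\id_{\mathrm p}\otimes Q_{0})\uepsisigma$ and $\chi(\hepsisigma)=\uepsisigma^{*}\chi(\hdres)\uepsisigma$, conjugating by $\uepsisigma$ turns the radiated piece into $\uepsisigma^{*}(\id-Q_{0})\E^{-\I t\hdres/\epsi}Q_{0}\chi(\hdres)\uepsisigma\Psi$. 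Since $\uepsisigma=\id+\Or(\epsi\sqrt{\log\sigma^{-1}})$ in $\mathcal{L}(\hilbert{H})$ and the whole object is $\Or(\epsi\sqrt{\log\sigma^{-1}})$, I may drop the outer $\uepsisigma^{*}$ and, for a Fock--vacuum datum, replace $\uepsisigma\Psi$ by $\Psi_{0}=\psi\otimes\omegaf$ up to $\Or(\epsi^{2}\log\sigma^{-1})$; running the same identity in reverse shows that the $\hepsi$--radiated piece for the dressed vacuum $\uepsi^{*}\omegaf$ equals $\uepsisigma^{*}$ applied to this Fock--vacuum expression, hence coincides with it to leading order, which is the final assertion of the corollary.

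Next I would apply Theorem~\ref{firstordertimeevolution} with $M=0$ and $\tilde\chi=\chi$. Because $[\hdiag,Q_{0}]=0$, the projector $\id-Q_{0}$ annihilates the diagonal propagator $\E^{-\I t\hdiag/\epsi}Q_{0}\chi(\hdres)$, while $\delta_{M0}$ removes the $\Or(\epsi^{3/2}\abs{t})$ error, leaving
\begin{equation*}
\Psi_{\mathrm{rad}}(t)=-\I\epsi\int_{0}^{t}\!ds\,(\id-Q_{0})\,\E^{\I(s-t)\hiepsi{0}/\epsi}\,h_{2,\mathrm{OD}}\,\E^{-\I s\hiepsi{0}/\epsi}\,Q_{0}\chi(\hdres)\Psi_{0}+\Or\big(\epsi^{2}\log(\sigma^{-1})(\abs{t}+\abs{t}^{2})\big).
\end{equation*}
By Lemma~\ref{approxchi} I replace $\chi(\hdres)$ by $\chi(\hiepsi{0})$; on the vacuum $\hiepsi{0}=\hp+\Hf$ acts through $\hp$ alone, so $\E^{-\I s\hiepsi{0}/\epsi}Q_{0}\chi(\hiepsi{0})\Psi_{0}=(\E^{-\I s\hp/\epsi}\chi(\hp)\psi)\otimes\omegaf$. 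Applying $h_{2,\mathrm{OD}}=\sum_{j}\tfrac{e_{j}}{m_{j}}\phijsigma\cdot\nabla_{x_{j}}\vcoul$ keeps only the creation part of $\phijsigma=\Phi(\I v_{x_{j},\sigma}/\abs{k})$, producing a one--photon state whose amplitude is $\tfrac{\I}{\sqrt2}\sum_{j}\tfrac{e_{j}}{m_{j}}\tfrac{\hat\varphi_{\sigma}(k)}{\abs{k}^{3/2}}\,\big(e_{\lambda}(k)\cdot\nabla_{x_{j}}\vcoul(x)\big)\,\E^{-\I k\cdot x_{j}}\,(\E^{-\I s\hp/\epsi}\chi(\hp)\psi)(x)$; on the one--photon sector $\E^{\I(s-t)\hiepsi{0}/\epsi}$ acts as the scalar phase $\E^{\I(s-t)\abs{k}/\epsi}$ times $\E^{\I(s-t)\hp/\epsi}$ on the particle factor. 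Writing $\E^{\I(s-t)\hp/\epsi}=\E^{-\I t\hp/\epsi}\E^{\I s\hp/\epsi}$ pulls the free particle propagator out of the $s$--integral and leaves the Heisenberg conjugation $\E^{\I s\hp/\epsi}\big(\E^{-\I k\cdot x_{j}}\nabla_{x_{j}}\vcoul(x)\big)\E^{-\I s\hp/\epsi}$ inside it.

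Since $\hp=\opw(h_{\mathrm p})$ with $h_{\mathrm p}(x,p)=\sum_{j}\tfrac{p_{j}^{2}}{2m_{j}}+\vcoul(x)$, Egorov's theorem replaces this conjugation by the Weyl quantization of the symbol transported along the Hamiltonian flow of $h_{\mathrm p}$, with an error $\Or(\epsi)$ in $\mathcal{L}(\hilbertp)$ that is uniform in $k$ for $\abs{k}\le\Lambda$ (the $x$--derivatives of $\E^{-\I k\cdot x_{j}}$ cost only powers of $\Lambda$, so the Egorov constant is $\sigma$--independent). Using the classical equation $m_{j}\ddot x_{j}^{cl}(s;x,p)=-\nabla_{x_{j}}\vcoul(x^{cl}(s;x,p))$ and gathering the constants and the one--photon profile $\hat\varphi_{\sigma}(k)\abs{k}^{-3/2}e_{\lambda}(k)$ then reproduces \eqref{equationtt}. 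The $\Or(\epsi)$ Egorov error, together with the prefactor $\epsi$, a factor $\abs{t}$ from the $ds$--integral and the $k$--integral against $\hat\varphi_{\sigma}(k)\abs{k}^{-3/2}$ (which in $\hilbert{H}$--norm costs $\sqrt{\log\sigma^{-1}}$), stays of size $\Or(\epsi^{2}\abs{t}\sqrt{\log\sigma^{-1}})$, within the stated remainder; the extra commutator terms generated by $\E^{\I s\hp/\epsi}$ are controlled exactly by the integration--by--parts estimates already carried out in the proof of Theorem~\ref{firstordertimeevolution}.

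The crux is this last error bookkeeping. Because $\hat\varphi_{\sigma}(k)\abs{k}^{-3/2}\notin L^{2}(\field{R}{3})$, the radiated piece is itself only $\Or(\epsi\sqrt{\log\sigma^{-1}})$, so every approximation above --- the Egorov step, $\chi(\hdres)\!\to\!\chi(\hiepsi{0})$, $\uepsisigma\!\to\!\id$, and the stationary--phase--type $s$-- and $k$--integrations --- must be shown to degrade the estimate only by the logarithmic factor in $\Or(\epsi^{2}\log(\sigma^{-1})(\abs{t}+\abs{t}^{2}))$ and not by a power of $\epsi^{-1}$; this is exactly why $\sigma=0$ is not allowed and why the error stays logarithmically divergent. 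A minor additional point is to justify interchanging the creation--operator computation and the Egorov approximation with the $k$--integral defining the emitted photon's wave function, which is legitimate because all bounds are uniform on the shell $\sigma<\abs{k}<\Lambda$.
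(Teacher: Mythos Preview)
Your overall architecture --- pass to $\hepsisigma$, conjugate by $\uepsisigma$, invoke Theorem~\ref{firstordertimeevolution} with $M=0$, and reduce the second assertion about $\hepsi$ and the dressed vacuum to the first --- matches the paper and is fine. The genuine gap is in the Egorov step.

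You apply Egorov to the full symbol $\E^{-\I k\cdot x_{j}}\nabla_{x_{j}}\vcoul(x)$. Transporting this along the classical flow gives $\E^{-\I k\cdot x_{j}^{cl}(s)}\,\nabla_{x_{j}}\vcoul(x^{cl}(s))=-m_{j}\,\E^{-\I k\cdot x_{j}^{cl}(s)}\,\ddot{x}_{j}^{cl}(s)$, so the leading Egorov term is $\opw\big(\E^{-\I k\cdot x_{j}^{cl}(s)}\ddot{x}_{j}^{cl}(s)\big)$, \emph{not} $\opw(\ddot{x}_{j}^{cl}(s))$. Your sentence ``using the classical equation\ldots then reproduces \eqref{equationtt}'' silently drops this phase, and nothing in the integration-by-parts estimates from Theorem~\ref{firstordertimeevolution} addresses it. Since $x_{j}^{cl}(s;x,p)$ is unbounded in $(x,p)$, the phase cannot be removed by a naive $\E^{-\I k\cdot x_{j}^{cl}}\approx 1$ argument.

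The paper handles this differently: it splits $\E^{-\I k\cdot x_{j}}=(\E^{-\I k\cdot x_{j}}-1)+1$ \emph{before} invoking Egorov. For the ``$1$'' part one applies Egorov to the $k$-independent symbol $\nabla_{x_{j}}\vcoul$, which yields exactly \eqref{equationtt}. For the ``$\E^{-\I k\cdot x_{j}}-1$'' part one uses $\lvert\E^{\I\epsi k\cdot x_{j}}-1\rvert\le\epsi\abs{k}\,\abs{x_{j}}$ after the rescaling $k\mapsto\epsi k$; the gained factor of $\abs{k}$ turns the infrared-divergent $\abs{k}^{-3}$ into $\abs{k}^{-1}$, the $k$-integral becomes $\log(\Lambda/\sigma)$, and together with the explicit $\epsi^{2}$ from the rescaling this piece is $\Or(\epsi^{2}\abs{t}\sqrt{\log\sigma^{-1}})$. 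The crucial point is that the unbounded factor $\abs{x_{j}}$ is paired with $\nabla_{x_{j}}\vcoul$, and $x_{j}\nabla_{x_{j}}\vcoul$ is bounded for the smeared Coulomb potential, so the estimate closes without any localization hypothesis on $\psi$. This decomposition is the missing idea in your argument.
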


\begin{proof}
Applying equation \eqref{firstordertime} for the case $M=0$ we get at the leading order
\begin{eqnarray*}
\lefteqn{Q_{0}^{\perp}\E^{-\I t\hdres/\epsi}\psi(x)\omegaf=-\I\epsi\int_{0}^{t}ds\ \E^{\I(s-t)\hiepsi{0}/\epsi}h_{2, \mathrm{OD}}\E^{-\I s\hiepsi{0}/\epsi}\psi(x)\Omega_{\mathrm{F}}=}\\
&=&-\frac{\epsi}{\sqrt{2}}\sum_{j=1}^{N}\frac{e_{j}}{m_{j}}\int_{0}^{t}ds\, \E^{\I(s-t)\hp/\epsi}\E^{\I(s-t)\abs{k}/\epsi}\nabla_{x_{j}}\vcoul\cdot \frac{\I v_{x_{j}, \sigma}(k, \lambda)}{\abs{k}}\E^{-\I s\hp/\epsi}\psi(x)\\
&=&-\frac{\I\epsi}{\sqrt{2}} \frac{\hat{\varphi}_{\sigma(\epsi)}(k)}{\abs{k}^{3/2}}e_{\lambda}(k)\cdot\sum_{j=1}^{N}\frac{e_{j}}{m_{j}}\int_{0}^{t}ds\, \E^{\I(s-t)\abs{k}/\epsi}\E^{\I (s-t)\hp/\epsi}(\E^{-\I k\cdot x_{j}}-1)\\
&&\qquad\qquad\qquad\qquad\qquad\qquad\qquad\quad\cdot\nabla_{x_{j}}\vcoul\E^{-\I s\hp/\epsi}\psi(x)\\
&&-\frac{\I\epsi}{\sqrt{2}} \frac{\hat{\varphi}_{\sigma(\epsi)}(k)}{\abs{k}^{3/2}}e_{\lambda}(k)\cdot\sum_{j=1}^{N}\frac{e_{j}}{m_{j}}\int_{0}^{t}ds\, \E^{\I(s-t)\abs{k}/\epsi}\E^{\I (s-t)\hp/\epsi}\nabla_{x_{j}}\vcoul\E^{-\I s\hp/\epsi}\\
&&\qquad\qquad\qquad\qquad\qquad\qquad\qquad\quad\cdot\psi(x)\\
&=&-\frac{\I\epsi}{\sqrt{2}} \frac{\hat{\varphi}_{\sigma(\epsi)}(k)}{\abs{k}^{3/2}}e_{\lambda}(k)\cdot\sum_{j=1}^{N}\frac{e_{j}}{m_{j}}\int_{0}^{t}ds\, \E^{\I(s-t)\abs{k}/\epsi}\E^{\I (s-t)\hp/\epsi}(\E^{-\I k\cdot x_{j}}-1)\\
&&\qquad\qquad\qquad\qquad\qquad\qquad\qquad\quad\cdot\nabla_{x_{j}}\vcoul\E^{-\I s\hp/\epsi}\psi(x)\\
&&-\,\E^{-\I t \hiepsi{0}}\frac{\I\epsi}{\sqrt{2}} \frac{\hat{\varphi}_{\sigma(\epsi)}(k)}{\abs{k}^{3/2}}e_{\lambda}(k)\cdot\int_{0}^{t}ds\, \E^{\I(s-t)\abs{k}/\epsi}\opw\bigg(\sum_{j=1}^{N}\frac{e_{j}}{m_{j}}\ddot{x}_{j}^{cl}(s; x, p)\bigg)\psi(x)\\
&&+\,\Or(\epsi^{2}\abs{t})_{\mathcal{L}(\hilbert{H})}\norm{\psi}_{L^{2}(\field{R}{3n})}\,,
\end{eqnarray*}
where we have used Egorov's theorem to approximate $\E^{\I s\hp/\epsi}\nabla_{x_{j}}\vcoul\E^{-\I s\hp/\epsi}$ (see, e. g., \cite{Ro}).

To end the proof of the first statement we have to show that the norm of the first term is small. For the $j$th term in the sum we get
\begin{eqnarray*}
\lefteqn{ \norm{\cdots}^{2}_{\hilbert{H}}\leq \epsi^{2}\frac{e_{j}^{2}}{m_{j}^{2}}\sum_{spin}\int dx\, \int dk\, \frac{\abs{\hat{\varphi}_{\sigma(\epsi)}(k)}^{2}}{\abs{k}^{3}}\bigg\lvert\int_{0}^{t}ds\, \E^{\I s\abs{k}/\epsi}\E^{\I s\hp/\epsi}(\E^{\I k\cdot x_{j}}-1)\cdot}\\
&&\cdot\nabla_{x_{j}}\vcoul(x)\E^{-\I s\hp/\epsi}\psi(x)\bigg\rvert^{2}\\
&=&\epsi^{2}\frac{e_{j}^{2}}{m_{j}^{2}}\sum_{spin}\int dx\, \int dk\, \frac{\abs{\hat{\varphi}_{\sigma(\epsi)}(\epsi k)}^{2}}{\abs{k}^{3}}\bigg\lvert\int_{0}^{t}ds\, \E^{\I s\abs{k}}\E^{\I s\hp/\epsi}(\E^{\I \epsi k\cdot x_{j}}-1)\cdot\\
&&\cdot\nabla_{x_{j}}\vcoul(x)\E^{-\I s\hp/\epsi}\psi(x)\bigg\rvert^{2}\\
&\leq& \epsi^{2}\abs{t}\frac{e_{j}^{2}}{m_{j}^{2}} \int dk\, \frac{\abs{\hat{\varphi}_{\sigma(\epsi)}(\epsi k)}^{2}}{\abs{k}^{3}}\int_{0}^{t}ds\, \bigg\lVert(\E^{\I \epsi k\cdot x_{j}}-1)\nabla_{x_{j}}\vcoul(x)\E^{-\I s\hp/\epsi}\psi(x)\bigg\rVert^{2}_{\hilbertp}\\
&\leq& C\epsi^{4}\abs{t} \int_{\sigma(\epsi)/\epsi}^{\Lambda/\epsi} d\abs{k}\, \frac{1}{\abs{k}}\int_{0}^{t}ds\, \bigg\lVert x_{j}\nabla_{x_{j}}\vcoul(x)\E^{-\I s\hp/\epsi}\psi(x)\bigg\rVert^{2}_{\hilbertp}\\
&=&C\epsi^{4}\abs{t} \log\bigg(\frac{\Lambda}{\sigma(\epsi)}\bigg)\int_{0}^{t}ds\, \bigg\lVert x_{j}\nabla_{x_{j}}\vcoul(x)\E^{-\I s\hp/\epsi}\psi(x)\bigg\rVert^{2}_{\hilbertp}\, .
\end{eqnarray*}

For the second statement we have
\begin{equation*}\begin{split}
& (\id - P_{0}^{\epsi})\E^{-\frac{\I}{\epsi} t\hepsi}P_{0}^{\epsi}\chi(\hepsi)\Psi = (\id - P_{0}^{\epsi})(\E^{-\frac{\I}{\epsi} t\hepsi} - \E^{-\frac{\I}{\epsi} t\hepsisigma})P_{0}^{\epsi}\chi(\hepsi)\Psi +\\
&+ (\id - P_{0}^{\epsi})\E^{-\frac{\I}{\epsi} t\hepsi}P_{0}^{\epsi}[\chi(\hepsi)-\chi(\hepsisigma)]\Psi + (\id - P_{0}^{\epsi})\E^{-\frac{\I}{\epsi} t\hepsisigma}P_{0}^{\epsi}\chi(\hepsisigma)\Psi =\\
&= \Or\big(\sigma(\epsi)^{1/2}\abs{t}\norm{\Psi}_{\hilbert{H}}\big) + \hilbert{U}^{*}Q_{0}^{\perp}\E^{-\I t\hdres/\epsi}\psi(x)\omegaf\hilbert{U} =\\
&= Q_{0}^{\perp}\E^{-\I t\hdres/\epsi}\psi(x)\omegaf + \Or\big(\epsi^{2}\log\sigma(\epsi)^{-1}\norm{\Psi}_{\hilbert{H}}\big) + \Or\big(\sigma(\epsi)\abs{t}\norm{\Psi}_{\hilbert{H}}\big)\quad ,
\end{split}
\end{equation*}
with
\begin{equation*}
\psi(x) := <\omegaf, \chi(\hiepsi{0})\Psi>_{\fock}\, \in\hilbertp\quad .
\end{equation*}
\end{proof}

\begin{remark}\label{remarkradiatedpiece}
Denoting by 
\begin{equation*}
T_{t}(k) := \opw\bigg(\int_{0}^{t}ds\, \E^{\I(s-t)k}\sum_{j=1}^{N}\frac{e_{j}}{m_{j}}\ddot{x}_{j}^{cl}(s; x, p)\bigg),
\end{equation*}
the operator acting on $\hilbertp$ which appears in \eqref{equationtt}, the norm squared of the leading part of the radiated piece is 
\begin{equation*}\begin{split}
& \frac{\epsi^{2}}{2}\sum_{spin}\sum_{\lambda=1, 2}\int dx\int dk\, \frac{\abs{\hat{\varphi}_{\sigma(\epsi)}(k)}^{2}}{\abs{k}^{3}}\abs{e_{\lambda}(k)\cdot T_{t}(\abs{k}/\epsi)\psi(x)}^{2} =\\
&= \frac{4}{3}\pi\epsi^{2}\int d\abs{k}\, \frac{\abs{\hat{\varphi}_{\sigma(\epsi)}(\abs{k})}^{2}}{\abs{k}}\norm{T_{t}(\abs{k}/\epsi)\psi}_{\hilbertp\otimes\field{C}{3}}^{2} =\\
&= \frac{\epsi^{2}}{6\pi^{2}}\int_{\sigma(\epsi)\epsi^{-1}}^{\Lambda\epsi^{-1}}d\abs{k}\,\frac{1}{\abs{k}}\norm{T_{t}(\abs{k})\psi}_{\hilbertp\otimes\field{C}{3}}^{2} \geq \\ &\geq \frac{\epsi^{2}}{6\pi^{2}}\int_{\sigma(\epsi)\epsi^{-1}}^{a}d\abs{k}\,\frac{1}{\abs{k}}\norm{T_{t}(\abs{k})\psi}_{\hilbertp\otimes\field{C}{3}}^{2},
\end{split}
\end{equation*}
where $a>0$ can be chosen arbitrarily small. 

The symbol of $T_{t}(k)$ is an $\epsi$-independent function, which for $k=0$ is different from the null function
\begin{equation*}
\int_{0}^{t}ds\, \sum_{j=1}^{N}\frac{e_{j}}{m_{j}}\ddot{x}_{j}^{cl}(s; x, p) = \sum_{j=1}^{N}\frac{e_{j}}{m_{j}}[\dot{x}_{j}^{cl}(t; x, p) - p_{j}], 
\end{equation*}
so $T_{t}(0)$ is different from the zero operator. We expect therefore that for a generic state $\psi$
\begin{equation*}
\inf_{0<\abs{k}<a}\norm{T_{t}(\abs{k})\psi}_{\hilbertp\otimes\field{C}{3}} >0 \quad . 
\end{equation*}
In this case one gets as lower bound for the norm of the radiated piece
\begin{equation*}
\inf_{0<\abs{k}<a}\norm{T_{t}(\abs{k})\psi}_{\hilbertp\otimes\field{C}{3}}\frac{\epsi}{\sqrt{6}\pi}\bigg(\int_{\sigma(\epsi)\epsi^{-1}}^{a}d\abs{k}\,\frac{1}{\abs{k}}\bigg)^{1/2} = \Or\big(\epsi\sqrt{\log(\epsi\sigma(\epsi)^{-1}})\big),
\end{equation*}
which is almost of the same order as the upper bound.
\end{remark}

\begin{remark}\label{remradiatedpower}
The radiated energy, defined in equation \eqref{radiatedenergy}, can be written at the leading order as
\begin{equation*}
E_{\textrm{rad}}(t)\cong \langle Q_{0}^{\perp}\E^{-\I t\hdres/\epsi}\psi(x)\omegaf, \Hf Q_{0}^{\perp}\E^{-\I t\hdres/\epsi}\psi(x)\omegaf\rangle,
\end{equation*}
where $\psi$ is defined in \eqref{psi}.
Using the expression for the radiated piece we get  
\begin{equation*}\begin{split}
E_{\mathrm{rad}}(t) \cong &\frac{\epsi^{2}}{2}\sum_{spin}\sum_{\lambda = 1, 2}\int dx \int dk\, \frac{\abs{\hat{\varphi}_{\sigma(\epsi)}(k)}^{2}}{\abs{k}^{2}}\abs{e_{\lambda}(k)\cdot T_{t}(\abs{k}/\epsi)\psi(x)}^{2}=\\
&=\frac{4}{3}\pi\epsi^{2}\int d\abs{k} \abs{\hat{\varphi}_{\sigma(\epsi)}(\abs{k})}^{2}\norm{T_{t}(\abs{k}/\epsi\psi(x)}^{2}_{\hilbertp\otimes\field{C}{3}}=\\
&\cong \frac{4}{3}\pi\epsi^{2}\int d\abs{k} \abs{\hat{\varphi}(\abs{k})}^{2}\int_{0}^{t}ds\int_{0}^{t}ds'\, \E^{\frac{\I}{\epsi}(s-s')\abs{k}}\cdot\\
&\qquad\qquad\cdot\sum_{j, l=1}^{N}\frac{e_{j}e_{l}}{m_{j}m_{l}}\langle \psi(x), \opw\big(\ddot{x}_{j}^{cl}(s')\cdot\ddot{x}_{l}^{cl}(s)\big)\psi(x)\rangle_{\hilbertp}=\\
&= \frac{\epsi^{2}}{6\pi^{2}}\int_{0}^{t}ds\int_{0}^{t}ds'\, \frac{\epsi}{\I(s-s')}(\E^{\frac{\I}{\epsi}(s-s')\Lambda} - 1)\cdot\\
&\qquad\qquad\qquad\qquad\quad\cdot\langle\psi(x), \opw(\ddot{D}(s')\cdot\ddot{D}(s))\psi(x)\rangle_{\hilbertp},
\end{split}
\end{equation*}
where we have used the product formula for pseudodifferential operators (see, e. g., \cite{Ro}) and defined
\begin{equation*}
D(s; x, p) := \sum_{j=1}^{N}\frac{e_{j}}{m_{j}}x_{j}^{cl}(s; x, p)\quad .
\end{equation*}
The radiated power is then
\begin{equation*}\begin{split}
& P_{\textrm{rad}}(t) = \frac{d}{dt}E_{\textrm{rad}}(t) \cong \frac{\epsi^{3}}{3\pi^{2}}\int_{0}^{t}ds\, \frac{\sin[(t-s)\Lambda/\epsi]}{t-s}\langle \psi, \opw\big(\ddot{D}(s')\cdot\ddot{D}(s)\big)\psi\rangle_{\hilbertp}\\
&\overset{\epsi\to 0}{\cong} \frac{\epsi^{3}}{3\pi^{2}}\langle \psi, \opw\big(\abs{\ddot{D}(t)}^{2}\big)\psi\rangle_{\hilbertp} \quad .
\end{split}
\end{equation*} 
\end{remark}

\begin{corollary} Let 
\begin{equation*}
\omega(t):= \E^{-\I t\hdres/\epsi}\omega_{0}\E^{\I t\hdres/\epsi},
\end{equation*}
where $\omega_{0}\in\hilbert{I}_{1}(\qm\tilde{\chi}(\hdres)\hilbert{H})$, the Banach space of trace class operators on $\qm\tilde{\chi}(\hdres)\hilbert{H}$, and let $\omegap$ be the partial trace over the field states
\begin{equation*}
\omegap(t):= \tr{\fock}\,\omega(t),
\end{equation*}
then
\begin{eqnarray*} 
 \omegap(t) &=& \E^{-\I t\hdiagp/\epsi}\omegap(0)\E^{\I t\hdiagp/\epsi}+ \\&&+\, \Or(\epsi^{3/2}\abs{t})_{\hilbert{I}_{1}(\hilbertp)}(1 - \delta_{M0})  + \Or\big(\epsi^{2}\sqrt{\log(\sigma(\epsi)^{-1})}(\abs{t} + \abs{t}^{2})\big)_{\hilbert{I}_{1}(\hilbertp)}  ,\nonumber
\end{eqnarray*}
where
\begin{equation*}\begin{split}
\hdiagp &:= \sum_{j=1}^{N} \frac{1}{2m_{j}}\pop_{j}^{2} + \vcoul + \\
&-\epsi^{2}\sum_{l, j=1}^{N}\frac{e_{j}e_{l}}{m_{j}m_{l}}\int_{\field{R}{3}}dk\, \frac{\abs{\hat{\varphi}(k)}^{2}}{2\abs{k}^{2}}\E^{\I k\cdot x_{j}}\pop_{j}\cdot(\id - \kappa\otimes\kappa)\pop_{l}\E^{-\I k\cdot x_{l}}\quad .
\end{split}
\end{equation*}
and $\hilbert{I}_{1}(\hilbertp)$ denotes the space of trace class operators on $\hilbertp$. 
\end{corollary}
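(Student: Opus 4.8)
The plan is to transport the partial trace through the first-order description of the dressed evolution from Theorem~\ref{firstordertimeevolution} and then use the photon-number grading to discard the off-diagonal correction. Since $\omega_0$ is supported on $\qm\tilde\chi(\hdres)\hilbert{H}$, write $\omega_0=\qm\tilde\chi(\hdres)\,\rho\,\tilde\chi(\hdres)\qm$ with $\rho$ trace class on $\hilbert{H}$, $\norm{\rho}_{\hilbert{I}_1}=\Or(1)$ (invoking Lemma~\ref{approxchi} to exchange $\tilde\chi$ against an auxiliary cutoff where convenient), so that
\[
\omega(t)=\bigl(\E^{-\I t\hdres/\epsi}\qm\tilde\chi(\hdres)\bigr)\,\rho\,\bigl(\E^{-\I t\hdres/\epsi}\qm\tilde\chi(\hdres)\bigr)^{*}.
\]
Theorem~\ref{firstordertimeevolution} gives $\E^{-\I t\hdres/\epsi}\qm\tilde\chi(\hdres)=A(t)+B(t)+E(t)$ with $A(t):=\E^{-\I t\hdiag/\epsi}\qm\tilde\chi(\hdres)$, $B(t):=-\I\epsi\int_0^t\!ds\,\E^{\I(s-t)\hiepsi{0}/\epsi}h_{2, \mathrm{OD}}\E^{-\I s\hiepsi{0}/\epsi}\qm\tilde\chi(\hdres)$, and $\norm{E(t)}_{\mathcal{L}(\hilbert{H})}=\Or(\epsi^{3/2}\abs{t})(1-\delta_{M0})+\Or(\epsi^2\sqrt{\log(\sigma^{-1})}(\abs{t}+\abs{t}^2))$; since $h_{2, \mathrm{OD}}$ contains a single field operator whose coupling $\I v_{x_j,\sigma}/\abs{k}$ is square integrable (the ultraviolet cutoff makes $\phijsigma$ bounded on each photon sector), one also has $\norm{B(t)}_{\mathcal{L}(\hilbert{H})}=\Or(\sqrt{M+1}\,\epsi\abs{t}\sqrt{\log(\sigma^{-1})})$. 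Expanding $\omega(t)$ into the nine products built from $A(t),B(t),E(t)$, it remains to identify the leading one and to bound the others after applying $\tr{\fock}$.

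For the leading term, $A(t)\rho A(t)^{*}=\E^{-\I t\hdiag/\epsi}\,\omega_0\,\E^{\I t\hdiag/\epsi}$ (using $A(t)^{*}=\tilde\chi(\hdres)\qm\E^{\I t\hdiag/\epsi}$). Here $\hdiag=\hdiagp\otimes\id_{\fock}+\id_{\hilbertp}\otimes\Hf$ as a self-adjoint operator on $\hilbert{H}_0$ ($\hdiagp$ is self-adjoint on the relevant domain by the Kato argument of Lemma~\ref{lemtrunchdres}), so $\E^{-\I t\hdiag/\epsi}=\E^{-\I t\hdiagp/\epsi}\otimes\E^{-\I t\Hf/\epsi}$. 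Pulling $\E^{-\I t\hdiagp/\epsi}\otimes\id_{\fock}$ out of $\tr{\fock}$ and exploiting the invariance of the trace in the $\fock$-factor under the unitary $\E^{-\I t\Hf/\epsi}$ yields $\tr{\fock}\bigl(\E^{-\I t\hdiag/\epsi}\omega_0\E^{\I t\hdiag/\epsi}\bigr)=\E^{-\I t\hdiagp/\epsi}\,\tr{\fock}(\omega_0)\,\E^{\I t\hdiagp/\epsi}=\E^{-\I t\hdiagp/\epsi}\,\omegap(0)\,\E^{\I t\hdiagp/\epsi}$, which is the announced main term.

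The decisive point is that $\tr{\fock}\bigl(A(t)\rho B(t)^{*}+B(t)\rho A(t)^{*}\bigr)=0$. Since $[\hdiag,\qm]=0$ we have $\qm A(t)=A(t)$, i.e.\ $A(t)$ has range in $\qm\hilbert{H}$; since $h_{2, \mathrm{OD}}$ changes the photon number by exactly one and $\hiepsi{0}$ commutes with every $Q_{M'}$, we have $(Q_{M+1}+Q_{M-1})B(t)=B(t)$, hence $B(t)^{*}=B(t)^{*}(Q_{M+1}+Q_{M-1})$. Thus $A(t)\rho B(t)^{*}=(\id_{\hilbertp}\otimes\qm)\,T\,(\id_{\hilbertp}\otimes(Q_{M+1}+Q_{M-1}))$ for some trace class $T$, and by cyclicity of the partial trace in the $\fock$-factor its $\tr{\fock}$ equals $\tr{\fock}\bigl(T\,(\id_{\hilbertp}\otimes[(Q_{M+1}+Q_{M-1})\qm])\bigr)=0$; the conjugate term is the adjoint of this. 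The remaining six products are estimated crudely via $\norm{\tr{\fock}(\cdot)}_{\hilbert{I}_1(\hilbertp)}\le\norm{\cdot}_{\hilbert{I}_1(\hilbert{H})}$ and $\norm{XYZ}_{\hilbert{I}_1}\le\norm{X}_{\mathcal{L}}\norm{Y}_{\hilbert{I}_1}\norm{Z}_{\mathcal{L}}$: using $\norm{A(t)}_{\mathcal{L}}=\Or(1)$ and $\norm{\rho}_{\hilbert{I}_1}=\Or(1)$, each product containing an $E(t)$ is $\Or(\norm{E(t)}_{\mathcal{L}(\hilbert{H})})$, and the surviving $B(t)\rho B(t)^{*}$ is $\Or(\norm{B(t)}_{\mathcal{L}(\hilbert{H})}^2)=\Or((M+1)\epsi^2\log(\sigma^{-1})\abs{t}^2)$; all of these are absorbed in the error stated in the corollary. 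Summing the three groups gives the claim.

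The main obstacle is $B(t)$: in operator norm it is only $\Or(\epsi)$ — not $\Or(\epsi^2)$ — so it cannot be thrown into the error, and the reason it nevertheless leaves no trace in the reduced dynamics to leading order, namely that it is purely photon-number-changing and hence orthogonal to the $M$-photon sector that survives the partial trace, is the conceptual core of the argument. The routine but slightly delicate supporting points are the commutation of $\E^{-\I s\hiepsi{0}/\epsi}$ with the $Q_{M'}$ underlying the range statement for $B(t)$, and the bookkeeping around the (non-closed) subspace $\qm\tilde\chi(\hdres)\hilbert{H}$ together with the auxiliary cutoffs supplied by Lemma~\ref{approxchi}.
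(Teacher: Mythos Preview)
Your argument is correct and follows essentially the same route as the paper's proof: the three ingredients the paper lists---off-diagonality of the $\Or(\epsi)$ correction with respect to the $\qm$'s (your vanishing of the $A\rho B^{*}+B\rho A^{*}$ cross terms under $\tr{\fock}$), the tensor splitting $\hdiag=\hdiagp\otimes\id+\id\otimes\Hf$, and the trace-class inequality $\norm{AB}_{\hilbert{I}_{1}}\leq\norm{A}_{\hilbert{I}_{1}}\norm{B}_{\mathcal{L}}$---are exactly your three steps, just spelled out in more detail. The only point to tidy is the factorisation $\omega_{0}=\qm\tilde\chi(\hdres)\rho\,\tilde\chi(\hdres)\qm$: what you actually have and need is $\qm\omega_{0}\qm=\omega_{0}$ together with $\tilde{\tilde\chi}(\hdres)\omega_{0}\tilde{\tilde\chi}(\hdres)=\omega_{0}+\Or(\epsi^{2}\sqrt{\log(\sigma^{-1})})$ via Lemma~\ref{approxchi}, which is enough to feed $\omega_{0}$ directly into Theorem~\ref{firstordertimeevolution} without introducing an auxiliary $\rho$.
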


\begin{proof}
The proof follows from the following three facts:
\begin{enumerate}
\item the term of order $\epsi$ in equation \eqref{firstordertime} is off-diagonal with respect to the $\qm$;

\item the diagonal Hamiltonian $\hdiag$, defined in \eqref{hdiag}, is equal to $\hdiagp\otimes\mathbf{1}+\mathbf{1}\otimes\Hf$, so we have that
\begin{equation*}
\mathrm{tr}_{\fock}\big(\E^{-\I t\hdiag/\epsi}\omega_{0}\E^{\I t\hdiag/\epsi}\big)=\E^{-\I t\hdiagp/\epsi}\mathrm{tr}_{\fock}(\omega_{0})\E^{\I t\hdiagp/\epsi}\, ;
\end{equation*}

\item the following well known inequality,which holds for any Hilbert space $\hilbert{H}$:
\begin{equation*}
\norm{AB}_{\hilbert{I}_{1}(\hilbert{H})}\leq \norm{A}_{\hilbert{I}_{1}(\hilbert{H})}\cdot\norm{B}_{\mathcal{L}(\hilbert{H})}\,.
\end{equation*}

\end{enumerate}
\end{proof}

\begin{theorem}\label{densitymatrix}
Let $S$ be an observable for the particles, $S\in\mathcal{L}(\hilbertp)$, and $\omega\in \hilbert{I}_{1}(P_{M}^{\epsi}\chi(\hepsi)\hilbert{H})$ a density matrix for a mixed dressed state with $M$ free photons whose time evolution is defined by
\begin{equation}
\omega(t):= e^{-\I t\hepsi/\epsi}\omega e^{\I t\hepsi/\epsi},
\end{equation}
then
\begin{equation}\begin{split}\label{trace}  \tr{\hilbert{H}}\big((S\otimes\id_{\fock})\omega(t)\big)& = \tr{\hilbertp}\big(S e^{-\I t H^{(2)}_{\mathrm{D, p}}}\tr{\fock}(\omega) e^{\I t H^{(2)}_{\mathrm{D, p}}}\big) +\\ &+\Or(\epsi^{3/2}\abs{t})(1 - \delta_{M0}) + \Or\big(\epsi^{2}\sqrt{\log(\sigma(\epsi)^{-1})}(\abs{t} + \abs{t}^{2})\big) .
\end{split}
\end{equation}
\end{theorem}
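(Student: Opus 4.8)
The plan is to read this statement as the laboratory–frame counterpart of the corollary immediately preceding it: the dressed partial–trace estimate is already available, so the task reduces to transporting it back through the two changes of representation used to set up the dressed picture — passing from $\hepsi$ to the infrared–regularised $\hepsisigma$, and from $\hepsisigma$ to $\hdres$ via $\uepsisigma$ — and to handling the conjugation of the observable $S\otimes\id_{\fock}$. The only genuinely new work is the bookkeeping of the errors these reductions introduce.

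First I would use that $\omega$, being trace class on $\pepsim\chi(\hepsi)\hilbert{H}$ and since $\chi(\hepsi)\in\mathcal{L}(\hilbert{H},\hilbert{H}_{0})$, factors through $\hilbert{H}_{0}$; Proposition~\ref{infraredcutoff} and Lemma~\ref{lemmachiinfra} then let me replace $\E^{\pm\I t\hepsi/\epsi}$ by $\E^{\pm\I t\hepsisigma/\epsi}$ and $\chi(\hepsi)$ by $\chi(\hepsisigma)$ inside the trace, with an error $\Or(\abs{t}\sigma(\epsi)^{1/2}+\epsi\sigma(\epsi)^{1/2})$ in trace norm against $\omega$, which by \eqref{conditionssigma} is $o(\epsi^{2})$ and is absorbed. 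Using $\hepsisigma=\uepsisigma^{*}\hdres\uepsisigma$ and $\pepsim=\uepsisigma^{*}\qm\uepsisigma$ (where $\qm$ denotes $\id_{\mathrm{p}}\otimes Q_{M}$), I set $\hat\omega_{0}:=\uepsisigma\,\omega\,\uepsisigma^{*}\in\hilbert{I}_{1}(\qm\tilde\chi(\hdres)\hilbert{H})$, with $\tilde\chi\chi=\chi$, and $\hat\omega(t):=\E^{-\I t\hdres/\epsi}\hat\omega_{0}\E^{\I t\hdres/\epsi}$, so that by cyclicity and unitarity
\[
\tr{\hilbert{H}}\big((S\otimes\id_{\fock})\omega(t)\big)=\tr{\hilbert{H}}\big(\uepsisigma(S\otimes\id_{\fock})\uepsisigma^{*}\,\hat\omega(t)\big)+o(\epsi^{2}).
\]

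Next I would treat the conjugation $\uepsisigma(S\otimes\id_{\fock})\uepsisigma^{*}$. Writing $\uepsisigma=\id+\epsi T_{\sigma}+\Or(\epsi^{2})$ as in Theorem~\ref{unitaryintertwiner}, one gets $\uepsisigma(S\otimes\id_{\fock})\uepsisigma^{*}=S\otimes\id_{\fock}+\epsi[T_{\sigma},S\otimes\id_{\fock}]+(\text{h.o.t.})$, and the structural point is that $T_{\sigma}$ changes the photon number by $\pm1$ while $S\otimes\id_{\fock}$ preserves it, so $[T_{\sigma},S\otimes\id_{\fock}]$ and the higher corrections are \emph{off–diagonal} with respect to the grading by $\qm$. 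Hence, paired against $\hat\omega(t)$, they see only the off–diagonal part of $\hat\omega(t)$, which by Theorem~\ref{firstordertimeevolution} is the explicit $h_{2,\mathrm{OD}}$ integral of order $\epsi$ together with the $\Or(\epsi^{3/2}\abs{t})(1-\delta_{M0})$ and $\Or(\epsi^{2}\sqrt{\log(\sigma^{-1})}(\abs{t}+\abs{t}^{2}))$ remainders; so this whole cross–contribution is of the advertised order. For the surviving main term I use $\tr{\hilbert{H}}\big((S\otimes\id_{\fock})\hat\omega(t)\big)=\tr{\hilbertp}\big(S\,\tr{\fock}\hat\omega(t)\big)$ and invoke the preceding corollary to replace $\tr{\fock}\hat\omega(t)$ by $\E^{-\I t\hdiagp/\epsi}\tr{\fock}(\hat\omega_{0})\E^{\I t\hdiagp/\epsi}$ up to the same remainders; a last application of the off–diagonal argument to $\tr{\fock}(\hat\omega_{0})-\tr{\fock}(\omega)=\tr{\fock}\big((\uepsisigma-\id)\omega\uepsisigma^{*}+\omega(\uepsisigma^{*}-\id)\big)$ — which only involves photon–number–off–diagonal components of $\omega$, themselves $\Or(\epsi\,\cdots)$ since $\hat\omega_{0}$ is exactly $\qm$–block–diagonal — upgrades $\tr{\fock}(\hat\omega_{0})$ to $\tr{\fock}(\omega)$, and $\hdiagp=H^{(2)}_{\mathrm{D},\mathrm{p}}$ closes the estimate.

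The main obstacle is the quantitative control of these off–diagonal pairings. Although $[T_{\sigma},S\otimes\id_{\fock}]$ is bounded, its norm and the norm of $h_{2,\mathrm{OD}}$ on the relevant $\qm\tilde\chi$–subspaces carry a factor $\sqrt{\log(\sigma^{-1})}$, so a crude estimate of the cross–terms would only yield $\Or(\epsi^{2}\log(\sigma^{-1}))$. To reach the stated $\Or(\epsi^{2}\sqrt{\log(\sigma^{-1})}(\abs{t}+\abs{t}^{2}))$ one must exploit, exactly as in the proof of Theorem~\ref{firstordertimeevolution} for the terms making up $\tilde h_{2,\mathrm{OD}}$, the oscillatory structure of the $s$–integrals: integrating by parts in $s$ produces the regularising factor $(1+\abs{k}^{2}\epsi^{-2})^{-1}$, and combined with the fact that the density of photon states vanishes at $k=0$ uniformly in $\sigma$, this trades one power of $\log(\sigma^{-1})$ for the milder $\log(\Lambda\epsi^{-1})\le\log(\sigma^{-1})$ and a gain in $\epsi$. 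Making this mechanism interact cleanly with the commutator $[T_{\sigma},S\otimes\id_{\fock}]$ — which contributes the field coupling $v_{x_{j},\sigma}/\abs{k}$ inside those integrals — is the delicate step, and is what makes the reduction go through at the claimed precision.
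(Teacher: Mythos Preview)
Your overall architecture matches the paper's proof exactly: pass from $\hepsi$ to $\hepsisigma$ via Proposition~\ref{infraredcutoff} and Lemma~\ref{lemmachiinfra}, conjugate by $\uepsisigma$ to land in the dressed picture, expand $\uepsisigma(S\otimes\id_{\fock})\uepsisigma^{*}$, observe that the order-$\epsi$ corrections are off-diagonal in the $\qm$-grading and therefore drop out of the trace against the diagonal leading part, invoke the preceding corollary, and undo the dressing on the density matrix by the same off-diagonality. That is precisely the paper's argument.

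Where you diverge is in your final paragraph. The paper makes no attempt to improve the cross-term $(\text{OD observable})\times(\text{OD state})$ beyond the naive $\epsi^{2}\log(\sigma^{-1})$ bound; it simply records that the order-$\epsi$ pieces vanish against the diagonal partners and absorbs the $\epsi\times\epsi$ cross-contribution into the remainder. Indeed the version of the theorem quoted in the introduction carries the error $\Or(\epsi^{2}\log(\sigma^{-1})(\abs{t}+\abs{t}^{2}))$, not $\sqrt{\log}$; the $\sqrt{\log}$ in the statement of Theorem~\ref{densitymatrix} is an inconsistency in the paper, not a sharper estimate that the proof actually delivers. So you are manufacturing a difficulty that the author does not face.

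Moreover, the mechanism you propose to close that fictitious gap would not work. The integration-by-parts device in the proof of Theorem~\ref{firstordertimeevolution} gains powers of $\epsi$ precisely because the coupling functions of the discarded terms in $\tilde h_{2,\mathrm{OD}}$ behave like $\abs{k}^{\pm 1/2}$ near $k=0$, so that $\int d\abs{k}\,\abs{g}^{2}/(1+\abs{k}^{2}\epsi^{-2})$ is $\Or(\epsi^{2}\log(\Lambda\epsi^{-1}))$. The surviving $h_{2,\mathrm{OD}}$ has coupling $v_{x,\sigma}/\abs{k}\sim\abs{k}^{-3/2}$, the unique case where the density does \emph{not} vanish at $k=0$; for it the regularising factor yields $\int_{\sigma}^{\Lambda}d\abs{k}\,\abs{k}^{-1}(1+\abs{k}^{2}\epsi^{-2})^{-1}\sim\log(\epsi\sigma^{-1})$ and no gain in $\epsi$. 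So the refinement you sketch cannot trade one $\sqrt{\log(\sigma^{-1})}$ for anything better here. Drop that paragraph and your proof is the paper's.
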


\begin{proof}
First of all we observe that, using proposition \ref{infraredcutoff} and lemma \ref{lemmachiinfra}, we have
\begin{equation*}\begin{split}
& \tr{\hilbert{H}}\bigg(\big(S\otimes\id_{\fock}\big)\omega(t)\bigg) = \tr{\hilbert{H}}\bigg(\big(S\otimes\id_{\fock}\big)e^{-\I t\hepsisigma/\epsi}\omega_{\sigma(\epsi)}\cdot\\
&\cdot e^{\I t\hepsisigma/\epsi}\bigg) + \Or(\sigma(\epsi)^{1/2})\,,
\end{split}
\end{equation*}
where $\omega_{\sigma(\epsi)}\in \hilbert{I}_{1}(P_{M}^{\epsi}\chi(\hepsisigma)\hilbert{H})$.
By the definition of the dressed Hamiltonian and the cyclicity of the trace we have then at the leading order
\begin{eqnarray*} \lefteqn{\hspace{-5mm}
 \tr{\hilbert{H}}\bigg(\big(S\otimes\id_{\fock}\big)\omega(t)\bigg)\simeq}\\&\simeq& \tr{\hilbert{H}}\bigg(\hilbert{U}\big(S\otimes\id_{\fock}\big)\hilbert{U}^{*}e^{-\I t\hdres/\epsi}\hilbert{U}\omega_{\sigma(\epsi)}  \hilbert{U}^{*}e^{\I t\hdres/\epsi}\bigg) \, .
\end{eqnarray*}
The transformed observable, using the definition of $\hilbert{U}$ and lemma \ref{approxchi}, is given by
\begin{equation*}\begin{split} 
\hilbert{U}\big(S\otimes\id_{\fock}\big)\hilbert{U}^{*} = &S\otimes\id_{\fock} +\epsi\chi(\hiepsi{0})\uonelsigma\big(S\otimes\id_{\fock}\big) +\\
&+\epsi\big(\id - \chi(\hiepsi{0})\big)\uonelsigma\chi(\hiepsi{0})\big(S\otimes\id_{\fock}\big) + \\
&-\epsi\big(S\otimes\id_{\fock}\big)\chi(\hiepsi{0})\uonelsigma +\\& -\epsi\big(S\otimes\id_{\fock}\big)\big(\id - \chi(\hiepsi{0})\big)\uonelsigma\chi(\hiepsi{0}) +\\ &+\Or(\epsi^{2}\log(\sigma^{-1})) \, .
\end{split}
\end{equation*}

All the terms of order $\epsi$ in the previous expression are off-diagonal with respect to the $\qm$s, and the same holds for the term of order $\epsi$ in \eqref{firstordertime}. Therefore, they all vanish when we calculate the trace. Using point $2$ and $3$ of last corollary we get then \eqref{trace} with 
\begin{equation*}
\hilbert{U}\omega_{\sigma(\epsi)}\hilbert{U}^{*}
\end{equation*}
instead of $\omega$.

Using again lemma \ref{lemmachiinfra} and the fact that the terms of order $\epsi$ in the expansion of $\hilbert{U}$ are off-diagonal we can in the end replace $\hilbert{U}$ by the identity and $\omega_{\sigma(\epsi)}$ by $\omega$.
\end{proof}

\appendix
\section{The limit $c\to\infty$}\label{appendix}

In this appendix we sketch the proof of theorem \ref{ctoinfinity}. The reader can find additional discussions in \cite{Da1} and (\cite{Sp}, chapter $17$ and section $20.2$).

As remarked in the introduction, see equations \eqref{equationhlambda}-\eqref{hlambdafour}, the limit $c\to\infty$ has the form of a weak coupling limit, in which the weak interaction is observed over the long time scale $\tau = c^{2}t$. The corresponding physical interpretation is that the small system made up of the particles interacts with an environment (the quantized field) which is traced out to analyze the dynamics of the small system only.

The mathematical framework, as explained in \cite{Da1}, whose notation is employed in this appendix, considers the Banach space
\begin{equation*}
\B:= \hilbert{I}_{1}(\hilbert{H}),
\end{equation*}
the space of trace class operators on $\hilbert{H}$ with the corresponding trace norm. 

$\B$ contains the convex subset of positive operators of trace one, which are the mixed states of the composite system particles plus field (density matrices). 

The Hamiltonian $\hlambda$ defines on $\B$ what in semigroup theory is called an ``implemented semigroup''\footnote[1]{In this case an implemented group of isometries.} (\cite{Al} and references therein) via the usual formula for the Schr\"odinger evolution of the states
\begin{equation*}
V_{t}^{\lambda}(\varrho) := \E^{-\I t\hlambda}\varrho\E^{+\I t\hlambda} \quad.
\end{equation*}
Contrary to what in general can happen for an implemented semigroup, $V_{t}^{\lambda}$ is strongly continuous on $\B$ (\cite{Mo}, theorem $2$) and can therefore be written as 
\begin{equation*}
V_{t}^{\lambda} = \E^{-\I t\llambda},
\end{equation*} 
where $\llambda$ is the generator of $V_{t}^{\lambda}$, called the (total) \emph{Liouvillean}.

In the same way we define 
\begin{eqnarray*}
U_{t}(\varrho) &:=& \E^{-\I t\Hf}\varrho\E^{+\I t\Hf}, \qquad U_{t} = \E^{-\I t\lf},\\
T_{t}^{\lambda}(\varrho) &:=& \E^{-\I t(\Hf + \lambda^{2/3}\Hp)}\varrho\E^{+\I t(\Hf + \lambda^{2/3}\Hp)}, \qquad T_{t}^{\lambda} = \E^{-\I t(\lf + \lambda^{2/3}\lp)}.
\end{eqnarray*}
and the Liouvilleans $\lone$ and $\lfour$ associated to the interaction Hamiltonians $\hilambda{1}$ and $\hilambda{4/3}$.

If the Hamiltonian $H$ which implements the group is a bounded operator, the corresponding Liouvillean ${\sf L}$ is also bounded and given by
\begin{equation*}
{\sf L}(\varrho) = [H, \varrho], \qquad \forall\varrho\in\B\quad.
\end{equation*}
When $H$ is unbounded a little more care is needed, because ${\sf L}$ will be also unbounded and defined only on a suitable dense domain (see, e. g., \cite{PrTi}).

One is interested in studying the dynamics of the particles in the limit $\lambda\to 0$, given that at time $t=0$ the field is in the reference state $\omegar\in\hilbert{I}_{1}(\fock)$, which is invariant under the free dynamics $U_{t}$. Under these conditions a natural choice is $\omegar = Q_{0}$, the projector on the Fock vacuum. Contrary to what happens for the limit $\epsi\to 0$, it is not possible here to look at the case $M\neq 0$, because there does not exist any density matrix in $\qm\fock$ which commutes with $\Hf$. 

To implement these ideas mathematically one defines a projection on the particles states,
\begin{equation*}
\pzeroel(\varrho) := \chi_{(-\infty, \mathrm{E})}(\Hp^{1/2})\tr{\fock}(\qless{L}\varrho\qless{L})\chi_{(-\infty, \mathrm{E})}(\Hp^{1/2})\otimes \omegar,
\end{equation*}
where $\chi$ is the characteristic function of the interval indicated and $\mathrm{L}$ is a fixed integer greater than $2$.

\begin{lemma*}
$\pzeroel$ is a projection of norm $1$.
\end{lemma*}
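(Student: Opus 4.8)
The plan is to reduce the statement to a handful of elementary facts about projections, partial traces, and tensor products of trace-class operators. Write $\Pi := \chi_{(-\infty,\mathrm{E})}(\Hp^{1/2})$ for the particle spectral projection, so that
\begin{equation*}
\pzeroel(\varrho) = \Pi\,\tr{\fock}\big(\qless{L}\varrho\qless{L}\big)\,\Pi\otimes\omegar,
\end{equation*}
and recall that $\omegar = Q_{0}$ is the rank-one projector onto the Fock vacuum $\omegaf$. The facts I would invoke are: (i) $\Pi$ and $\qless{L}$ are orthogonal projections, hence $A\mapsto\Pi A\Pi$ and $\varrho\mapsto\qless{L}\varrho\qless{L}$ are trace-norm contractions (by submultiplicativity of the trace norm); (ii) the partial trace $\tr{\fock}$ is a trace-norm contraction from $\hilbert{I}_{1}(\hilbert{H})$ to $\hilbert{I}_{1}(\hilbertp)$ and is trace-preserving on positive operators; (iii) tensoring with the fixed trace-one operator $\omegar$ is a trace-norm isometry, $\norm{A\otimes\omegar}_{\hilbert{I}_{1}(\hilbert{H})} = \norm{A}_{\hilbert{I}_{1}(\hilbertp)}$. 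The one structural observation that makes the composition close is that the vacuum carries no photons, so $Q_{0}\leq\qless{L}$ and therefore $\qless{L}Q_{0} = Q_{0} = Q_{0}\qless{L}$.

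I would first establish idempotency. Put $A(\varrho) := \Pi\,\tr{\fock}(\qless{L}\varrho\qless{L})\,\Pi\in\hilbert{I}_{1}(\hilbertp)$, so that $\pzeroel(\varrho) = A(\varrho)\otimes\omegar$. Applying $\pzeroel$ once more, the number cutoff acts only on the Fock factor and, since $\qless{L}Q_{0} = Q_{0}$, it leaves $A(\varrho)\otimes\omegar$ unchanged; the partial trace over $\fock$ then returns $A(\varrho)$ (using $\tr{\fock}(\omegar) = 1$); and finally $\Pi A(\varrho)\Pi = A(\varrho)$ because $A(\varrho)$ already carries $\Pi$ on both sides and $\Pi^{2} = \Pi$. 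Hence $\pzeroel\big(\pzeroel(\varrho)\big) = A(\varrho)\otimes\omegar = \pzeroel(\varrho)$, i.\,e.\ $\pzeroel^{2} = \pzeroel$.

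For the norm, chaining the contractions (i)--(iii) gives directly $\norm{\pzeroel(\varrho)}_{\hilbert{I}_{1}(\hilbert{H})}\leq\norm{\varrho}_{\hilbert{I}_{1}(\hilbert{H})}$, which shows both $\pzeroel\in\mathcal{L}(\B)$ and $\norm{\pzeroel}_{\mathcal{L}(\B)}\leq 1$. For the reverse inequality, observe that $\pzeroel$ is a \emph{nonzero} idempotent: for the relevant choice of $\mathrm{E}$ the range of $\Pi$ is nontrivial, and $\pzeroel$ fixes $\rho_{\mathrm{p}}\otimes\omegar$ for every density matrix $\rho_{\mathrm{p}}$ supported in the range of $\Pi$. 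Since any idempotent $P$ in a Banach algebra satisfies $\norm{P} = \norm{P^{2}}\leq\norm{P}^{2}$, we get $\norm{\pzeroel}_{\mathcal{L}(\B)}\geq 1$, and therefore $\norm{\pzeroel}_{\mathcal{L}(\B)} = 1$.

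I do not expect a genuine obstacle here. The only points deserving a word of justification (with a citation to the standard references on trace ideals) are that the partial trace is a trace-norm contraction and that trace norms factorize on elementary tensors; everything else is bookkeeping. The crucial piece of that bookkeeping is the identity $\qless{L}Q_{0} = Q_{0}$: it is exactly what lets the outer number cutoff act trivially on the range of $\pzeroel$ and so forces $\pzeroel\circ\pzeroel$ to collapse back to $\pzeroel$, compatibly with the particle-side cutoff $\Pi$.
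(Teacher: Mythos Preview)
Your argument is correct. The paper states this lemma without proof, so there is nothing to compare against; your proposal supplies exactly the routine verification one would expect. The key structural point you isolate---that $\qless{L}Q_{0}=Q_{0}$ because the vacuum lies in the range of the photon-number cutoff---is precisely what makes the composition close up, and your chain of trace-norm contractions (compression by projections, partial trace, tensoring with a trace-one state) is the standard and correct way to get $\norm{\pzeroel}\leq 1$. The lower bound via $\norm{P}=\norm{P^{2}}\leq\norm{P}^{2}$ for a nonzero idempotent is also fine; your remark that $\pzeroel$ fixes $\rho_{\mathrm{p}}\otimes\omegar$ for $\rho_{\mathrm{p}}$ supported in $\mathrm{Ran}\,\Pi$ both witnesses nontriviality and would equally well give $\norm{\pzeroel}\geq 1$ directly.
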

In the following we denote for simplicity $\pzeroel$ simply by $P_{0}$. Moreover we put $P_{1}:= \id - P_{0}$ and, given an operator $A$, we denote by $A^{(ij)} := P_{i}AP_{j}$.

The dynamics of the particles are given then by
\begin{equation*}
W_{t}^{\lambda} :=  P_{0}V_{t}^{\lambda}P_{0} \quad .
\end{equation*}

We apply now a standard procedure to get an integral equation for $W_{t}^{\lambda}$, called \emph{generalized master equation} (see \cite{Da1} or, in a more general context, \cite{AlLe}, section $\mathrm{III}.1$).

We denote by $U_{t}^{\lambda}$ the diagonal evolution
\begin{equation*}
U_{t}^{\lambda} := \E^{-\I t(\lf + \lambda\lone^{(11)} + \lambda^{4/3}\lfour^{(11)})}.
\end{equation*}

Applying Duhamel Formula and noting that $\lone^{(00)} = \lfour^{(00)} = 0$, one gets 
\begin{equation*}
V_{t}^{\lambda} = U_{t}^{\lambda} + \lambda\int_{0}^{t}ds\, U_{t-s}^{\lambda}(\lone^{(01)} + \lone^{(10)})V_{s}^{\lambda} + \lambda^{4/3}\int_{0}^{t}ds\, U_{t-s}^{\lambda}(\lfour^{(01)} + \lfour^{(10)})V_{s}^{\lambda} \quad .
\end{equation*}

Then
\begin{equation*}
W_{t}^{\lambda} = P_{0}U_{t}^{\lambda} + \lambda\int_{0}^{t}ds\, U_{t-s}^{\lambda}\lone^{(01)}P_{1}V_{s}^{\lambda}P_{0} + \lambda^{4/3}\int_{0}^{t}ds\, U_{t-s}^{\lambda}\lfour^{(01)}P_{1}V_{s}^{\lambda}P_{0},
\end{equation*}
and
\begin{equation*}
P_{1}V_{t}^{\lambda}P_{0} = \lambda\int_{0}^{t}ds\, U_{t-s}^{\lambda}\lone^{(10)}P_{0}V_{s}^{\lambda}P_{0} + \lambda^{4/3}\int_{0}^{t}ds\, U_{t-s}^{\lambda}\lfour^{(10)}P_{0}V_{s}^{\lambda}P_{0}.
\end{equation*}

Putting $X_{t}^{\lambda} := P_{0}U_{t}^{\lambda}$, and introducing the new variables $\tau = \lambda^{2}t$ and $\sigma = \lambda^{2}u$ we get then
\begin{equation*}\begin{split}
W_{\lambda^{-2}\tau}^{\lambda} &= X_{\lambda^{-2}\tau}^{\lambda} + \int_{0}^{\tau}d\sigma X_{\lambda^{-2}(\tau - \sigma)}^{\lambda}K_{1, 1}(\lambda, \tau -\sigma)W_{\lambda^{-2}\sigma}^{\lambda}+\\
& + \lambda^{1/3}\int_{0}^{\tau}d\sigma X_{\lambda^{-2}(\tau - \sigma)}^{\lambda}[K_{1, 4/3}(\lambda, \tau -\sigma) + K_{4/3, 1}(\lambda, \tau -\sigma)]W_{\lambda^{-2}\sigma}^{\lambda}+\\
&+ \lambda^{2/3}\int_{0}^{\tau}d\sigma X_{\lambda^{-2}(\tau - \sigma)}^{\lambda}K_{4/3, 4/3}(\lambda, \tau -\sigma)W_{\lambda^{-2}\sigma}^{\lambda},
\end{split}
\end{equation*}
where 
\begin{equation*}
K_{i, j}(\lambda, \tau) := \int_{0}^{\lambda^{-2}\tau}dx\, X_{-x}^{\lambda}{\sf L}_{i}^{(01)}U_{x}^{\lambda}{\sf L}_{j}^{(10)}, \qquad i, j = 1, 4/3\quad .
\end{equation*}

One can show that, when $\lambda\to 0$, $K_{i, j}(\lambda, \tau)$ converges to
\begin{equation*}\label{kij}
K_{i, j} := \int_{0}^{\infty}dx\, {\sf L}_{i}^{(01)}U_{x}{\sf L}_{j}^{(10)},
\end{equation*}
so that
\begin{equation*}
W_{\lambda^{-2}\tau}^{\lambda} - \tilde{W}_{\lambda^{-2}\tau}^{\lambda} \to 0,
\end{equation*}
where 
\begin{equation*}
\tilde{W}_{t}^{\lambda} := \E^{-\I(\lambda^{2/3}\lp^{(00)} + \lambda^{2}K_{1, 1})t}\quad .
\end{equation*}

Spelling out the terms in $K_{1, 1}$ and formulating the result in $\hilbert{H}$ instead of expressing it in $\B$ one gets theorem \ref{ctoinfinity}.

\section*{Acknowledgments}
I am grateful to Stefan Teufel for numerous valuable discussions and remarks concerning this work and I thank the Deutsche Forschungsgemeinschaft (German Research Foundation) for financial support.

\end{document}